\newcommand{\beq}[1]{\begin{equation}\label{#1}}
\newcommand{\eeq}{\end{equation}}
\newcommand{\beqn}[1]{\begin{eqnarray}\label{#1}}
\newcommand{\eeqn}{\end{eqnarray}}
\newtheorem{thmbody}{Theorem}
\newenvironment{thm}{
\begin{thmbody}
	}{
	\end{thmbody} 
	}
\newtheorem{dfnbody}{Definition}
\newtheorem{corbody}{Corollary}
\newenvironment{cor}{
\begin{corbody}
	}{
	\end{corbody} 
	}
\newtheorem{lemmabody}{Lemma}
\newenvironment{lemma}{
\begin{lemmabody}
	}{
	\end{lemmabody} 
	}
\newtheorem{propbody}{Proposition}
\newenvironment{proof}{
	{\it Proof:}
	}{
 $\Box$
	}
\begin{document}
\title{Channel input adaptation\\ via natural type selection}

\author{
Sergey~Tridenski and Ram~Zamir,~\IEEEmembership{Fellow},~\IEEEmembership{IEEE}
\thanks{
The material in this paper was partially presented in ISIT2017 \cite{TridenskiZamir17} and ISIT2018 \cite{TridenskiZamir18}.}
\thanks{This work of S. Tridenski and R. Zamir was partially supported by the Israel Science Foundation (ISF), grant \# 676/15.}
}

\maketitle
\begin{abstract}
We consider a channel-independent decoder which is for i.i.d. random codes
what the maximum mutual-information decoder is for constant composition codes.
We show that this decoder results in exactly the same i.i.d. random coding error exponent and
almost the same correct-decoding exponent
for a given 
codebook distribution as the maximum-likelihood decoder.
We propose an algorithm for computation of the optimal correct-decoding exponent which operates on the corresponding expression for the channel-independent decoder.
The proposed algorithm comes in two versions:
computation at a fixed rate and for a fixed slope.
The fixed-slope version of the algorithm presents an alternative to the Arimoto algorithm
for computation of the random coding exponent function in the correct-decoding regime.
The fixed-rate version of the computation algorithm
translates into a stochastic iterative algorithm for adaptation of the i.i.d. codebook distribution to a discrete memoryless channel
in the limit of large block length.
The adaptation scheme uses i.i.d. random codes with the channel-independent decoder and relies on one bit of feedback per transmitted block.
The communication itself is assumed 
reliable at a constant rate $R$.
In the end of the iterations
the resulting codebook distribution
guarantees reliable communication for all rates below $R + \Delta$ for some predetermined
parameter of decoding confidence
$\Delta > 0$,
provided that $R + \Delta$ is less than the channel capacity.
\end{abstract}

\begin{IEEEkeywords}
Correct-decoding exponent, Arimoto algorithm, Blahut algorithm, unknown channels, input distribution,\newline maximum mutual information, erasure decoder.
\end{IEEEkeywords}

\markboth
{Submitted to the IEEE Trans. on Information Theory}
{Tridenski and Zamir: Channel input adaptation via natural type selection} 

%
%
%
%


\section{Introduction}

Consider a standard information theoretic scenario of communication through a discrete memoryless channel $P(y\,|\,x)$ using block codes.
For this case information theory provides optimal solutions in the form of the channel input distribution ${Q\mathstrut}^{*}(x)$, achieving the Shannon capacity $C$
or achieving the Gallager error exponent $E(R)$ for a given communication rate $R$.
Suppose, however, that the channel stochastic matrix $P(y\,|\,x)$ is slowly, or rarely, changing with time and we would like to sustain reliable communication at a constant rate $R$. For this purpose we assume
using a single bit of feedback, from the receiver to the transmitter, per transmitted block (Fig.~\ref{fig01}).
In our model we further assume that
given this bit of feedback
the codebook is updated
using the last transmitted block only,
i.e. without memory from the previous blocks.
So that potentially the system will follow the changes in the channel more closely.
Our goal of sustaining reliable communication at a constant rate $R$ is legitimate and feasible, of course, only as long as the capacity of the channel $C$ as a function of $P(y\,|\,x)$ stays above the rate $R$.
While the channel capacity may stay well above the rate, the optimal solution ${Q\mathstrut}^{*}(x)$ may drift significantly, as a result of the drift in $P(y\,|\,x)$, and render the initial code unreliable.

In this work the block code is modeled as a random code generated i.i.d. with a distribution $Q$.
The reason for modeling the code as an i.i.d. random code is twofold.
First, random codes
achieve capacity.
The idea is to
choose some positive constant $\Delta > 0$
and, by changing $Q$, to keep the {\em correct-decoding} random coding exponent
for a given $Q$
\cite[eq.~31]{TridenskiZamir17}, \cite{Arimoto76},
``pinned'' to zero at a rate $R' = R + \Delta$
provided that $R + \Delta < C$.
This would mean that the corresponding {\em error} exponent 
for the same $Q$
\cite[eq.~5.6.28]{Gallager} is strictly positive for all rates below $R + \Delta$, 
thus ensuring in particular reliable communication at rate $R$ (Fig.~\ref{graph02}).

Secondly, an i.i.d. distribution in a random code,
as opposed for example to a constant composition codebook,
results in a certain diversity of the codeword types,
which allows us to invoke a mechanism of {\em natural type selection} for updating of the parameter $Q$.
Using this mechanism iteratively, we successively update the codebook distribution $Q$ so that eventually the correct-decoding exponent associated with $Q$ decreases to zero at $R + \Delta$, thus achieving our goal.

The mechanism of natural type selection (NTS) has been originally observed and studied in the lossy source-coding setting \cite{ZamirRose01}, \cite{KochmanZamir02}.
In that setting a discrete memoryless source is mapped into a reproduction codebook, generated i.i.d. according to a distribution $Q$.
In the encoding process
a linear 
search is performed through the codebook until the first reproduction sequence is found, which
satisfies the distortion constraint $D$ with respect
to the source sequence.
Since various types are inherently present in the i.i.d. codebook,
the empirical distribution of the winning reproduction sequence in general is different than $Q$ and
is used for generating the next codebook.
This results in a decrease in the compression rate which after repeated iterations converges to the optimum given by the rate-distortion function $R(D)$.
This last property is guaranteed by the fact that both the conditional type given the source sequence
and the marginal type of the winning sequence with high probability
evolve along two parallel steps of the Blahut algorithm for the rate-distortion function computation \cite{Blahut72}, \cite{CsiszarTusnady84}.

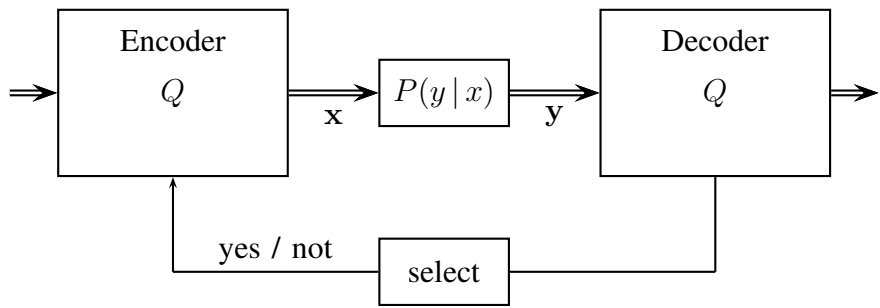
\begin{figure}[!t]
\psset{unit=.7mm}
\begin{center}
\begin{pspicture}(-20, 28)(145, 85)

\psframe(-11, 53)(33, 85)
\rput(11, 79){Encoder}
\rput(11, 69){$Q$}
\psline[doubleline = true]{->}(-20, 69)(-11, 69)
\psline[doubleline = true]{->}(33, 69)(50, 69)

\rput(41.5, 64.5){${\bf x}$}

\psframe(50, 62.5)(75, 75.5) \rput(62.5, 69){$P(y\,|\,x)$}

\rput(83.5, 64.5){${\bf y}$}

\psline[doubleline = true]{->}(136, 69)(145, 69)
\psline[doubleline = true]{->}(75, 69)(92, 69)

\psframe(92, 53)(136, 85)
\rput(114, 79){Decoder}
\rput(114, 69){$Q$}

\psframe(50, 28.5)(75, 41.5)

\rput(62.5, 35){select}

\psline{-}(114, 53)(114, 35)
\psline{-}(75, 35)(114, 35)

\psline{-}(11, 35)(50, 35)
\psline{<-}(11, 53)(11, 35)

\rput(30.5, 39){yes / not}

\end{pspicture}
\end{center}
\caption{DMC with a $1$-bit feedback per block. Each symbol in the random block code is generated i.i.d. according to $Q(x)$.}
\label{fig01}
\end{figure}

\begin{figure}[t]
\centering
\includegraphics[width=.50\textwidth]{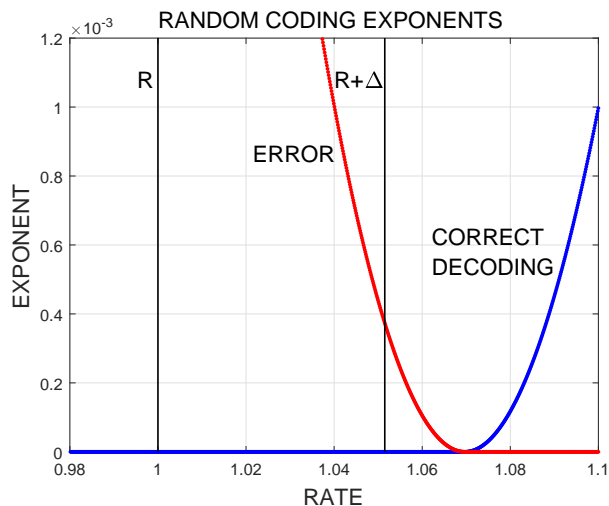}  
\caption{The i.i.d. random-coding error and correct-decoding exponents for a given $Q$.
Both exponents meet zero at the same point which is the mutual information of the joint distribution defined by the codebook distribution $Q(x)$
and the conditional channel distribution $P(y\,|\,x)$. The correct-decoding exponent is zero at $R + \Delta$.}
\label{graph02}
\end{figure}

We propose an analogous scheme for noisy-channel coding, equipped with its own computation algorithm for channels
which is
reminiscent of the Blahut algorithm for sources.
There is a certain analogy between the distortion constraint $D$ in lossy source-coding and the parameter $\Delta$ of the present scheme.
The higher is $D$ -- the poorer is the reproduction fidelity but the smaller is the communication penalty $R(D)$.
In our case, the higher is $\Delta$ -- the wider is the gap to capacity $\,C - R \geq \Delta\,$ but the higher is the communication reliability $E(R)$.
In order to implement
this analogy, we replace the $\log$-channel metric $\,\log P(y \, | \, x)\,$ of the maximum-likelihood (ML) decoder with
a more mathematically suitable channel-independent metric
\begin{equation} \label{eqNewMetric}
\log \frac{{V\!\mathstrut}_{m}(x \, | \, y)}{Q(x)},
\end{equation}
introduced in \cite{Merhav17}, see also \cite[eq.~16]{Merhav13},
where ${V\!\mathstrut}_{m}(x \, | \, y)$ is the conditional type of the codeword for message $m$ given the received block.
The decoder searches for the maximal empirical average of this metric among the codewords in the codebook.
If only a single codeword attains the maximum,
the decoder then compares the difference between the maximal empirical average of (\ref{eqNewMetric}) and the second highest one in the codebook to the parameter $\Delta$.
If the winning codeword wins by more than $\Delta$, its empirical distribution is selected as the new codebook distribution.

We derive expressions for the i.i.d. random-coding error and correct-decoding exponents for a given $Q$
of the decoder (\ref{eqNewMetric}) in a form of minima. They turn out to be equivalent to the corresponding exponents of the ML decoder.
The obtained expression for the i.i.d. correct-decoding exponent for a given $Q$
is used as a vehicle for iterative minimization of this exponent over $Q$ at a fixed rate $R' = R + \Delta$.
This procedure is comparable to the fixed-distortion version of the Blahut algorithm for $R(D)$.
We use the fixed-rate iterative minimization procedure as a basis for our stochastic adaptation scheme.

A fixed-slope version of the same computation is also presented.
This is
comparable to the fixed-slope version of the Blahut algorithm for $R(D)$,
but also presents an alternative to the Arimoto algorithm for fixed-slope computation of the correct-decoding exponent {\em function}
\cite{Arimoto76}, and to a similar recent algorithm \cite{OohamaJitsumatsu15}.

In Section~\ref{MLtoMMI} we introduce the channel-independent metric (\ref{eqNewMetric}).
In Section~\ref{ErrorExp} we derive its i.i.d. random coding error exponent for a given $Q$.
In Section~\ref{CorDecExp} we extend this analysis to the correct-decoding exponent.
In Section~\ref{CordecExpComp} the procedure of iterative minimization of the correct-decoding exponent at fixed $R$ is introduced,
and in Section~\ref{ComptoBlahut} it is compared to the fixed-distortion version of the Blahut algorithm \cite{CsiszarTusnady84}.
In Section~\ref{FixedSlope} we apply the fixed-slope version of the procedure for computation of the correct-decoding exponent function --
as an alternative to the Arimoto algorithm \cite{Arimoto76}.
In Section~\ref{ChInAd} the stochastic adaptation scheme based on the fixed-rate minimization of Section~\ref{CordecExpComp} is presented.
Section~\ref{Conc} summarizes the paper. Some technical details are deferred to the Appendix.

\bigskip

\section{ML replaced with a channel-independent decoder}\label{MLtoMMI}

\bigskip

In order to build up a framework,
we 
replace the optimal maximum-likelihood (ML) decoder with a naturally fitting channel-independent decoder.
This alternative decoder will still be optimal in the exponential sense for i.i.d random codes\footnote{And for constant composition codes as well,
becoming the maximum mutual-information decoder.}.

Let $P(y \, | \, x)$ denote a discrete memoryless channel with letters from finite input and output alphabets, $x\in {\cal X}$ and $y \in {\cal Y}$,
respectively.
Given a positive real rate $R$ and a blocklength $n$, the codebook consists of $\Big\lceil {e\mathstrut}^{nR}\Big\rceil$
codewords of length $n$, generated independently with an i.i.d. distribution $Q(x)$.

Consider now the decoding procedure. Let $T(y)$ denote the type of the received word ${\bf y} \in {\cal Y\mathstrut}^{n}$ from the channel.
For each codeword ${\bf x\mathstrut}_{m} \in {\cal X\mathstrut}^{n}$ in the codebook, where $m = 1, ..., \Big\lceil {e\mathstrut}^{nR}\Big\rceil$,
let ${V\!\mathstrut}_{m}(x \, | \, y)$ denote its conditional type given ${\bf y}$.
ML decoding amounts to evaluating the average
\begin{equation} \label{eqML}
\sum_{x, \, y}T(y){V\!\mathstrut}_{m}(x \, | \, y) \underbrace{\log P(y \, | \, x)}_{\text{metric}}
\; \triangleq \; {\mathbb{E}\mathstrut}_{m} \big[\log P(Y \, | \, X)\big]
\end{equation}
for each $m$ and looking for the maximum over $m$. The logarithm of the channel probability mass function $\log P(y \, | \, x)$
under the average
can be viewed as a decoding metric,
i.e.,
a replaceable function of $x$ and $y$,
which can be replaced by another metric,
resulting in a different (suboptimal) decoder.
We replace the $\log$-channel metric with the metric $\log \frac{{V}_{m}(x \, | \, y)}{Q(x)}$ in (\ref{eqNewMetric}),
which varies with $m$ and resembles the pointwise mutual information. The new decoder evaluates the metric average
\begin{equation}\label{eqMMIlikeDec}
\sum_{x, \, y}T(y){V\!\mathstrut}_{m}(x \, | \, y) \underbrace{\log \frac{{V\!\mathstrut}_{m}(x \, | \, y)}{Q(x)}}_{\text{metric}}
\; \equiv \;
{\mathbb{E}\mathstrut}_{m} \left[\log \frac{{V\!\mathstrut}_{m}(X \, | \, Y)}{Q(X)}\right]
\end{equation}
for each $m$ and chooses the particular $m$ corresponding to the maximal average, as its estimate $\widehat{m}$ of the sent message.
If there is more than one such $m$, then the decoder may break the tie
arbitrarily
or, alternatively, declare an error.
As in the ML decoding, a particular choice between the 
different scenarios
of tie breaking has no effect on the resulting {\em error} exponent.

\bigskip

\section{Error exponent}\label{ErrorExp}

\bigskip

In this section we derive the i.i.d. random coding error exponent in decoding using the metric average (\ref{eqMMIlikeDec})
and a decoding confidence parameter $\Delta \geq 0$ (decoding threshold).
This will correspond to different exponents in our adaptation scheme which we present later in the paper.
Specifically, $\Delta = 0$ will give the error exponent of reliable communication at rate $R$.
The case $\Delta > 0$ will give the probability exponent of the event when the feedback bit is $0$ --
when the codebook distribution $Q$ is not updated. First we derive the exponent in a form of a minimum,
which we call an implicit expression.

\bigskip

\subsection{Implicit expression for the channel-independent metric}\label{ErrorExpA}

\bigskip

Let us define an implicit expression, which will shortly acquire a meaning of an i.i.d. random coding error exponent:
\begin{equation} \label{eqDefImplicit}
{E\mathstrut}_{\!e}(R, Q)
\; \triangleq \;
\min_{\substack{\\T(y), \, V(x\,|\,y)
}}
\left\{
D(T\circ V \, \| \, Q \circ P) \, + \,
\big|D(T\circ V \,\|\, T \times Q) - R\,\big|^{+}
\right\},
\end{equation}
where the notation $|a - b \,|^{+}$ preserves the difference if $\,a - b > 0$, and nullifies it otherwise.
Note that the metric average (\ref{eqMMIlikeDec}) appears in this expression inside the bars $|\,\cdot\,|^{+}$,
compared to the rate $R$,
in the form of the Kullback-Leibler divergence from the product distribution $T(y)Q(x)$,
denoted as $T\times Q$, to the joint distribution $T(y)V(x \, | \, y)$,
denoted as $T\,\circ\, V$.
First we establish the expression (\ref{eqDefImplicit}) as the i.i.d. random coding error exponent of the decoder
described by (\ref{eqMMIlikeDec}).
Afterwards we show, by deriving an explicit expression for (\ref{eqDefImplicit}), that it is equivalent to the optimal i.i.d. random coding error exponent,
achieved by the ML decoder.

The expression (\ref{eqDefImplicit}) -- as the i.i.d. random coding error exponent of the decoder
(\ref{eqMMIlikeDec}) -- can be obtained from \cite[eq.~2.28]{Scarlett14}
when the metric (\ref{eqNewMetric}) is used as a mismatched metric.

In the current paper the expression (\ref{eqDefImplicit}) appears also in a slightly more general setting -- with a decoding confidence parameter $\Delta$.
To this end,
let us further generalize and expand the decoder structure by {\em splitting} the decoding metric. Suppose, 
generally
the decoder uses {\em two} different functions of the joint distribution $T(y)V(x \, | \, y)$, denoted $A(T\circ V)$ and $B(T\circ V)$, such that always
\begin{equation} \label{eqErasureProp}
A(T\circ V) \; \leq \; B(T\circ V), \;\;\;\;\;\; \forall \;\; T\circ V.
\end{equation}
Given a received word ${\bf y}$ of type $T(y)$,
for each message $m = 1, ..., \Big\lceil {e\mathstrut}^{nR}\Big\rceil$ the decoder evaluates both $A(T\circ {V\!\mathstrut}_{m})$ and $B(T\circ {V\!\mathstrut}_{m})$. If there exists a message $m$ such that
\begin{equation} \label{eqDecRule}
A(T\circ {V\!\mathstrut}_{m}) \; > \; B(T\circ {V\!\mathstrut}_{m'}), \;\;\;\;\;\; \forall \;\; m' \, \neq \, m,
\end{equation}
then by the property (\ref{eqErasureProp}) there exists only one such $m$ in the code. The decoder looks for and chooses this winning $m$ as its estimate
$\widehat{m}$ of the sent message.
If there is no such $m$, the decoder declares an erasure.
Let 
${P\mathstrut}_{\!e}$
denote the i.i.d. random code ensemble average probability of erasure and undetected error
(when the winning $\widehat{m}$ exists but is a wrong message)
combined, in this decoding scenario according to the rule (\ref{eqDecRule}).
\bigskip
\begin{thm}\label{thm1}\newline
{\em Let $B(T\circ V) \equiv D(T\circ V \,\|\, T \times Q)$ and let $A(T\circ V)\leq D(T\circ V \,\|\, T \times Q)$ be a continuous function of $T\circ V$ in the support of $Q\circ P$.
Then}
\begin{equation} \label{eqErrorExp1}
\lim_{n\,\rightarrow \,\infty}\,\frac{\log{P}_{e}}{-n}
\;\; = \;\;
\min_{\substack{\\T(y), \, V(x\,|\,y)
}}
\left\{
D(T\circ V \, \| \, Q \circ P) \, + \,
\big|A(T\circ V) - R\,\big|^{+}
\right\}.
\end{equation}
\end{thm}

\bigskip

\begin{cor}[Error exponent of the natural decoder]\label{cor1}\newline
{\em Let $B(T\circ V) \equiv D(T\circ V \,\|\, T \times Q)$ and $A(T\circ V) \equiv D(T\circ V \,\|\, T \times Q) - \Delta$, with $\Delta \geq 0$. Then}
\begin{equation} \label{eqErrorExpConfidence}
\lim_{n\,\rightarrow \,\infty}\,\frac{\log{P}_{e}}{-n}
\;\; = \;\;
{E\mathstrut}_{\!e}(R + \Delta, \,Q),
\end{equation}
{\em where the RHS is defined in (\ref{eqDefImplicit}).}
\end{cor}

\bigskip

The exponent (\ref{eqErrorExpConfidence}) presents an analog of the error exponent of the Forney simplified decoder \cite[eq.~18-20]{TridenskiZamir17}, \cite{WeinbergerMerhav15}, \cite{Forney68}. 
Unlike in the case of the simplified decoder \cite[eq.~18]{TridenskiZamir17}, which uses the ML metric (\ref{eqML}),
in the error exponent (\ref{eqErrorExpConfidence}) which is derived for the metric (\ref{eqMMIlikeDec}) the decoding threshold $\Delta$ results in a simple shift of the exponent as a function of $R$.

As we will see later,
in the adaptation scheme presented in the current work the error exponent (\ref{eqErrorExpConfidence}) with $\Delta > 0$ will correspond to
the probability exponent of the event when the feedback bit is $0$, which signals
to the transmitter not to update the codebook distribution $Q$.

As a special case of (\ref{eqErrorExpConfidence}), with $A(T\circ V) \equiv B(T\circ V) \equiv D(T\circ V \,\|\, T \times Q)$,
the expression (\ref{eqDefImplicit}) now becomes the error exponent in the decoding using the metric (\ref{eqMMIlikeDec}) as introduced in the previous section.
In this case the erasure event is a tie which leads to an error with probability at least $1/2$
(if the tie breaking is performed).
Alternatively, the special case of (\ref{eqErrorExpConfidence}) with $\Delta = 0$ can be obtained from \cite[eq.~2.28]{Scarlett14}
when the metric (\ref{eqNewMetric}) is used as a mismatched metric.

Other examples of $A(T\circ {V\!\mathstrut}_{m})$ satisfying the condition (\ref{eqErasureProp}) of Theorem~\ref{thm1}
with $D(T\circ {V\!\mathstrut}_{m} \,\|\, T \times Q) \equiv \newline B(T\circ {V\!\mathstrut}_{m})$
are obtained by substituting in (\ref{eqMMIlikeDec}) alternative metrics
\begin{equation} \label{eqFamily}
\log \frac{{V\!\mathstrut}_{m}(x \, | \, y)}{{T\!\mathstrut}_{m}(x)},
\;\;\;\;\;\;
\log \frac{\Phi(x \, | \, y)}{Q(x)},
\;\;\;\;\;\;
\log \frac{\Phi(x \, | \, y)}{{T\!\mathstrut}_{m}(x)},
\;\;\;\;\;\;
\log \frac{P(y \, | \, x)}{T(y)},
\end{equation}
where
${T\!\mathstrut}_{m}(x) = T\circ {V\!\mathstrut}_{m}(x)$ is the marginal type of ${\bf x\mathstrut}_{m}$ and
$\Phi(x \, | \, y)$ is some fixed conditional distribution.
Likewise, subtracting $\,\Delta > 0\,$ in $\,A(T\circ {V\!\mathstrut}_{m})\,$ of Theorem~\ref{thm1} results in a simple shift to the left of the exponent (\ref{eqErrorExp1})
as a function of $R$. Only the last metric among the above examples assumes the knowledge of the channel $P(y \, | \, x)$ by the receiver.
This does not lead to the ML decoding of course, because $A(T\circ {V\!\mathstrut}_{m})$ is compared to $B(T\circ {V\!\mathstrut}_{m})$.

To prove Theorem~\ref{thm1} we first need the following auxiliary result.
For any functions $A(T\circ V)$ and $B(T\circ V)$, not necessarily satisfying (\ref{eqErasureProp}),
let ${P\mathstrut}_{\!A \, \leq \, B}$ denote the i.i.d. random code ensemble average probability
that the sent message $m$ does not satisfy the winning condition (\ref{eqDecRule}).
Then for finite blocklength $n$ we have:
\bigskip
\begin{lemma} \label{lemTypes}
\begin{equation} \label{eqLemma}
\frac{\log{P}_{A \, \leq \, B}}{-n}
\;\; = \;\;
\min_{\substack{\\\text{types} \;\; T(y), \, V(x\,|\,y), \, \widehat{V}(x\,|\,y):\\
\\
A(T \,\circ \,V)\;\leq\;
B(T \,\circ \,\widehat{V})
}}
\left\{
D(T\circ V \, \| \, Q \circ P) \, + \,
\big|D(T\circ \widehat{V} \,\|\, T \times Q) - R\,\big|^{+}
\right\} \; + \; o(1),
\end{equation}
{\em where the minimization is over the types, corresponding to the blocklength $n$.}
\end{lemma}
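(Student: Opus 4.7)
The plan is to prove Lemma 1 by a standard method-of-types calculation, decomposing the event $\{A(T\circ V_m)\leq B(T\circ V_{m'}) \text{ for some } m'\neq m\}$ according to the joint types of the transmitted codeword/received word pair and the competing codeword. Throughout I fix a message index $m$ and condition on the channel output ${\bf y}$ having some type $T$, on the transmitted codeword ${\bf x}_m$ having conditional type $V$ given ${\bf y}$, and on some other codeword ${\bf x}_{m'}$ with $m'\neq m$ having conditional type $\widehat V$ given ${\bf y}$. Since the code is drawn i.i.d. from $Q$ and transmission is through $P$, standard method-of-types estimates give
\begin{equation*}
\Pr\!\big\{\text{type of }({\bf x}_m,{\bf y})\text{ is }T\!\circ\! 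V\big\} \;=\; \exp\!\big(-nD(T\!\circ\! V\,\|\,Q\!\circ\! P) + o(n)\big),
\end{equation*}
and for any fixed ${\bf y}$ of type $T$, the probability that a single independent codeword drawn from $Q^n$ has conditional type $\widehat V$ with respect to ${\bf y}$ is $\exp\!\big(-nD(T\!\circ\!\widehat V\,\|\,T\!\times\! Q) + o(n)\big)$.

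Next I would handle the union bound over the $\lceil e^{nR}\rceil -1$ competing codewords. For the upper bound, the union bound yields
\begin{equation*}
\Pr\!\big\{\exists\,m'\neq m\text{ with cond.\ type }\widehat V\big\} \;\leq\; \min\!\Big\{1,\;e^{nR}\exp\!\big(-nD(T\!\circ\!\widehat V\,\|\,T\!\times\! Q)\big)\Big\}\cdot e^{o(n)},
\end{equation*}
which equals $\exp\!\big(-n|D(T\!\circ\!\widehat V\,\|\,T\!\times\! Q)-R|^{+} + o(n)\big)$. Multiplying by the type probability of $({\bf x}_m,{\bf y})$, summing over the polynomially many triples of types subject to the feasibility constraint $A(T\!\circ\! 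V)\leq B(T\!\circ\!\widehat V)$, and collapsing the sum of exponentials into the minimum exponent, produces the claimed upper bound on $-\tfrac{1}{n}\log P_{A\leq B}$.

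For the matching lower bound I would use two standard ingredients. First, to lower-bound the union probability, I would use the independence of the competing codewords: conditional on ${\bf y}$, the events that different ${\bf x}_{m'}$'s land in a given conditional type class are independent, so
\begin{equation*}
\Pr\!\big\{\exists\,m'\neq m\text{ with cond.\ type }\widehat V\big\} \;=\; 1-\Big(1-\exp\!\big(-nD(T\!\circ\!\widehat V\,\|\,T\!\times\! Q)+o(n)\big)\Big)^{\lceil e^{nR}\rceil-1},
\end{equation*}
which matches $\exp\!\big(-n|D(T\!\circ\!\widehat V\,\|\,T\!\times\! Q)-R|^{+}\big)$ exponentially. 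Second, to lower-bound the sum over types by its largest term, I would restrict attention to a single minimizing triple $(T^{*},V^{*},\widehat V^{*})$ of the RHS among $n$-types; since the number of types is polynomial in $n$, only a vanishing exponential correction is lost.

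The main technical subtlety will be the interaction between the feasibility constraint $A(T\!\circ\! V)\leq B(T\!\circ\!\widehat V)$ and the approximation of continuous minimizers by sequences of $n$-types. Here I would use the assumed continuity of $A$ and $B$ in the support of $Q\circ P$ (inherited via Corollary~\ref{cor1} from the hypotheses on $A$, $B$ in Theorem~\ref{thm1}) to approximate any continuous minimizer by rational types while preserving the constraint up to an $o(1)$ slack; this furnishes the converse direction and yields the $o(1)$ term in (\ref{eqLemma}). Everything else is routine bookkeeping with type-class cardinalities and the standard identity $\min\{1,e^{n(R-D)}\}=\exp(-n|D-R|^{+})$.
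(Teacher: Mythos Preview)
Your core approach matches the paper's: decompose by the joint type $T\circ V$ of the transmitted/received pair and the conditional type $\widehat V$ of a competing codeword, use $D(T\circ V\,\|\,Q\circ P)$ for the first exponent and $|D(T\circ\widehat V\,\|\,T\times Q)-R|^{+}$ for the second, then collapse the polynomial-in-$n$ number of type triples into a single minimum. The paper's lower bound on the union probability is slightly more explicit than your $1-(1-p)^{M}$ sketch (it writes out two complementary bounds, one for $D_n\geq R_n$ and one for $D_n<R_n$, and shows the worse of the two is still at least a universal constant times $e^{-n|D_n-R_n|^{+}}$), but your version is fine once fleshed out.

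Your last paragraph, however, is a misconception. Lemma~\ref{lemTypes} is a \emph{finite-$n$} statement: the minimization is explicitly over $n$-types, and the lemma makes \emph{no} continuity assumption on $A$ or $B$ (the paper introduces it ``for any functions $A(T\circ V)$ and $B(T\circ V)$, not necessarily satisfying (\ref{eqErasureProp})''). There is no continuous minimizer to approximate and no need to invoke continuity; the $o(1)$ in (\ref{eqLemma}) comes solely from the polynomial type-counting factors, which are uniform over all type triples. The step you describe---approximating a continuous minimizer by $n$-types using continuity of $A$---belongs to the proof of Theorem~\ref{thm1} (and Theorem~1$'$), \emph{after} Lemma~\ref{lemTypes} is in hand, not to the lemma itself. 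Drop that paragraph and your outline is complete.
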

The proof is given in Appendix.

\bigskip

{\em Proof of Theorem~\ref{thm1}:}
By Lemma~\ref{lemTypes}
\begin{align}
\frac{\log{P}_{e}}{-n}
\;\; & = \;\; \min_{\substack{\\\text{types} \;\;  T(y), \, V(x\,|\,y), \, \widehat{V}(x\,|\,y):\\
\\
A(T \,\circ \,V)\;\leq\;
D(T \,\circ \,\widehat{V} \, \| \, T \,\times\, Q)
}}
\left\{
D(T\circ V \, \| \, Q \circ P) \, + \,
\big|D(T\circ \widehat{V} \,\|\, T \times Q) - R\,\big|^{+}
\right\} \; + \; o(1) 
\label{eqStructure} \\
& = \;\;
\;\;\;\;\;\,
\min_{\substack{\\\text{types} \;\; T(y), \, V(x\,|\,y)
}}
\;\;\;\;\;\,
\left\{
D(T\circ V \, \| \, Q \circ P) \, + \,
\big|A(T\circ V) - R\,\big|^{+}
\right\}
\; + \; o(1),
\label{eqErrorExpExact}
\end{align}
where the $\min$'s are over types.
The second equality holds because
$A(T\circ V) \leq D(T\circ V \,\|\, T \times Q)$,
guaranteeing
that the minimizing type $\widehat{V}$ in (\ref{eqStructure}) 
can be chosen such that
$D(T\circ \widehat{V} \,\|\, T \times Q)$ is uniformly close to $A(T\circ V)$.
More precisely
$\,D(T\circ \widehat{V} \,\|\, T \times Q) = \max\,\{0, \,A(T\circ V)\} + o(1)$.
This results in
the two minima over the types being equivalent up to a uniform additive constant $o(1)$.
In the limit, as $n\,\rightarrow\,\infty$, the term $o(1)$ disappears and the minimization is performed over all rational distributions $\,T \circ V$.
Since the objective function in the $\min$ of (\ref{eqErrorExpExact}) is a continuous function of $\,T \circ V$, the infimum over rational distributions equals the minimum over all distributions as intended in the statement of the theorem (\ref{eqErrorExp1}). \ensuremath{\square}

\bigskip

The i.i.d. random coding error exponent ${E\mathstrut}_{\!e}(R, Q)$, (\ref{eqDefImplicit}), can be easily compared to the constant-\newline
composition error exponent:
\begin{align}
{E\mathstrut}_{\!e}(R, Q) \;\;\; = \;\;\;\;
& \min_{\substack{\\T(y), \, V(x\,|\,y)
}}
\,\Big\{
D(T\circ V \, \| \, Q \circ P) \; + \;
\big|D(T\circ V \,\|\, T \times Q) \, - \, R\,\big|^{+}
\Big\}
\nonumber \\
\equiv \;\;\;\; & \!\min_{\substack{\\U(x), \, W(y\,|\,x)
}}
\Big\{
D(U\circ W \, \| \, Q \circ P) \, + \,
\big|\overbrace{I(U\circ W) \, + \, D(U \,\|\, Q)}^{D(T\,\circ\, V \;\|\; T \,\times\, Q)\;=} \, - \, R \,\big|^{+}
\Big\}
\nonumber \\
\overset{U = \, Q}{\leq} \;\; &
\;\;\, \min_{\substack{\\W(y\,|\,x)
}}
\;\;\;
\,\Big\{
D(Q\circ W \, \| \, Q \circ P) \, + \,
\big|\,I(Q\circ W)\, - \, R \,\big|^{+}
\Big\},
\label{eqCC}
\end{align}
where $I(U\circ W)$ denotes the mutual information $I(X; Y)$ of the joint distribution $U(x)W(y\,|\,x)$.
The expression (\ref{eqCC})
can be recognized as the constant-composition error exponent \cite[eq.~5.15]{CsiszarKorner}.
The proof of Theorem~\ref{thm1}~/~Corollary~\ref{cor1} can be repeated for the constant composition codes.
In that case the same metric
$\,\log \frac{{V}_{m}(x \, | \, y)}{Q(x)}\,$ from (\ref{eqMMIlikeDec}) becomes exactly the pointwise mutual information, resulting in the maximum mutual information (MMI) decoder \cite{CsiszarKorner}, and the minimum (\ref{eqCC}) replaces (\ref{eqDefImplicit}).

\bigskip

\subsection{Implicit expressions for the ML metric}\label{AboutML}

\bigskip

As we shall see,
the exponent (\ref{eqDefImplicit}) produced by the metric in (\ref{eqMMIlikeDec}) is optimal for i.i.d. random codes.
However, if we wish to address the decoder with the ML metric $\log P(y \, | \, x)$ directly with the help of Lemma~\ref{lemTypes}, we need to reformulate Theorem~\ref{thm1} as follows:

\bigskip

{\em Theorem 1$'$ (Error exponent of the split decoder)}\newline
{\em Let $A(T\circ V)$ and $B(T\circ V)$ be finite and such that the set of joint distributions over $(x, y, \widehat{x})\in {\cal X}\times {\cal Y}\times {\cal X}$}
\begin{displaymath}
\Big\{ T(y)V(x \, | \, y) \widehat{V}(\widehat{x} \, | \, y): \;\;\; A(T\circ V) \, - \, B(T\circ \widehat{V}) \; \leq \; 0 \Big\}
\end{displaymath}
{\em is the closure of its interior.
Then}
\begin{equation} \label{eqThm1Prime}
\lim_{n\,\rightarrow \,\infty}\,\frac{\log{P}_{A\, \leq \,B}}{-n}
\;\; = \;\;
\min_{\substack{\\T(y), \, V(x\,|\,y), \, \widehat{V}(\widehat{x}\,|\,y):\\
\\
A(T \,\circ \,V)\;\leq\;
B(T \,\circ \,\widehat{V})
}}
\left\{
D(T\circ V \, \| \, Q \circ P) \, + \,
\big|D(T\circ \widehat{V} \,\|\, T \times Q) - R\,\big|^{+}
\right\}.
\end{equation}

\begin{proof}
The minimum over types (\ref{eqLemma}) of Lemma~\ref{lemTypes} converges to (\ref{eqThm1Prime}) as in Sanov's theorem proof \cite{CoverThomas}.
\end{proof}

\bigskip

By this theorem, the i.i.d. error exponent of the ML decoder can be implicitly formulated using the ML metric $\log P(y \, | \, x)$
directly \cite[eq.~2.28]{Scarlett14} as
\begin{align}
& \;\, \min_{\substack{\\T(y), \, V(x\,|\,y), \, \widehat{V}(\widehat{x}\,|\,y):\\
\\
\sum_{x, \, y}T(y) V(x\,|\,y)\widehat{V}(\widehat{x}\,|\,y)\log \frac{P(y\,|\,x)}{P(y\,|\,\widehat{x})}
\;\leq\; 0
}}
\;
\left\{
D(T\circ V \, \| \, Q \circ P) \, + \,
\big|D(T\circ \widehat{V} \,\|\, T \times Q) - R\,\big|^{+}
\right\}
\nonumber \\
\equiv \;\; & \min_{\substack{\\T(y), \, V(x\,|\,y), \, \widehat{V}(\widehat{x}\,|\,x, \, y):\\
\\
\sum_{x, \, y}T(y) V(x\,|\,y)\widehat{V}(\widehat{x}\,|\,x, \, y)\log \frac{P(y\,|\,x)}{P(y\,|\,\widehat{x})}
\;\leq\; 0
}}
\left\{
D(T\circ V \, \| \, Q \circ P) \, + \,
\big|D\big((T\circ V)\circ \widehat{V} \,\|\, (T\circ V) \times Q\big) - R\,\big|^{+}
\right\}.
\nonumber
\end{align}
The second expression allows the i.i.d. {\em decoding error} exponent to be alternatively interpreted as an {\em encoding success} exponent
in the lossy encoding of an effective discrete memoryless source $Q(x)P(y\,|\,x)$ of pairs $(x, y)$
by a reproduction codebook generated i.i.d. according to $Q(\widehat{x})$, where the reproduction letter $\widehat{x}$ and $x$
share the same alphabet \cite{TridenskiZamir17}.
The distortion measure in this case is
$d\big((x, y), \widehat{x}\big) \triangleq \log \frac{P(y\,|\,x)}{P(y\,|\,\widehat{x})}$,
not necessarily positive,
and $0$ in the minimization condition is interpreted as a distortion constraint. The distortion constraint $0$ of the lossy encoding can be generalized to an arbitrary threshold, resulting in a family of exponents of the Forney simplified erasure/list decoder for channels \cite[eq.~18-20]{TridenskiZamir17}, \cite{WeinbergerMerhav15}, \cite{Forney68}.

It can be shown using a version of Theorem~1$'$ for the constant composition codes or \cite[eq.~2.28]{Scarlett14}, that the use of the ML metric $\log P(y \, | \, x)$ for these codes also leads to the exponent (\ref{eqCC}).

\bigskip

\subsection{Explicit expression and the minimizing solutions}\label{ErrorExpB}

\bigskip

Here we prove the identity \cite[eq.~28]{TridenskiZamir17} between the implicit expression (\ref{eqDefImplicit})
and the optimal Gallager expression \cite[eq.~5.6.28]{Gallager},
which we call the explicit expression.
By doing so we also obtain expressions for the minimizing distributions of (\ref{eqDefImplicit}).
The same expressions for the minimizing distributions will be used also in the subsequent sections for the correct-decoding exponent,
in the computation algorithms, and will represent (asymptotically) the types in our stochastic adaptation scheme.

The notation $|\,\cdot\, |^{+}$ in (\ref{eqDefImplicit}) 
can be interpreted as if
the $\min$ there splits into a minimum of two $\min$'s,
subject to the condition whether the divergence $D(T\circ V \,\|\, T \times Q)$ is less or greater than $R$.
The first $\min$ gives the exponent of the error event caused by the conditional types $V$ with $D(T\circ V \,\|\, T \times Q) < R$,
which by virtue of this inequality itself appear in the codebook with high probability. 
The second $\min$ gives the exponent of the error event caused by the conditional types $V$ with $D(T\circ V \,\|\, T \times Q) > R$,
which are exponentially rare in the codebook,
hence the additional positive term $D(T\circ V \,\|\, T \times Q) - R$ in the exponent.
Each one of these two parts is treated separately by the next two lemmas.
The lemmas replace the minima by supporting lines as functions of $R$ and also follow after the minimizing distributions.

\bigskip

\begin{lemma} [For $\,D(T\circ V \,\|\, T \times Q)\leq R\,$ part] \label{lemLeq}\newline
{\em For any $\rho \geq 0$}
\begin{equation} \label{eqPartLeq}
\min_{\substack{\\T(y), \, V(x\,|\,y):\\
\\
D(T \,\circ \,V \, \| \, T \,\times\, Q)\;\leq\;
R
}}
D(T\circ V \, \| \, Q \circ P)
\;\; \geq \;\;
\min_{\substack{\\T(y), \, V(x\,|\,y)}}
\Big\{
D(T\circ V \, \| \, Q \circ P)
\,
+ \,
\rho \big[D(T \circ V \, \| \, T \times Q) \,-\, R\big]
\Big\}.
\end{equation}
{\em
In the case of equality in (\ref{eqPartLeq}),
any minimizing solution of the LHS is also a minimizing solution of the RHS $\,{T\!\mathstrut}_{\rho} \circ {V\!\!\mathstrut}_{\rho}$
such that $R = D\big({T\!\mathstrut}_{\rho} \circ {V\!\!\mathstrut}_{\rho} \, \| \, {T\!\mathstrut}_{\rho} \times Q\big)$,
or $R \geq  D\big({T\mathstrut}_{0} \circ {V\!\mathstrut}_{0} \, \| \, {T\mathstrut}_{0} \times Q\big)$
for $\rho = 0$.
Conversely, if there exists such solution ${T\!\mathstrut}_{\rho} \circ {V\!\!\mathstrut}_{\rho}$ minimizing the RHS,
then it is also a minimizing solution of the LHS and there is equality in (\ref{eqPartLeq}).}
\end{lemma}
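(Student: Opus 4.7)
The proof naturally splits into three pieces, and my plan is to carry them out in the order below; none should require heavy computation.

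First, I would establish the inequality (\ref{eqPartLeq}) itself by standard Lagrangian relaxation. For any $T \circ V$ satisfying $D(T\circ V \,\|\, T\times Q) \leq R$ and any $\rho \geq 0$, the term $\rho\big[D(T\circ V \,\|\, T\times Q) - R\big]$ is nonpositive, so adding it to $D(T\circ V \,\|\, Q\circ P)$ only decreases the value; subsequently enlarging the feasible set to all joint distributions decreases it again, and the result is exactly the RHS of (\ref{eqPartLeq}).

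Next, for the ``direct'' direction of the characterization, I would fix any LHS-minimizer $T^{*}\circ V^{*}$ under the equality hypothesis and chain
\begin{align*}
\text{LHS} \;=\; D(T^{*}\circ V^{*} \,\|\, Q\circ P)
\;&\geq\; D(T^{*}\circ V^{*} \,\|\, Q\circ P) \,+\, \rho\big[D(T^{*}\circ V^{*} \,\|\, T^{*}\times Q) - R\big] \\
\;&\geq\; \text{RHS} \;=\; \text{LHS}.
\end{align*}
Both inequalities are then forced to be equalities: the first yields complementary slackness $\rho\big[D(T^{*}\circ V^{*} \,\|\, T^{*}\times Q) - R\big] = 0$, hence $R = D(T^{*}\circ V^{*} \,\|\, T^{*}\times Q)$ when $\rho > 0$ and only the feasibility bound $R \geq D(T^{*}\circ V^{*} \,\|\, T^{*}\times Q)$ when $\rho = 0$; the second equality says that $T^{*}\circ V^{*}$ minimizes the RHS, so it can be identified with $T_\rho\circ V_\rho$. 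For the converse I would note that under the stated hypotheses $T_\rho\circ V_\rho$ is LHS-feasible and the Lagrangian penalty vanishes at it, so $\text{RHS} = D(T_\rho\circ V_\rho \,\|\, Q\circ P) \geq \text{LHS}$; combined with the inequality already proved this gives equality in (\ref{eqPartLeq}) together with the asserted optimality of $T_\rho\circ V_\rho$ for the LHS.

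The whole argument is purely algebraic Lagrangian manipulation and invokes no convexity of the problem — which is fortunate, since the constraint functional $D(T\circ V \,\|\, T\times Q) = I(X;Y) + D(T_{X}\,\|\, Q)$ is not in general convex in the joint $T\circ V$. The only subtlety I expect to have to handle is making sure the minima on both sides of (\ref{eqPartLeq}) are actually attained so that the chain of equalities is well-defined, which follows from compactness of the probability simplex together with lower semicontinuity of the divergence; any question of whether some $\rho \geq 0$ achieving equality in (\ref{eqPartLeq}) actually exists is a statement about strong duality that is logically independent of this lemma and would be addressed separately.
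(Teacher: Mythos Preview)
Your proposal is correct and follows essentially the same Lagrangian-relaxation route as the paper: the paper writes out the identical chain $D(T_R\circ V_R\,\|\,Q\circ P)\ge D(T_R\circ V_R\,\|\,Q\circ P)+\rho[\cdots]\ge\min\{\cdots\}$ for the inequality and the direct direction, and for the converse it inserts one extra ``sandwich'' line (its (\ref{eqSandwich})) that amounts exactly to your observation that $T_\rho\circ V_\rho$ is LHS-feasible with vanishing penalty. Your remarks on attainment via compactness/lower semicontinuity and on the irrelevance of convexity are accurate and go slightly beyond what the paper spells out.
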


\bigskip

\begin{proof}
\begin{align}
&
\;\;\;\;\;\;\;\;\;\;\;\;\;\;\,
\min_{\substack{\\T(y), \, V(x\,|\,y):\\
\\
D(T \,\circ \,V \, \| \, T \,\times\, Q)\;\leq\;
R
}}
\;\;\;\;\;\;\;\;\;\,
\Big\{ D(T\circ V \, \| \, Q \circ P) \Big\}
\label{eqMinFirst} \\
&
\!\! \overset{< \, \infty}{=}
\;\;\;\;\;\;\;\;\;
\;\;\;\;\;\;\;\;\;\;\;\;\;\;\;\;\;\;\;\;\;\;\;\;\;\;\;\;
\;\;\;\;\;\;\;\,
D\big({T\!\mathstrut}_{R}\circ {V\!\!\mathstrut}_{R} \; \| \; Q \circ P\big)
\label{eqSolution} \\
&
\!\! \overset{\rho \, \geq \, 0}{\geq}
\;\;\;\;\;\;\;
\;\;\;\;\;\;\;\;\;\;\;\;\;\;\;\;\;\;\;\;\;\;\;\;\;\;\;\;
\;\;\;\;\;\;\;\;\;
D\big({T\!\mathstrut}_{R}\circ {V\!\!\mathstrut}_{R} \; \| \; Q \circ P\big)
\, + \,
\rho \underbrace{\big[D\big({T\!\mathstrut}_{R}\circ {V\!\!\mathstrut}_{R} \; \| \; {T\!\mathstrut}_{R} \times Q\big) - R\big]}_{\leq \, 0}
\label{eqExpressionR} \\
& \geq
\;\;\;\;\;\;\;\;\;\;\;\;\;\;\;
\min_{\substack{\\T(y), \, V(x\,|\,y)}}
\;\;\;\;\;\;\;\;\;\;\;\;\;\;\,
\Big\{
D(T\circ V \, \| \, Q \circ P)
\;\,
+ \;\,
\rho \,\big[D(T \circ V \, \| \, T \times Q) - R\big]
\Big\}
\label{eqMinSecond} \\
& =
\;\;\;\;\;\;\;\;\;\;\;\;\;\;\;\;\;\;\;\;\;\;\;\;\;\;
\;\;\;\;\;\;\;\;\;\;\;\;\;\;\;\;\;\;\;\;\;
D\big({T\!\mathstrut}_{\rho}\circ {V\!\!\mathstrut}_{\rho} \, \| \, Q \circ P\big)
\; + \;
\rho \,\big[D\big({T\!\mathstrut}_{\rho} \circ {V\!\!\mathstrut}_{\rho} \, \| \, {T\!\mathstrut}_{\rho} \times Q\big) - R\big]
\label{eqExpressionRho} \\
& \geq \,
\min_{\substack{\\T(y), \, V(x\,|\,y):\\ \\ D(T \,\circ \,V \, \| \, T \,\times\, Q) \;\leq\;
D({T\!\mathstrut}_{\rho} \,\circ \,{V\!\!\mathstrut}_{\rho} \, \| \, {T\!\mathstrut}_{\rho} \,\times\, Q)}}
\Big\{
D(T\circ V \, \| \, Q \circ P)\Big\}
\, + \,
\rho\,\big[D\big({T\!\mathstrut}_{\rho} \circ {V\!\!\mathstrut}_{\rho} \, \| \, {T\!\mathstrut}_{\rho} \times Q\big) - R\big].
\label{eqSandwich}
\end{align}
The first equality holds when the minimum (\ref{eqMinFirst})
is finite, and ${T\!\mathstrut}_{R}\circ {V\!\!\mathstrut}_{R}$ is a minimizing solution for a given $R$. 
In the second equality, ${T\!\mathstrut}_{\rho} \,\circ \,{V\!\!\mathstrut}_{\rho}$ is a minimizing solution of the minimum
(\ref{eqMinSecond}) for a given $\rho$.

Observe that when the minima (\ref{eqMinFirst}) and (\ref{eqMinSecond}) are equal,
then also there is equality between the expression (\ref{eqExpressionR}) and (\ref{eqMinSecond}).
Consequently, the minimizing distribution ${T\!\mathstrut}_{R}\circ {V\!\!\mathstrut}_{R}$ of (\ref{eqMinFirst})
is also a minimizing distribution of (\ref{eqMinSecond}) for a given $\rho$ in this case.
From the equality between (\ref{eqSolution}) and (\ref{eqExpressionR}) we conclude
that such solution must satisfy
$R = D\big({T\!\mathstrut}_{R} \circ {V\!\!\mathstrut}_{R} \; \| \; {T\!\mathstrut}_{R} \times Q\big)$ for $\rho > 0$,
or $R \geq D\big({T\!\mathstrut}_{R} \circ {V\!\!\mathstrut}_{R} \; \| \; {T\!\mathstrut}_{R} \times Q\big)$ for $\rho = 0$.

Conversely, the equality for $R = D\big({T\!\mathstrut}_{\rho} \circ {V\!\!\mathstrut}_{\rho} \, \| \, {T\!\mathstrut}_{\rho} \times Q\big)$,
or $R \geq D\big({T\mathstrut}_{0} \circ {V\!\mathstrut}_{0} \, \| \, {T\mathstrut}_{0} \times Q\big)$ for $\rho = 0$
follows from (\ref{eqSandwich}) by a sandwich proof.
In this case the second term in (\ref{eqExpressionRho}) becomes zero, while the difference in the square brackets is non-positive, and $D\big({T\!\mathstrut}_{\rho}\circ {V\!\!\mathstrut}_{\rho} \, \| \, Q \circ P\big)$ is equal to the minimum (\ref{eqMinFirst}).
Therefore ${T\!\mathstrut}_{\rho} \circ {V\!\!\mathstrut}_{\rho}$ is also a minimizing solution of (\ref{eqMinFirst}) for such $R$.
\end{proof}

\bigskip

\begin{lemma} [For $\,D(T\circ V \,\|\, T \times Q)\geq R\,$ part] \label{lemGeq}\newline
{\em For any $\rho \leq 1$}
\begin{align}
&
\;\;\;\;\,
\min_{\substack{\\T(y), \, V(x\,|\,y):\\
\\
D(T \,\circ \,V \, \| \, T \,\times\, Q)\;\geq\;
R
}}
\Big\{D(T\circ V \, \| \, Q \circ P)
\, + \,
D(T \circ V \, \| \, T \times Q) \, - \, R
\Big\}
\nonumber \\
&
\geq
\;\;\;\;\;
\min_{\substack{\\T(y), \, V(x\,|\,y)
}}
\;\;\;\;\;
\Big\{D(T\circ V \, \| \, Q \circ P)
\, + \,
\rho\big[D(T \circ V \, \| \, T \times Q) \, - \, R\big]
\Big\}.
\label{eqPartGeq}
\end{align}
{\em In the case of equality in (\ref{eqPartGeq}), any
minimizing solution of the LHS
is also a minimizing solution of the RHS
$\,{T\!\mathstrut}_{\rho} \circ {V\!\!\mathstrut}_{\rho}$
such that
$R = D\big({T\!\mathstrut}_{\rho} \circ {V\!\!\mathstrut}_{\rho} \, \| \, {T\!\mathstrut}_{\rho} \times Q\big)$,
or $R \leq D\big({T\mathstrut}_{1} \circ {V\!\mathstrut}_{1} \, \| \, {T\mathstrut}_{1} \times Q\big)$ for $\rho = 1$.
Conversely,
if there exists such solution ${T\!\mathstrut}_{\rho} \circ {V\!\!\mathstrut}_{\rho}$
minimizing the RHS,
then it is also a minimizing solution of the LHS and there is equality in (\ref{eqPartGeq}).}
\end{lemma}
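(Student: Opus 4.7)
My plan is to mirror the proof of Lemma~\ref{lemLeq}, swapping the role of the non-negativity constraint: instead of using $\rho \geq 0$ together with the constraint $D(T\circ V\,\|\,T\times Q) - R \leq 0$, I would use $1 - \rho \geq 0$ together with the constraint $D(T\circ V\,\|\,T\times Q) - R \geq 0$. The key algebraic identity is
\begin{equation*}
D(T\circ V \,\|\, Q\circ P) + \big[D(T\circ V\,\|\,T\times Q) - R\big] \;=\; D(T\circ V\,\|\,Q\circ P) + \rho\big[D(T\circ V\,\|\,T\times Q) - R\big] + (1-\rho)\big[D(T\circ V\,\|\,T\times Q) - R\big],
\end{equation*}
in which the last term is the non-negative product of two non-negative factors whenever $\rho \leq 1$ and $(T,V)$ is feasible for the LHS.

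The forward inequality (\ref{eqPartGeq}) then follows in one line: for any $(T,V)$ with $D(T\circ V\,\|\,T\times Q) \geq R$, dropping the last (non-negative) term gives a lower bound by the RHS objective; taking the minimum over feasible $(T,V)$ on the left and enlarging the feasible set to all $(T,V)$ on the right yields (\ref{eqPartGeq}).

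For the characterization of minimizers I would imitate the sandwich argument in Lemma~\ref{lemLeq}. Assuming equality holds in (\ref{eqPartGeq}) and that $(T_R,V_R)$ attains the LHS minimum, substituting $(T_R,V_R)$ into the chain above must collapse every inequality to equality. The collapse of the slack term forces $(1-\rho)\bigl[D(T_R\circ V_R\,\|\,T_R\times Q) - R\bigr] = 0$, which gives $D(T_R\circ V_R\,\|\,T_R\times Q) = R$ for $\rho < 1$ and imposes nothing beyond feasibility for $\rho = 1$; the collapse of the enlargement step identifies $(T_R,V_R)$ as a minimizer of the unconstrained RHS. Conversely, given an RHS minimizer $(T_\rho,V_\rho)$ with $D(T_\rho\circ V_\rho\,\|\,T_\rho\times Q) = R$ (or $\geq R$ when $\rho = 1$), the point is feasible for the LHS, both objectives agree at this point because the $\rho$-weighted slack vanishes or is absorbed, so the LHS value is pinched between the RHS value and the already-established lower bound.

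The main delicate point I expect to encounter is the $\rho = 1$ boundary: the slack inequality $(1-\rho)[D-R] \geq 0$ becomes vacuous, so the equality condition must be relaxed from $R = D(T_1\circ V_1\,\|\,T_1\times Q)$ to $R \leq D(T_1\circ V_1\,\|\,T_1\times Q)$, exactly as stated. A minor bookkeeping concern is ensuring a minimizer $(T_R,V_R)$ of the LHS exists to anchor the equality-case analysis; this follows from continuity of the objective on the compact simplex of joint distributions, in the same way as in Lemma~\ref{lemLeq}. Everything else is routine substitution into the identity above.
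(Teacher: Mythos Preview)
Your proposal is correct and follows essentially the same approach as the paper's proof, which mirrors Lemma~\ref{lemLeq} by writing the LHS as a chain of inequalities hinging on the non-negativity of $(1-\rho)\bigl[D(T\circ V\,\|\,T\times Q)-R\bigr]$ and then closing with the same sandwich argument for the converse. The paper's presentation is the displayed chain of inequalities you are describing in words; your explicit algebraic identity and your treatment of the $\rho=1$ boundary match the paper exactly.
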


\bigskip

\begin{proof}
Analogously to Lemma~\ref{lemLeq}:
\begin{align}
&
\;\;\;\;\;\;\;\;\;\;\;\;\;\;\;
\min_{\substack{\\T(y), \, V(x\,|\,y):\\
\\
D(T \,\circ \,V \, \| \, T \,\times\, Q)\;\geq\;
R
}}
\;\;\;\;\;\;\;\;\;\,
\Big\{D(T\circ V \, \| \, Q \circ P)
\, + \,
D(T \circ V \, \| \, T \times Q) \, - \, R
\Big\}
\nonumber \\
& \!\!
\overset{< \, \infty}{=}
\;\;\;\;\;\;\;\;\;\;\;\;\;\;\;\;\;\;\;\;\;\;\;\;\;\;
\;\;\;\;\;\;\;\;\;\;\;\;\;\;\;\;\;
D\big({T\!\mathstrut}_{R}\circ {V\!\!\mathstrut}_{R} \; \| \; Q \circ P\big)
\, + \,
D\big({T\!\mathstrut}_{R}\circ {V\!\!\mathstrut}_{R} \; \| \; T \times Q\big) \, - \, R
\nonumber \\
&
\!\!
\overset{\rho \, \leq \, 1}{\geq}
\;\;\;\;\;\;\;\;\;\;\;\;\;\;\;\;\;\;\;\;\;\;\;\;\;\;
\;\;\;\;\;\;\;\;\;\;\;\;\;\;\;
\;\,
D\big({T\!\mathstrut}_{R}\circ {V\!\!\mathstrut}_{R} \; \| \; Q \circ P\big)
\, + \,
\rho\underbrace{\big[D\big({T\!\mathstrut}_{R}\circ {V\!\!\mathstrut}_{R} \; \| \; T \times Q\big) \, - \, R\big]}_{\geq\,0}
\nonumber \\
&
\geq
\;\;\;\;\;\;\;\;\;\;\;\;\;\;\;\,
\min_{\substack{\\T(y), \, V(x\,|\,y)
}}
\;\;\;\;\;\;\;\;\;\;\;\;\;\;\,
\Big\{D(T\circ V \, \| \, Q \circ P)
\, + \,
\rho\,\big[D(T \circ V \, \| \, T \times Q) \, - \, R\big]
\Big\}
\nonumber \\
& =
\;\;\;\;\;\;\;\;\;\;\;\;\;\;\;\;\;\;\;\;\;\;\;\;\;\;
\;\;\;\;\;\;\;\;\;\;\;\;\;\;\;\;\;\;\;\,\,
D\big({T\!\mathstrut}_{\rho}\circ {V\!\!\mathstrut}_{\rho} \, \| \, Q \circ P\big)
\, + \,
\rho \,\big[D\big({T\!\mathstrut}_{\rho} \circ {V\!\!\mathstrut}_{\rho} \, \| \, {T\!\mathstrut}_{\rho} \times Q\big) \, - \, R\big]
\nonumber \\
& \geq \;
\;\;\;\;\;\;\;\;\;\;\;\;\;\;\;\;\;\;\;\;\;\;\;\;\;\;\;\;\;\;\;\;\;\;
\;\;\;\;\;\;\;\;\;\;\;\;\;\;\;\;\;\;\;\;\;\;\;\;\;\;\;\;\;\;\;\;\;\;
-
(1-\rho)\big[D\big({T\!\mathstrut}_{\rho} \circ {V\!\!\mathstrut}_{\rho} \, \| \, {T\!\mathstrut}_{\rho} \times Q\big) \, - \, R\big]
\, +
\nonumber \\
& \;\;\;\;\;\,
\min_{\substack{\\T(y), \, V(x\,|\,y):\\ \\ D(T \,\circ \,V \, \| \, T \,\times\, Q) \;\geq\;
D({T\!\mathstrut}_{\rho} \,\circ \,{V\!\!\mathstrut}_{\rho} \, \| \, {T\!\mathstrut}_{\rho} \,\times\, Q)}}
\Big\{
D(T\circ V \, \| \, Q \circ P)
\, + \,
D(T \circ V \, \| \, T \times Q) \, - \, R\Big\}
\nonumber
\end{align}
The argument is similar to the proof of Lemma~\ref{lemLeq}.
\end{proof}

\bigskip

Lemmas~\ref{lemLeq} and~\ref{lemGeq} can be combined into a single lemma:

\bigskip

\begin{lemma} [Supporting lines] \label{lemCombined}\newline
{\em For any $\rho \in [0, 1]$ the minimum (\ref{eqDefImplicit}) is lower-bounded as}
\begin{equation} \label{eqCombined}
{E\mathstrut}_{\!e}(R, Q)
\;\; \geq \;\;
\min_{\substack{\\T(y), \, V(x\,|\,y)}}
\Big\{
D(T\circ V \, \| \, Q \circ P)
\,
+ \,
\rho \big[D(T \circ V \, \| \, T \times Q) \,-\, R\big]
\Big\}.
\end{equation}
{\em In the case of equality in (\ref{eqCombined}), any
minimizing solution of the LHS
is also a minimizing solution of the RHS
$\,{T\!\mathstrut}_{\rho} \circ {V\!\!\mathstrut}_{\rho}$
such that
$R = D\big({T\!\mathstrut}_{\rho} \circ {V\!\!\mathstrut}_{\rho} \, \| \, {T\!\mathstrut}_{\rho} \times Q\big)$,
or $R \leq D\big({T\mathstrut}_{1} \circ {V\!\mathstrut}_{1} \, \| \, {T\mathstrut}_{1} \times Q\big)$ for $\rho = 1$,
or $R \geq D\big({T\mathstrut}_{0} \circ {V\!\mathstrut}_{0} \, \| \, {T\mathstrut}_{0} \times Q\big)$ for $\rho = 0$.
Conversely,
if there exists such solution ${T\!\mathstrut}_{\rho} \circ {V\!\!\mathstrut}_{\rho}$
minimizing the RHS,
then it is also a minimizing solution of the LHS
and there is equality in (\ref{eqCombined}).}
\end{lemma}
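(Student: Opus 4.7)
The plan is to derive Lemma~\ref{lemCombined} by splitting the minimum defining $E_e(R, Q)$ along the boundary induced by the $|\cdot|^+$ and then applying the two preceding lemmas on either side. Specifically, since $|a|^+ = 0$ when $a \leq 0$ and $|a|^+ = a$ when $a \geq 0$, I would first rewrite
\begin{equation*}
E_e(R,Q) \;=\; \min\Bigl\{\,\mathrm{LHS~of~}(\ref{eqPartLeq}),\;\,\mathrm{LHS~of~}(\ref{eqPartGeq})\,\Bigr\},
\end{equation*}
where the boundary $D(T\circ V \,\|\, T\times Q) = R$ is covered by both sub-minima without loss of generality.

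Next, I would observe that any $\rho \in [0,1]$ satisfies both $\rho \geq 0$ and $\rho \leq 1$, so Lemma~\ref{lemLeq} lower-bounds the first sub-minimum and Lemma~\ref{lemGeq} lower-bounds the second sub-minimum by exactly the same expression, namely the RHS of (\ref{eqCombined}). Consequently, the minimum of the two sub-minima, which is $E_e(R,Q)$, is also lower-bounded by that expression, establishing (\ref{eqCombined}).

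For the equality characterization, I would argue as follows. Suppose equality holds in (\ref{eqCombined}); then at least one of the two sub-minima equals the RHS of (\ref{eqCombined}), and any minimizing solution of $E_e(R,Q)$ belongs to one of the two regions. If it lies in the region $D(T\circ V \,\|\, T\times Q) \leq R$, the characterization from Lemma~\ref{lemLeq} applies, yielding $R = D({T\!\mathstrut}_{\rho}\circ {V\!\!\mathstrut}_{\rho} \,\|\, {T\!\mathstrut}_{\rho}\times Q)$ for $\rho > 0$ or $R \geq D({T\mathstrut}_0 \circ {V\!\mathstrut}_0 \,\|\, {T\mathstrut}_0 \times Q)$ for $\rho = 0$; if it lies in the region $D(T\circ V \,\|\, T\times Q) \geq R$, Lemma~\ref{lemGeq} applies analogously, giving $R = D({T\!\mathstrut}_{\rho}\circ {V\!\!\mathstrut}_{\rho} \,\|\, {T\!\mathstrut}_{\rho}\times Q)$ for $\rho < 1$ or $R \leq D({T\mathstrut}_1 \circ {V\!\mathstrut}_1 \,\|\, {T\mathstrut}_1 \times Q)$ for $\rho = 1$. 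The union of these cases is precisely the stated condition. Conversely, if an RHS minimizer ${T\!\mathstrut}_{\rho}\circ {V\!\!\mathstrut}_{\rho}$ satisfying one of the three conditions exists, then the relevant converse direction of Lemma~\ref{lemLeq} or Lemma~\ref{lemGeq} immediately supplies a matching LHS minimizer and equality in (\ref{eqCombined}).

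The main obstacle I anticipate is bookkeeping at the boundary $\rho\in\{0,1\}$, where one of the two sub-minima might not be tight: I must verify that when $\rho=0$ the minimizer need only satisfy $R \geq D({T\mathstrut}_0 \circ {V\!\mathstrut}_0 \,\|\, {T\mathstrut}_0 \times Q)$ (coming from Lemma~\ref{lemLeq} alone, since $\rho=0$ does not activate the additional term in Lemma~\ref{lemGeq}), and symmetrically for $\rho=1$. This is handled cleanly by noticing that in these edge cases only one of the two sub-minima contributes to the tight lower bound, so only the corresponding lemma's condition carries over.
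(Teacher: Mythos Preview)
Your proposal is correct and follows exactly the paper's approach: the paper's proof is the single line ``Follows since $E_e(R,Q)$ is the minimum between the LHS in (\ref{eqPartLeq}) and (\ref{eqPartGeq}),'' and you have unpacked precisely this, including the boundary bookkeeping at $\rho\in\{0,1\}$ that the paper leaves implicit.
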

\begin{proof}
Follows since ${E\mathstrut}_{\!e}(R, Q)$ is the minimum between the LHS in (\ref{eqPartLeq}) and (\ref{eqPartGeq}).
\end{proof}

\bigskip

The minimum on the RHS of (\ref{eqCombined}) 
can be explicitly solved:

\bigskip

\begin{lemma} [Explicit solution] \label{lemExplicit}\newline
{\em For $\rho > -1$}
\begin{equation} \label{eqExplicitSolution}
\min_{\substack{\\T(y), \, V(x\,|\,y)}}
\Big\{
D(T\circ V \, \| \, Q \circ P)
\,
+ \,
\rho D(T \circ V \, \| \, T \times Q)
\Big\}
\;\; = \;\;
{E\mathstrut}_{0}(\rho, Q),
\end{equation}
{\em where}
\begin{equation} \label{eqE0}
{E\mathstrut}_{0}(\rho, Q)
\; \triangleq \;
-\log \sum_{y}\bigg[\sum_{x}Q(x)P^{\frac{1}{1\,+\,\rho}}(y\,|\,x)\bigg]^{1\,+\,\rho},
\end{equation}
{\em and the unique minimizing distribution of (\ref{eqExplicitSolution}) is given by ${T\!\mathstrut}_{\rho}\circ {V\!\!\mathstrut}_{\rho}$ with}
\begin{align}
{T\!\mathstrut}_{\rho}(y) \; & \propto \; \bigg[\sum_{a}Q(a)P^{\frac{1}{1\,+\,\rho}}(y\,|\,a)\bigg]^{1\,+\,\rho},
\label{eqTrho} \\
{V\!\!\mathstrut}_{\rho}(x \,|\,y) \; & \propto \;
\;\;\;\;\;\;\;\;
Q(x)P^{\frac{1}{1\,+\,\rho}}(y\,|\,x).
\label{eqVrho}
\end{align}
\end{lemma}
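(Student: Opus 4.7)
The plan is to re-express the objective algebraically as
$F(T,V) \triangleq D(T\circ V \| Q\circ P) + \rho D(T\circ V \| T\times Q) = E_0(\rho, Q) + D(T\|T_\rho) + (1+\rho)\, D(V\|V_\rho \,|\, T)$,
from which both the value $E_0(\rho, Q)$ and the unique minimizer $(T_\rho, V_\rho)$ read off at once: for $\rho > -1$ both divergence terms are non-negative and vanish only at the claimed solution.

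First I would split the objective via $W(x,y) = T(y) V(x|y)$ as
$F(T,V) = -H(T) - \mathbb{E}_{T\circ V}[\log P(Y|X)] + (1+\rho)\, D(V \| Q \,|\, T)$,
using that both KL terms share the common conditional divergence $D(V\|Q\,|\,T) = \mathbb{E}_T \sum_x V(x|y) \log V(x|y)/Q(x)$. Then I would exploit the identity $\log P(y|x) = (1+\rho)\log P^{1/(1+\rho)}(y|x)$ to absorb the channel log-likelihood into the conditional divergence: setting $f(y) \triangleq \sum_x Q(x) P^{1/(1+\rho)}(y|x)$ so that $Q(x) P^{1/(1+\rho)}(y|x) = f(y)\, V_\rho(x|y)$, a short rewrite gives
$(1+\rho) D(V\|Q\,|\,T) - \mathbb{E}[\log P] = -(1+\rho)\sum_y T(y)\log f(y) + (1+\rho) D(V\|V_\rho \,|\, T)$.

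Substituting back into $F$ and using $T_\rho(y) = f(y)^{1+\rho}\big/\sum_{y'} f(y')^{1+\rho}$ together with $E_0(\rho, Q) = -\log \sum_y f(y)^{1+\rho}$, the remaining $T$-dependent term $\sum_y T(y)\bigl[\log T(y) - (1+\rho)\log f(y)\bigr]$ rearranges as $D(T\|T_\rho) + E_0(\rho, Q)$, which delivers the target identity. Since $1+\rho > 0$, both $D(T\|T_\rho)$ and $(1+\rho)D(V\|V_\rho\,|\,T)$ are non-negative and vanish simultaneously exactly when $T = T_\rho$ and $V(\cdot|y) = V_\rho(\cdot|y)$ on the support of $T_\rho$. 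This yields the formula $E_0(\rho,Q)$ and the uniqueness of the minimizer in one stroke.

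The proof is essentially a chain of algebraic rewritings, so the main obstacle is just spotting the right reorganization — specifically, recognizing that the seemingly channel-dependent term $\mathbb{E}[\log P]$ fuses with $(1+\rho)D(V\|Q\,|\,T)$ through the exponent $1/(1+\rho)$, isolating $E_0(\rho, Q)$ as a constant independent of $(T, V)$. Standard conventions ($0\log 0 = 0$, $D = \infty$ under non-absolute-continuity) handle boundary cases where $Q(x) = 0$ or $P(y|x) = 0$; the restriction $\rho > -1$ is invoked at exactly one point, to guarantee that $(1+\rho) D(V\|V_\rho\,|\,T) \geq 0$.
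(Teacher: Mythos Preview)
Your proof is correct and takes essentially the same approach as the paper: the paper's proof consists precisely of stating the identity
\[
D(T\circ V \,\|\, Q\circ P) + \rho\, D(T\circ V \,\|\, T\times Q)
\;=\;
D(T\,\|\,T_\rho) + (1+\rho)\,D(T\circ V \,\|\, T\circ V_\rho) + E_0(\rho,Q),
\]
noting that both divergences are non-negative for $\rho>-1$, and concluding. Your write-up supplies the algebraic verification of this identity that the paper leaves implicit, but the decomposition and the argument are the same (your $D(V\|V_\rho\,|\,T)$ is the paper's $D(T\circ V\,\|\,T\circ V_\rho)$).
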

\begin{proof}
\begin{align}
& \;\;
\min_{\substack{\\T(y), \, V(x\,|\,y)}}
\Big\{
D(T\circ V \, \| \, Q \circ P)
\;\,
+ \;\,
\rho D(T \circ V \, \| \, T \times Q)
\Big\}
\nonumber \\
= &
\;\;
\min_{\substack{\\T(y), \, V(x\,|\,y)}}
\Big\{
\underbrace{D(T \, \| \, {T\!\mathstrut}_{\rho})}_{\geq\,0}
\; + \;
(1 + \rho)\underbrace{D(T \circ V \, \| \, T \circ {V\!\!\mathstrut}_{\rho})}_{\geq\,0} \Big\}
\; + \; {E\mathstrut}_{0}(\rho, Q)
\label{eqDivergences} \\
= & \;\;\;\;
{E\mathstrut}_{0}(\rho, Q),
\nonumber
\end{align}
where the unique joint distribution minimizing the divergences in (\ref{eqDivergences}) is ${T\!\mathstrut}_{\rho} \circ {V\!\!\mathstrut}_{\rho}$ given by (\ref{eqTrho})-(\ref{eqVrho}).
\end{proof}

\bigskip

Lemmas~\ref{lemCombined} and~\ref{lemExplicit} result in the explicit expression for (\ref{eqDefImplicit}):

\bigskip

\begin{thm}[Explicit formula]\label{thm2}
\begin{align}
{E\mathstrut}_{\!e}(R, Q)
\;\; & \equiv \;\;
\max_{\substack{\\0 \, \leq \, \rho \, \leq \, 1}}
\;\;
\big\{
{E\mathstrut}_{0}(\rho, Q) - \rho R
\big\},
\label{eqExplicit}
\end{align}
{\em where ${E\mathstrut}_{0}(\rho, Q)$ is defined in (\ref{eqE0}) and the unique minimizing distribution of (\ref{eqDefImplicit}) is given by (\ref{eqTrho})-(\ref{eqVrho})
for some 
$\rho \in [0, 1]$ maximizing (\ref{eqExplicit}).}
\end{thm}

\bigskip

\begin{proof}
For $\rho \in [0, 1]$ and $\,R = D\big({T\!\mathstrut}_{\rho} \circ {V\!\!\mathstrut}_{\rho} \, \| \, {T\!\mathstrut}_{\rho} \times Q\big)\equiv \frac{\partial {E\mathstrut}_{0}(\rho, \, Q)}{\partial \rho}\,$
Lemma~\ref{lemCombined} gives equality in (\ref{eqCombined})
with the same unique distribution ${T\!\mathstrut}_{\rho} \circ {V\!\!\mathstrut}_{\rho}$
given by Lemma~\ref{lemExplicit}
minimizing both sides.
Observe that two different slope parameters $0 \leq \alpha < \beta \leq 1$
of two lower supporting lines from Lemma~\ref{lemCombined}
necessarily satisfy
$\,D\big({T\!\mathstrut}_{\beta} \circ {V\!\!\mathstrut}_{\beta} \, \| \, {T\!\mathstrut}_{\beta} \times Q\big)
\leq
D\big({T\!\mathstrut}_{\alpha} \circ {V\!\!\mathstrut}_{\alpha} \, \| \, {T\!\mathstrut}_{\alpha} \times Q\big)$.
Since $\frac{\partial {E\mathstrut}_{0}(\rho, \, Q)}{\partial \rho}$ is a continuous function of $\rho$,
this covers all $R$ such that
\begin{displaymath}
D\big({T\mathstrut}_{1} \circ {V\!\mathstrut}_{1} \, \| \, {T\mathstrut}_{1} \times Q\big)
\; \leq \; R \; \leq \; D\big({T\mathstrut}_{0} \circ {V\!\mathstrut}_{0} \, \| \, {T\mathstrut}_{0} \times Q\big).
\end{displaymath}

For $\rho = 1$ and $R \leq D\big({T\mathstrut}_{1} \circ {V\!\mathstrut}_{1} \, \| \, {T\mathstrut}_{1} \times Q\big) \equiv \frac{\partial {E\mathstrut}_{0}(\rho, \, Q)}{\partial \rho}\Big|_{\rho\,=\,1}\,$ Lemma~\ref{lemCombined} gives equality in (\ref{eqCombined})
with the same unique minimizing distribution ${T\!\mathstrut}_{1} \circ {V\!\!\mathstrut}_{1}$ on both sides.

Likewise for $\rho = 0$ and $R \geq D\big({T\mathstrut}_{0} \circ {V\!\mathstrut}_{0} \, \| \, {T\mathstrut}_{0} \times Q\big) \equiv \frac{\partial {E\mathstrut}_{0}(\rho, \, Q)}{\partial \rho}\Big|_{\rho\,=\,0}\,$ Lemma~\ref{lemCombined} gives equality in (\ref{eqCombined})
with the unique minimizing distribution ${T\!\mathstrut}_{0} \circ {V\!\!\mathstrut}_{0}\equiv Q\circ P$ on both sides.
\end{proof}

\bigskip

It can be noticed from the solution of Theorem~\ref{thm2} and Lemmas~\ref{lemLeq} and~\ref{lemGeq},
that the minima on the LHS of (\ref{eqPartLeq}) and (\ref{eqPartGeq}) coincide for
\begin{displaymath}
\frac{\partial {E\mathstrut}_{0}(\rho, \, Q)}{\partial \rho}\Big|_{\rho\,=\,1} \; \leq \; R \; \leq \; \frac{\partial {E\mathstrut}_{0}(\rho, \, Q)}{\partial \rho}\Big|_{\rho\,=\,0}.
\end{displaymath}
For $R < \frac{\partial{E\mathstrut}_{0}(\rho, \, Q)}{\partial \rho}\Big|_{\rho\,=\,1}$ the minimum on the LHS of (\ref{eqPartGeq}) in Lemma~\ref{lemGeq}, which is for the part\newline $\,D(T\circ V \,\|\, T \times Q)\geq R$, is lower than the minimum on the LHS of (\ref{eqPartLeq}) in Lemma~\ref{lemLeq}.
In this case 
the rare conditional type
${V\!\!\mathstrut}_{1}$ satisfying $\,D\big({T\mathstrut}_{1} \circ {V\!\mathstrut}_{1} \, \| \, {T\mathstrut}_{1} \times Q\big)> R\,$ is responsible for the error event.

Theorem~\ref{thm2} shows that the metric $\,\log \frac{{V}_{m}(x \, | \, y)}{Q(x)}\,$ used by the decoder in place of the ML metric $\,\log P(y \, | \, x)\,$ produces the same optimal i.i.d. random coding error exponent (\ref{eqExplicit}) \cite{Gallager73}.
The metric $\,\log \frac{{V}_{m}(x \, | \, y)}{Q(x)}\,$ can be called natural for random codes, because
it is exponentially optimal and its average (\ref{eqMMIlikeDec}) itself
is the exponent of a conditional type
which is meaningfully compared to the rate $R$, as in (\ref{eqDefImplicit}).
As a result, a linear {\em split} of the metric in two by subtraction of a constant $\Delta > 0$ causes a linear shift by this $\Delta$ in the exponent as a function of $R$.

\section{Correct-decoding exponent}\label{CorDecExp}


\bigskip

Next we examine the exponential optimality of the decoder (\ref{eqMMIlikeDec}) when the correct-decoding event is exponentially rare.
We use the same derivation methods as in the preceding section. The results will allow us to formulate the computation algorithm in Section~\ref{CordecExpComp}
and the stochastic adaptation scheme in Section~\ref{ChInAd}.

\bigskip

\subsection{Implicit expression}\label{CorDecExpA}

\bigskip

Here we continue to assume the decoding procedure according to (\ref{eqDecRule}), using two generally different functions/metrics.
The decoding according to the maximum of (\ref{eqMMIlikeDec}) is a special case of (\ref{eqDecRule}).
Let ${P\mathstrut}_{\!c}$
denote the i.i.d. random code ensemble average probability of the correct-decoding event: when the {\em true} message $m$ satisfies (\ref{eqDecRule}).
In the case of the decoder (\ref{eqMMIlikeDec}) this is the {\em strict correct decoding} event excluding the possible tie-breaking success.
Or, alternatively, assume that in case of a tie between different messages the decoder (\ref{eqMMIlikeDec}) declares an error and doesn't try to guess the true message.
The next lemma and theorem give the exponent of the correct-decoding event, described by (\ref{eqDecRule}).

\bigskip

\begin{lemma}\label{lemEps}\newline
{\em Let $B(T\circ V) \equiv D(T\circ V \,\|\, T \times Q)$ and $A(T\circ V)\leq D(T\circ V \,\|\, T \times Q)$. Then
for any $\epsilon > 0$}
\begin{align}
\frac{\log{P}_{c}}{-n}
\;\;
\leq \;\;
& \min_{\substack{\\\text{types} \;\; T(y), \, V(x\,|\,y):\\
\\
A(T \,\circ \,V)\;\geq\;
R\,+\,\epsilon
}}
\Big\{D(T\circ V \, \| \, Q \circ P)\Big\} \; + \; o(1),
\label{eqUboundEps} \\
\liminf_{n\,\rightarrow\,\infty} \; \frac{\log{P}_{c}}{-n}
\;\;
\geq \;\;
&
\;\,
\min_{\substack{\\ T(y), \, V(x\,|\,y):\\
\\
A(T \,\circ \,V)\;\geq\;
R\,-\,\epsilon
}}
\,\,
\Big\{D(T\circ V \, \| \, Q \circ P)\Big\},
\label{eqLboundEps}
\end{align}
{\em where the minimization in (\ref{eqUboundEps}) is over types, corresponding to the blocklength $n$, and $o(1)$ depends on $\epsilon$.
Given that the joint type 
of the transmitted and the received words 
satisfies $\,A(T\circ V) \,\leq\, R - \epsilon\,$
the exponent of
the correct-decoding event is $+\infty$.}
\end{lemma}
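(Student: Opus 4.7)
The plan is to analyze the correct-decoding event by conditioning on the joint empirical type of the transmitted codeword $\ubX_m$ (the true message) and the received word $\ubY$, and then invoke the method of types to count ``competing'' codewords. By (\ref{eqDecRule}) with $B(T\circ V)\equiv D(T\circ V\,\|\,T\times Q)$, once $(\ubX_m,\ubY)$ has joint type $T\circ V$ and $\alpha=A(T\circ V)$, the correct-decoding event is exactly the event
$$
\mathcal{E}_{T,V}\;\triangleq\;\big\{\;\text{no }m'\neq m\text{ has }D(T\circ V_{m'}\,\|\,T\times Q)\;\geq\;\alpha\;\big\}.
$$
Conditional on $\ubY$ of type $T$, each other codeword is i.i.d.$(Q)$ and independent of $\ubX_m$, so by standard type counting the probability that a single such codeword falls in a conditional type class $V'$ is, up to polynomial factors, $e^{-nD(T\circ V'\,\|\,T\times Q)}$. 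Summing over $V'$ with $D(T\circ V'\,\|\,T\times Q)\geq \alpha$, the expected number of ``beating'' codewords equals (to first exponential order) $e^{n(R-\alpha)}$.

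For the first bound (\ref{eqUboundEps}), let $(T^{*},V^{*})$ minimize $D(T\circ V\,\|\,Q\circ P)$ subject to $A(T\circ V)\geq R+\epsilon$ over types for blocklength $n$. The union bound on $\mathcal{E}_{T^{*},V^{*}}^{c}$ gives failure probability at most $\text{poly}(n)\cdot e^{-n\epsilon}$, so
$$
P_{c}\;\geq\;\Pr\big\{(\ubX_m,\ubY)\in T^{*}\circ V^{*}\big\}\cdot\big(1-\text{poly}(n)\,e^{-n\epsilon}\big)\;\geq\;e^{-n[D(T^{*}\circ V^{*}\,\|\,Q\circ P)\,+\,o(1)]},
$$
where the joint-type probability is estimated by the usual type-class size formula $|T\circ V|\cdot Q(V|T)^{n}P(V|T)^{n}\doteq e^{-nD(T\circ V\,\|\,Q\circ P)}$ (compare with the proof of Lemma~\ref{lemTypes} in the Appendix). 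This yields (\ref{eqUboundEps}).

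For the second bound (\ref{eqLboundEps}), split $P_{c}=P_{c}^{(1)}+P_{c}^{(2)}$ according to whether the joint type of $(\ubX_m,\ubY)$ satisfies $A(T\circ V)\geq R-\epsilon$ or not. The first part is bounded by the trivial estimate $\Pr\{\mathcal{E}_{T,V}\}\leq 1$ together with the type-probability estimate, yielding $P_{c}^{(1)}\leq\text{poly}(n)\cdot\exp\!\big(-n\min_{A\geq R-\epsilon}D(T\circ V\,\|\,Q\circ P)\big)$. The crux is handling $P_{c}^{(2)}$: when $\alpha=A(T\circ V)\leq R-\epsilon$, the expected number of beating codewords is at least of order $e^{n\epsilon}$, and a concentration argument (second-moment / Chernoff, exploiting the independence of the other codewords conditional on $\ubY$) shows that the count is in fact at least one with probability $1-\exp(-e^{n\epsilon/2})$, giving $\Pr\{\mathcal{E}_{T,V}\}\leq\exp(-e^{n\epsilon/2})$. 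Summing over the polynomially many types yields $P_{c}^{(2)}\leq\text{poly}(n)\exp(-e^{n\epsilon/2})$, which is negligible compared to any single exponential; this simultaneously establishes the claim that conditional on $A(T\circ V)\leq R-\epsilon$ the correct-decoding exponent is $+\infty$. Combining the two parts and taking $\liminf_{n\to\infty}$ gives (\ref{eqLboundEps}). The main obstacle is the doubly-exponential estimate for $P_{c}^{(2)}$; it must be uniform in $(T,V)$, but since $A$ is continuous in the support of $Q\circ P$ and types are dense, a single choice of concentration slack suffices.
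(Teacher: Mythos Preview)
Your approach is essentially the same as the paper's: condition on the joint type $T\circ V$ of the transmitted/received pair, and then analyze the probability that no competing codeword $m'$ has $D(T\circ V_{m'}\,\|\,T\times Q)\geq A(T\circ V)$. For (\ref{eqUboundEps}) both you and the paper use the union bound over competitors to show the conditional correct-decoding probability tends to~$1$ when $A(T\circ V)\geq R+\epsilon$; for (\ref{eqLboundEps}) both show that when $A(T\circ V)\leq R-\epsilon$ the conditional correct-decoding probability is doubly-exponentially small.

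The only noteworthy difference is in how that doubly-exponential bound is obtained. The paper does it more directly: it picks a \emph{single} conditional type $\widehat{V}$ with $D(T\circ\widehat{V}\,\|\,T\times Q)$ minimal subject to $D\geq A(T\circ V)$ (call this value $A_n$), and uses the exact product form
\[
\Pr\{\text{no competitor of type }\widehat{V}\}\;=\;\big(1-e^{-nA_n}\big)^{e^{nR_n}}
\;<\;\exp\!\big(-e^{n(R_n-A_n)}\big),
\]
via the elementary inequality $(1-x)^{-1/x}>e$. Your second-moment/Chernoff route is valid but heavier than needed; since the competitors are independent, the exact product already gives the super-exponential decay without any concentration machinery. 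A small side remark: you invoke continuity of $A$ to argue uniformity, but the lemma does not assume continuity (that hypothesis appears only later, in Theorem~\ref{thmCorr}); the paper's argument avoids this by working with $A_n$ defined through the type lattice, which is automatically uniform.
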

The proof is given in Appendix.

\bigskip

\begin{thm}[Correct-decoding exponent of the natural decoder]\label{thmCorr}\newline
{\em Let $B(T\circ V) \equiv D(T\circ V \,\|\, T \times Q)$ and let $A(T\circ V)\leq D(T\circ V \,\|\, T \times Q)$ be a continuous function of $T\circ V$ in the support of $Q\circ P$.
Then}
\begin{equation} \label{eqCorrAB}
\lim_{n\,\rightarrow \,\infty}\,\frac{\log{P}_{c}}{-n}
\;\; = \;\;
\min_{\substack{\\T(y), \, V(x\,|\,y):\\
\\
A(T \,\circ \,V)\;\geq\;
R
}}
\Big\{D(T\circ V \, \| \, Q \circ P)\Big\}
\;\; \triangleq \;\;
{E\mathstrut}_{\!c}^{A}(R, Q),
\end{equation}
{\em 
provided that the RHS is continuous at $R$.}
\end{thm}

\bigskip

\begin{proof}
According to Lemma~\ref{lemEps} the $\,\liminf\,$ of the exponent is lower-bounded by
$\,\lim_{\,\epsilon\,\rightarrow\,0}{E\mathstrut}_{\!c}^{A}(R - \epsilon,\, Q)$.
For the upper bound, for any $\epsilon > 0$ consider the minimizing solution ${T\mathstrut}^{*}\circ {V\mathstrut}^{*}$ of $\,{E\mathstrut}_{\!c}^{A}(R + 2\epsilon,\, Q)$. Let ${T\mathstrut}_{n}^{*}\circ {V\mathstrut}_{n}^{*}$ denote a quantized version of ${T\mathstrut}^{*}\circ {V\mathstrut}^{*}$
with precision $\frac{1}{n}$. Then ${T\mathstrut}_{n}^{*}\circ {V\mathstrut}_{n}^{*}$ is a joint type with denominator $n$.
By continuity, for sufficiently large $n$ we obtain $\,A\big({T\mathstrut}_{n}^{*} \circ {V\mathstrut}_{n}^{*}\big) \geq R + \epsilon\,$
and also $\,D\big({T\mathstrut}_{n}^{*} \,\circ\, {V\mathstrut}_{n}^{*} \; \| \; Q \,\circ\, P\big) \,\leq\, {E\mathstrut}_{\!c}^{A}(R + 2\epsilon,\, Q) + \epsilon$.
Comparing to the upper bound (\ref{eqUboundEps}) of Lemma~\ref{lemEps} we conclude that the $\,\limsup\,$ of the exponent of the correct-decoding event is upper-bounded by $\,\lim_{\,\epsilon\,\rightarrow\,0}{E\mathstrut}_{\!c}^{A}(R + 2\epsilon,\, Q)$.
Provided that the RHS of (\ref{eqCorrAB}) is continuous at $R$,
the bounds coincide and give
$\,\lim_{\,\epsilon\,\rightarrow\,0}{E\mathstrut}_{\!c}^{A}(R - \epsilon,\, Q) = \lim_{\,\epsilon\,\rightarrow\,0}{E\mathstrut}_{\!c}^{A}(R + 2\epsilon,\, Q) = {E\mathstrut}_{\!c}^{A}(R,\, Q)$.
\end{proof}

\bigskip

We compare the strict correct decoding exponent of the decoder (\ref{eqMMIlikeDec}), given by Theorem~\ref{thmCorr}
for the choice $A(T\circ V) \equiv B(T\circ V) \equiv D(T\circ V \,\|\, T \times Q)\,$:
\begin{align}
{E\mathstrut}_{\!c}(R, Q)
\; & \triangleq \;
\min_{\substack{\\T(y), \, V(x\,|\,y):\\
\\
D(T\,\circ\, V \,\|\, T\, \times\, Q)\;\geq\;
R
}}
\Big\{D(T\circ V \, \| \, Q \circ P)\Big\},
\label{eqExpNoTB}
\end{align}
and the correct-decoding exponent of the ML decoder (\ref{eqML}) with tie breaking\footnote{Note that the correct-decoding exponent of the decoder (\ref{eqMMIlikeDec})
{\em with} tie breaking has to be\newline ${E\mathstrut}_{\!c}(R, Q)$ for $R \leq {R\mathstrut}_{\max}$ and
$\,{E\mathstrut}_{\!c}({R\mathstrut}_{\max}, \,Q) + R - {R\mathstrut}_{\max}\,$ for $\,R > {R\mathstrut}_{\max}\,$,
where $\,{R\mathstrut}_{\max} \,\triangleq\, \max_{\,{E\mathstrut}_{\!c}(R, Q) \, < \, +\infty}\,\{R\,\}$.\newline
On the other hand, the correct decoding exponent of the decoder (\ref{eqML}) {\em without} tie breaking is \cite[eq.~24]{TridenskiZamir17}\newline
$\displaystyle\text{}\;\;\;\;\;\;\;\;\;\;\;\;\;
\;\;\;\;\;\;\;\;\;\;\;\;\;\;\;\;\;\;\;\;\;
\min_{\substack{\\T(y),\,V(x \,|\, y):\;\; R(T\,\circ \,V, \,Q, \,0)\;\geq\;R}}
\;\;\;\;\;\;\;\;
\Big\{D(T\circ V \; \| \; Q \circ P) \Big\}, \;\;\, \text{where}$ $R(T\circ V, \,Q, \,0)$ is itself a minimum:
\newline $\displaystyle R(T\circ V, \,Q, \,0) \; = \;\min_{\substack{\\\widehat{V}(\widehat{x}\,|\,y):
\\
\sum_{x, \, y}T(y) V(x\,|\,y)\widehat{V}(\widehat{x}\,|\, y)\log \frac{P(y\,|\,x)}{P(y\,|\,\widehat{x})}
\;\leq\; 0
\\ \\}}
\;\;\,
\Big\{D\big(T \circ \widehat{V} \; \| \; T \times Q\big) \Big\}$\newline
Comparing to (\ref{eqExpNoTB}), this exponent has $R(T\circ V, \,Q, \,0)$ in place of $D(T\circ V \,\|\, T \times Q)$ and can be higher than (\ref{eqExpNoTB}) for some $R$.} \cite[eq.~32]{TridenskiZamir17}:
\begin{align}
{E\mathstrut}_{\!c}^{ML}(R, Q)
\; & \triangleq \;
\;\;\;\;\;
\min_{\substack{\\T(y), \, V(x\,|\,y)
}}
\;\;\;\;\,
\left\{
D(T\circ V \, \| \, Q \circ P) \, + \,
\big| R - D(T\circ V \,\|\, T \times Q)\big|^{+}
\right\}.
\label{eqDefImplicitCor}
\end{align}
In what follows, first we derive the expression (\ref{eqDefImplicitCor}) as the i.i.d. ML correct-decoding exponent.
Afterwards we derive an explicit expression for (\ref{eqExpNoTB}) and (\ref{eqDefImplicitCor}).
Let ${P\mathstrut}_{\!c}^{ML}$
denote the i.i.d. random code ensemble average probability of the correct-decoding event in the ML decoding with tie breaking.
Then

\bigskip

\begin{thm}[Correct-decoding exponent of the ML decoder]\label{thmCorrML}
\begin{equation} \label{eqCorrML}
\lim_{n\,\rightarrow \,\infty}\,\frac{\log{P}_{c}^{ML}}{-n}
\;\; = \;\;
{E\mathstrut}_{\!c}^{ML}(R, Q).
\end{equation}
\end{thm}
The proof is given in Appendix.

\bigskip

The implicit expression (\ref{eqDefImplicitCor}) for the i.i.d. random code correct-decoding exponent of the ML decoder ${E\mathstrut}_{\!c}^{ML}(R, Q)$, can be easily compared to the constant-composition correct-decoding exponent:
\begin{align}
{E\mathstrut}_{\!c}^{ML}(R, Q) \;\;\; = \;\;\;\;
& \min_{\substack{\\T(y), \, V(x\,|\,y)
}}
\,
\Big\{
D(T\circ V \, \| \, Q \circ P) \; + \;
\big|R - \underbrace{D(T\circ V \,\|\, T \times Q)}_{= \; I(U\,\circ\, W) \, + \, D(U \,\|\, Q)}\big|^{+}
\Big\}
\nonumber \\
\equiv \;\;\;\; & \!\min_{\substack{\\U(x), \, W(y\,|\,x)
}}
\left\{
D(U\circ W \, \| \, Q \circ P) \, + \,
\big|R - I(U\circ W) - D(U \,\|\, Q)\big|^{+}
\right\}
\nonumber \\
\overset{U=\,Q}{\leq} \;\; &
\;\;\; \min_{\substack{\\W(y\,|\,x)
}}
\;\;\;
\left\{
D(Q\circ W \, \| \, Q \circ P) \, + \,
\big|R - I(Q\circ W)\big|^{+}
\right\}.
\label{eqCCCorDec}
\end{align}
After minimization over the codebook distribution $Q$, the expression (\ref{eqCCCorDec})
can be recognized as the constant-composition correct-decoding exponent \cite{DueckKorner79},
which is optimal.
Therefore the minimum of its achievable lower bound (\ref{eqDefImplicitCor}) over $Q$ is also optimal.
The proof of Theorem~\ref{thmCorrML} can be repeated for the constant composition codes,
in which case
the 
ML metric $\log P(y \, | \, x)$ with tie breaking leads to the exponent (\ref{eqCCCorDec}).

For completeness, the i.i.d. strict correct decoding exponent (\ref{eqExpNoTB}) of the decoder (\ref{eqMMIlikeDec}) can also be compared to its constant-composition counterpart:
\begin{align}
{E\mathstrut}_{\!c}(R, Q)
\;\;\; = \;\; & \;
\min_{\substack{\\T(y), \, V(x\,|\,y):\\
\\
D(T\,\circ\, V \,\|\, T\, \times\, Q)\;\geq\;
R
}}
\Big\{D(T\circ V \, \| \, Q \circ P)\Big\}
\nonumber \\
\leq \;\; & \;
\;\;\;\;
\min_{\substack{\\U(x), \, W(y\,|\,x):\\
\\
I(U\,\circ\, W)\;\geq\;
R
}}
\;\;\;\;
\Big\{D(U\circ W \, \| \, Q \circ P)\Big\}
\nonumber \\
\overset{U = \, Q}{\leq} & \;
\;\;\;\;\,
\min_{\substack{\\W(y\,|\,x):\\
\\
I(Q\,\circ\, W)\;\geq\;
R
}}
\;\;\;\,\,
\Big\{D(Q\circ W \, \| \, Q \circ P)\Big\}.
\label{eqCCCounterpart}
\end{align}
The proof of Theorem~\ref{thmCorr} can be repeated for the constant composition codes.
In that case the same metric
$\,\log \frac{{V}_{m}(x \, | \, y)}{Q(x)}\,$ from (\ref{eqMMIlikeDec}) becomes the pointwise mutual information, resulting in the maximum mutual information (MMI) decoder with the strict correct decoding exponent (\ref{eqCCCounterpart}).
As we shall see, the minimum of (\ref{eqExpNoTB}) over $Q$ achieves the optimum for all $R$. This is not the case with its constant-composition upper bound (\ref{eqCCCounterpart}). Since the mutual information $I(Q\,\circ\, W)$ is upper-bounded by the entropy of $Q$, for sufficiently large $R$
the minimum (\ref{eqCCCounterpart}) inevitably becomes $+\infty$.

\bigskip

\subsection{Explicit expression}\label{CorDecExpB}

\bigskip

We next proceed to obtain an explicit expression for 
(\ref{eqExpNoTB}) and (\ref{eqDefImplicitCor}).
The following parallels the corresponding lemmas for the error exponent case.

\bigskip

\begin{lemma} [For $\,D(T\circ V \,\|\, T \times Q)\geq R\,$ part] \label{lemGeqCD}\newline
{\em For any $\rho \leq 0$ the minimum (\ref{eqExpNoTB}) is lower-bounded as}
\begin{equation} \label{eqPartGeqCD}
\min_{\substack{\\T(y), \, V(x\,|\,y):\\
\\
D(T \,\circ \,V \, \| \, T \,\times\, Q)\;\geq\;
R
}}
D(T\circ V \, \| \, Q \circ P)
\;\; \geq \;\;
\min_{\substack{\\T(y), \, V(x\,|\,y):\\ \\\text{supp}(V)\;\subseteq\;\text{supp}(Q)}}
\Big\{
D(T\circ V \, \| \, Q \circ P)
\,
- \,
\rho \big[R \, - \, D(T \circ V \, \| \, T \times Q)\big]
\Big\}.
\end{equation}
{\em
In the case of equality in (\ref{eqPartGeqCD}), any
minimizing solution of the LHS
is also a minimizing solution of the RHS
$\,{T\!\mathstrut}_{\rho} \circ {V\!\!\mathstrut}_{\rho}$ such that
$R = D\big({T\!\mathstrut}_{\rho} \circ {V\!\!\mathstrut}_{\rho} \, \| \, {T\!\mathstrut}_{\rho} \times Q\big)$,
or $R \leq D\big({T\mathstrut}_{0} \circ {V\!\mathstrut}_{0} \, \| \, {T\mathstrut}_{0} \times Q\big)$ for $\rho = 0$.
Conversely,
if there exists such solution ${T\!\mathstrut}_{\rho} \circ {V\!\!\mathstrut}_{\rho}$
minimizing the RHS,
then it is also a minimizing solution of the LHS
and there is equality in (\ref{eqPartGeqCD}).}
\end{lemma}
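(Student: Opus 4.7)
The plan is to follow the chain-of-inequalities argument from the proof of Lemma~\ref{lemGeq}, now with $\rho\leq 0$ replacing $\rho\leq 1$ and with the LHS objective being simply $D(T\circ V\,\|\,Q\circ P)$ rather than $D(T\circ V\,\|\,Q\circ P)+D(T\circ V\,\|\,T\times Q)-R$. The forward inequality is immediate: on the feasible set of the LHS one has $D(T\circ V\,\|\,T\times Q)-R\geq 0$, and multiplying this non-negative quantity by the non-positive $\rho$ gives $\rho\bigl[D(T\circ V\,\|\,T\times Q)-R\bigr]\leq 0$. Hence
\begin{equation*}
D(T\circ V\,\|\,Q\circ P)\;\geq\;D(T\circ V\,\|\,Q\circ P)-\rho\bigl[R-D(T\circ V\,\|\,T\times Q)\bigr],
\end{equation*}
and taking the minimum of each side while enlarging the RHS constraint set from $\{D\geq R\}$ to $\{\text{supp}(V)\subseteq\text{supp}(Q)\}$ yields (\ref{eqPartGeqCD}).

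For the characterization of minimizers I would transcribe the sandwich from the proof of Lemma~\ref{lemGeq}. Suppose equality holds in (\ref{eqPartGeqCD}) and let $T_R\circ V_R$ achieve the LHS; then every step of the chain above must be tight, which forces $\rho\bigl[R-D(T_R\circ V_R\,\|\,T_R\times Q)\bigr]=0$. For $\rho<0$ this pins $R=D(T_R\circ V_R\,\|\,T_R\times Q)$, so $T_R\circ V_R$ is a minimizer of the RHS meeting the stated condition; for $\rho=0$ the equation is vacuous and only the LHS feasibility $R\leq D(T_0\circ V_0\,\|\,T_0\times Q)$ survives. Conversely, given a minimizer $(T_\rho,V_\rho)$ of the RHS that satisfies the stated tightness condition on $R$, the point is automatically feasible for the LHS, and the identity $\rho[R-D_\rho]=0$ (respectively $\leq 0$ when $\rho=0$) reverses the chain to close both inequalities, giving equality and exhibiting $(T_\rho,V_\rho)$ as a minimizer of the LHS.

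The main subtlety I anticipate, absent from Lemma~\ref{lemGeq}, is the explicit support restriction $\text{supp}(V)\subseteq\text{supp}(Q)$ on the RHS. Its role is purely to keep the objective well-defined: when $\rho<0$, allowing $V$ to place mass outside $\text{supp}(Q)$ would make $D(T\circ V\,\|\,T\times Q)=+\infty$ and turn the RHS into an indeterminate $+\infty-\infty$. Under the support condition both divergences are finite, the restriction does not alter the value of the minimum, and the proof proceeds in perfect parallel to the error-exponent case of Lemma~\ref{lemGeq}.
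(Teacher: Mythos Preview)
Your proposal is correct and follows essentially the same approach as the paper, which simply says ``Analogous to the proof of Lemma~\ref{lemLeq}.'' The chain-of-inequalities and sandwich argument you describe is exactly that analogy; the only cosmetic difference is that you cite Lemma~\ref{lemGeq} rather than Lemma~\ref{lemLeq} as your template, but since both earlier lemmas use the identical technique this is immaterial, and your explanation of why the support restriction $\text{supp}(V)\subseteq\text{supp}(Q)$ is needed (to avoid the $+\infty-\infty$ indeterminacy when $\rho<0$) is a helpful addition that the paper leaves implicit.
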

\begin{proof}
Analogous to the proof of Lemma~\ref{lemLeq}.
\end{proof}

\bigskip

\begin{lemma} [For $\,D(T\circ V \,\|\, T \times Q)\leq R\,$ part] \label{lemLeqCD}\newline
{\em For any $\rho \geq -1$}
\begin{align}
&
\;\;\;\;
\min_{\substack{\\T(y), \, V(x\,|\,y):\\
\\
D(T \,\circ \,V \, \| \, T \,\times\, Q)\;\leq\;
R
}}
\Big\{D(T\circ V \, \| \, Q \circ P)
\, + \,
R \, - \,
D(T \circ V \, \| \, T \times Q)
\Big\}
\nonumber \\
&
\geq
\;\,
\min_{\substack{\\T(y), \, V(x\,|\,y):\\ \\\text{supp}(V)\;\subseteq\;\text{supp}(Q)
}}
\;\;
\Big\{D(T\circ V \, \| \, Q \circ P)
\, - \,
\rho\big[R \, - \, D(T \circ V \, \| \, T \times Q)\big]
\Big\}.
\label{eqPartLeqCD}
\end{align}
{\em
In the case of equality in (\ref{eqPartLeqCD}), any
minimizing solution of the LHS
is also a minimizing solution of the RHS
$\,{T\!\mathstrut}_{\rho} \circ {V\!\!\mathstrut}_{\rho}$
such that
$R = D\big({T\!\mathstrut}_{\rho} \circ {V\!\!\mathstrut}_{\rho} \, \| \, {T\!\mathstrut}_{\rho} \times Q\big)$,
or $R \geq D\big({T\!\mathstrut}_{-1} \circ {V\!\mathstrut}_{-1} \, \| \, {T\!\mathstrut}_{-1} \times Q\big)$ for $\rho = -1$.
Conversely,
if there exists such solution ${T\!\mathstrut}_{\rho} \circ {V\!\!\mathstrut}_{\rho}$
minimizing the RHS,
then it is also a minimizing solution of the LHS
and there is equality in (\ref{eqPartLeqCD}).}
\end{lemma}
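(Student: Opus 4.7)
The plan is to mirror the chain-of-inequalities plus sandwich argument used in Lemma~\ref{lemGeq}, with the sign of the linear term flipped and the constraint reversed. The analogy is clean: Lemma~\ref{lemGeq} handled $D(T\circ V\,\|\,T\times Q)\geq R$ with a supporting line of slope $\rho\in[0,1]$, while here we handle $D(T\circ V\,\|\,T\times Q)\leq R$ with a supporting line of slope $-\rho$ for $\rho\in[-1,\infty)$, so that the critical sign condition becomes $1+\rho\geq 0$.

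For the forward (lower bound) direction, I would start by letting ${T}_{\!R}\circ {V\!\!\mathstrut}_{R}$ be a minimizer of the LHS, assuming the LHS is finite (else the inequality is trivial). The key manipulation is the observation
\begin{equation*}
[R - D({T}_{\!R}\circ {V\!\!\mathstrut}_{R}\,\|\,{T}_{\!R}\times Q)] \;\geq\; -\rho\,[R - D({T}_{\!R}\circ {V\!\!\mathstrut}_{R}\,\|\,{T}_{\!R}\times Q)],
\end{equation*}
which holds because the bracket is nonnegative by feasibility and $1+\rho\geq 0$. Adding $D({T}_{\!R}\circ {V\!\!\mathstrut}_{R}\,\|\,Q\circ P)$ to both sides and then enlarging the feasible set to all joint distributions with $\mathrm{supp}(V)\subseteq\mathrm{supp}(Q)$ yields the RHS. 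The support inclusion is automatic for ${T}_{\!R}\circ {V\!\!\mathstrut}_{R}$ since $D({T}_{\!R}\circ {V\!\!\mathstrut}_{R}\,\|\,Q\circ P)<\infty$.

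For the converse, assume ${T\!\mathstrut}_{\rho}\circ {V\!\!\mathstrut}_{\rho}$ attains the RHS and satisfies $R=D({T\!\mathstrut}_{\rho}\circ {V\!\!\mathstrut}_{\rho}\,\|\,{T\!\mathstrut}_{\rho}\times Q)$, or $R\geq D({T\!\mathstrut}_{-1}\circ {V\!\mathstrut}_{-1}\,\|\,{T\!\mathstrut}_{-1}\times Q)$ when $\rho=-1$. In both cases ${T\!\mathstrut}_{\rho}\circ {V\!\!\mathstrut}_{\rho}$ is feasible for the LHS, so evaluating the LHS objective at this point gives
\begin{equation*}
\mathrm{LHS}\;\leq\;D({T\!\mathstrut}_{\rho}\circ {V\!\!\mathstrut}_{\rho}\,\|\,Q\circ P) + [R - D({T\!\mathstrut}_{\rho}\circ {V\!\!\mathstrut}_{\rho}\,\|\,{T\!\mathstrut}_{\rho}\times Q)].
\end{equation*}
When $\rho>-1$, the bracket vanishes by the hypothesis $R=D_\rho$, and the resulting quantity equals the RHS (for which the $-\rho[R-D_\rho]$ term also vanishes). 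When $\rho=-1$, the two objectives coincide pointwise, and feasibility of ${T\!\mathstrut}_{-1}\circ {V\!\mathstrut}_{-1}$ for the LHS follows from $R\geq D_{-1}$. Combining with the forward direction yields equality, and certifies that ${T\!\mathstrut}_{\rho}\circ {V\!\!\mathstrut}_{\rho}$ minimizes the LHS as well.

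I expect the main obstacle to be the edge case $\rho=-1$, where the forward inequality $(1+\rho)[R-D]\geq 0$ collapses to an equality and one must be careful that the slack hypothesis $R\geq D_{-1}$ (rather than $R=D_{-1}$) still suffices to run the sandwich; this parallels the $\rho=1$ boundary treatment in Lemma~\ref{lemGeq}. A minor but worth-checking subtlety is the explicit support restriction $\mathrm{supp}(V)\subseteq\mathrm{supp}(Q)$ on the RHS, needed because $-\rho\,D(T\circ V\,\|\,T\times Q)$ can be $-\infty$ when $\rho>0$ if $V$ places mass outside $\mathrm{supp}(Q)$; finiteness of $D(T\circ V\,\|\,Q\circ P)$ on the LHS rules this out automatically, so the restriction does not narrow the relevant candidates and the forward inclusion of ${T}_{\!R}\circ {V\!\!\mathstrut}_{R}$ into the RHS feasible set goes through.
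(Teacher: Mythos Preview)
Your proposal is correct and follows exactly the paper's approach, which simply says the proof is analogous to that of Lemma~\ref{lemGeq}; the chain ${\rm LHS}=D({T}_{\!R}\circ{V}_{\!R}\,\|\,Q\circ P)+[R-D_R]\geq D({T}_{\!R}\circ{V}_{\!R}\,\|\,Q\circ P)-\rho[R-D_R]\geq{\rm RHS}$ via $(1+\rho)[R-D_R]\geq 0$, together with the sandwich for the converse, is precisely what the paper intends. One small slip in your side remark: after expanding $-\rho[R-D]$, the coefficient of $D(T\circ V\,\|\,T\times Q)$ is $+\rho$, so the support restriction on the RHS is really needed for $\rho<0$ (to avoid $+\infty-\infty$), not for $\rho>0$; your key observation that any finite-LHS minimizer automatically has $\mathrm{supp}(V)\subseteq\mathrm{supp}(Q)$ is what makes the forward step go through regardless.
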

\begin{proof}
Analogous to the proof of Lemma~\ref{lemGeq}.
\end{proof}

\bigskip

Now Lemmas~\ref{lemGeqCD} and~\ref{lemLeqCD} are combined to lower-bound (\ref{eqDefImplicitCor}):

\bigskip

\begin{lemma} [Supporting lines] \label{lemCombinedCD}\newline
{\em For any $\rho \in [-1, 0]$ the minimum (\ref{eqDefImplicitCor}) is lower-bounded as}
\begin{equation} \label{eqCombinedCD}
{E\mathstrut}_{\!c}^{ML}(R, Q)
\;\; \geq \;\;
\min_{\substack{\\T(y), \, V(x\,|\,y):\\ \\\text{supp}(V)\;\subseteq\;\text{supp}(Q)}}
\Big\{
D(T\circ V \, \| \, Q \circ P)
\,
- \,
\rho \big[R \, - \, D(T \circ V \, \| \, T \times Q)\big]
\Big\}.
\end{equation}
{\em
In the case of equality in (\ref{eqCombinedCD}), any
minimizing solution of the LHS
is also a minimizing solution of the RHS
$\,{T\!\mathstrut}_{\rho} \circ {V\!\!\mathstrut}_{\rho}$
such that
$R = D\big({T\!\mathstrut}_{\rho} \circ {V\!\!\mathstrut}_{\rho} \, \| \, {T\!\mathstrut}_{\rho} \times Q\big)$,
or $R \leq D\big({T\mathstrut}_{0} \circ {V\!\mathstrut}_{0} \, \| \, {T\mathstrut}_{0} \times Q\big)$ for $\rho = 0$,
or $R \geq D\big({T\!\mathstrut}_{-1} \circ {V\!\mathstrut}_{-1} \, \| \, {T\!\mathstrut}_{-1} \times Q\big)$ for $\rho = -1$.
Conversely,
if there exists such solution ${T\!\mathstrut}_{\rho} \circ {V\!\!\mathstrut}_{\rho}$
minimizing the RHS,
then it is also a minimizing solution of the LHS and there is equality in (\ref{eqCombinedCD}).}
\end{lemma}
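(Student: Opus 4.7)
The plan is to mirror exactly the strategy used in Lemma~\ref{lemCombined} for the error-exponent case. The first step is to observe that by the definition of the $|\cdot|^{+}$ operation, the minimum defining ${E\mathstrut}_{\!c}^{ML}(R, Q)$ in (\ref{eqDefImplicitCor}) splits into the minimum of two sub-minima: one over $\{T \circ V : D(T\circ V \,\|\, T \times Q) \geq R\}$, whose objective collapses to $D(T\circ V \,\|\, Q\circ P)$ (the LHS of (\ref{eqPartGeqCD})), and one over $\{T \circ V : D(T\circ V \,\|\, T \times Q) \leq R\}$, whose objective becomes $D(T\circ V \,\|\, Q\circ P) + R - D(T\circ V \,\|\, T\times Q)$ (the LHS of (\ref{eqPartLeqCD})). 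The constraint $\mathrm{supp}(V) \subseteq \mathrm{supp}(Q)$ that appears on the RHS of (\ref{eqCombinedCD}) is already implicit in the definition of ${E\mathstrut}_{\!c}^{ML}$, since any $V$ violating it makes $D(T\circ V \,\|\, Q\circ P) = +\infty$.

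Next, I would note that for $\rho \in [-1, 0]$, both preceding lemmas apply simultaneously: Lemma~\ref{lemGeqCD} requires $\rho \leq 0$, and Lemma~\ref{lemLeqCD} requires $\rho \geq -1$. Crucially, both lemmas produce the same RHS expression, namely the unconstrained minimum of $D(T\circ V \,\|\, Q\circ P) - \rho[R - D(T\circ V \,\|\, T\times Q)]$ over distributions with $\mathrm{supp}(V)\subseteq \mathrm{supp}(Q)$. Therefore each of the two sub-minima dominating ${E\mathstrut}_{\!c}^{ML}(R, Q)$ is bounded below by this common RHS, and taking the minimum of the two yields (\ref{eqCombinedCD}).

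For the equality characterization, suppose equality holds in (\ref{eqCombinedCD}). Then equality must hold in at least one of (\ref{eqPartGeqCD}) or (\ref{eqPartLeqCD}) (whichever one attains the minimum in the split of ${E\mathstrut}_{\!c}^{ML}$), so the minimizing distribution on the LHS of that lemma is also a minimizer of the common RHS and satisfies the corresponding constraint on $R$ from that lemma. This gives exactly the three cases in the statement: $R = D({T\!\mathstrut}_{\rho} \circ {V\!\!\mathstrut}_{\rho} \,\|\, {T\!\mathstrut}_{\rho} \times Q)$ for $\rho \in (-1, 0)$; the endpoint $\rho = 0$ gives $R \leq D({T\mathstrut}_{0} \circ {V\!\mathstrut}_{0} \,\|\, {T\mathstrut}_{0} \times Q)$ from Lemma~\ref{lemGeqCD}; and $\rho = -1$ gives $R \geq D({T\!\mathstrut}_{-1} \circ {V\!\mathstrut}_{-1} \,\|\, {T\!\mathstrut}_{-1} \times Q)$ from Lemma~\ref{lemLeqCD}. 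The converse proceeds in the reverse direction: existence of such a minimizer of the RHS triggers the converse half of the relevant one-sided lemma, producing equality there and hence equality in (\ref{eqCombinedCD}) after re-combining the two pieces.

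The entire argument is essentially a one-line bookkeeping proof, as was the case for Lemma~\ref{lemCombined}. The only delicate point, and the main thing to verify carefully, is that the boundary cases $\rho = 0$ and $\rho = -1$ are correctly attributed to the single one-sided lemma that still applies at that endpoint, so that the constraint on $R$ is stated with the correct inequality direction; but this is immediate once one recognizes which of the two sub-minima of ${E\mathstrut}_{\!c}^{ML}$ is being activated at each endpoint.
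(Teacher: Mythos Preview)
Your proposal is correct and follows exactly the paper's approach: the paper's proof is the single sentence ``By Lemmas~\ref{lemGeqCD} and~\ref{lemLeqCD} since ${E\mathstrut}_{\!c}^{ML}(R, Q)$ is the minimum between the LHS in (\ref{eqPartGeqCD}) and (\ref{eqPartLeqCD}),'' and you have simply unpacked that sentence, including the equality characterization, in more detail.
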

\begin{proof}
By Lemmas~\ref{lemGeqCD} and~\ref{lemLeqCD} since ${E\mathstrut}_{\!c}^{ML}(R, Q)$ is the minimum between the LHS in (\ref{eqPartGeqCD}) and (\ref{eqPartLeqCD}).
\end{proof}

\bigskip

Explicit solution of the RHS of (\ref{eqCombinedCD}) gives

\bigskip

\begin{lemma} [Explicit solution] \label{lemExplicitSolutionCD}\newline
{\em For\footnote{A solution can be obtained also for $\rho < -1$ by maximization of a convex ($\cup$) function.} $\rho \geq -1$}
\begin{equation} \label{eqExplicitSolCD}
\min_{\substack{\\T(y), \, V(x\,|\,y):\\ \\\text{supp}(V)\;\subseteq\;\text{supp}(Q)}}
\Big\{
D(T\circ V \, \| \, Q \circ P)
\,
+ \,
\rho D(T \circ V \, \| \, T \times Q)
\Big\}
\;\; = \;\;
{E\mathstrut}_{0}(\rho, Q),
\end{equation}
{\em where ${E\mathstrut}_{0}(\rho, Q)$ is given by (\ref{eqE0})
for $\rho > -1$ and}
\begin{equation} \label{eqE0Minus1}
{E\mathstrut}_{0}(-1, Q) \; \triangleq \;
\lim_{\substack{\\\rho \; \searrow \; -1}}{E\mathstrut}_{0}(\rho, Q)
\; = \;
-\log \,\sum_{y} \,\max_{\substack{\\ \\x: \; Q(x)\,>\, 0}}P(y\,|\,x).
\end{equation}
{\em If $\rho > -1$,
then the unique minimizing solution of (\ref{eqExplicitSolCD}) is given by (\ref{eqTrho})-(\ref{eqVrho}).\newline
If $\rho = - 1$,
then all minimizing solutions of (\ref{eqExplicitSolCD}) are any ${T\!\mathstrut}_{-1}\circ {V\!\mathstrut}_{-1}$ such that}
\begin{align}
{T\!\mathstrut}_{-1}(y) \;\; & \propto \;\;
\;\;\;\;\;\;\;\;\;\;\;\;\;\;\;\;\;\;\;\;\;\;\;\;\,
\max_{\substack{\\a: \; Q(a)\,>\, 0}}P(y\,|\,a),
\label{eqTminus1} \\
{V\!\mathstrut}_{-1}(x \,|\,y) \;\; & = \;\;
0,
\;\;\;\;\;
\forall \; x \, \notin \, {\displaystyle \arg\max_{\substack{\\a: \; Q(a)\,>\, 0}}P(y\,|\,a)}.
\label{eqVminus1}
\end{align}
\end{lemma}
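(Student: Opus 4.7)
The plan is to split the proof into the two regimes $\rho > -1$ and $\rho = -1$, since at the boundary $\rho = -1$ the weight $(1+\rho)$ on the conditional divergence term degenerates. In both cases, the support constraint $\text{supp}(V)\subseteq\text{supp}(Q)$ plus the implicit finiteness requirement $\text{supp}(T\circ V)\subseteq\text{supp}(Q\circ P)$ keep the objective well-defined, and I will check afterwards that the claimed minimizer respects these constraints.

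For $\rho > -1$, I would mirror the proof of Lemma~\ref{lemExplicit} almost verbatim. The key algebraic identity is
\begin{equation*}
D(T\circ V \,\|\, Q \circ P) + \rho\, D(T\circ V \,\|\, T \times Q) \;=\; D(T\,\|\,{T\!\mathstrut}_{\rho}) + (1+\rho)\,D(T\circ V \,\|\, T\circ {V\!\!\mathstrut}_{\rho}) + {E\mathstrut}_{0}(\rho, Q),
\end{equation*}
which is verified by substituting the definitions (\ref{eqTrho})--(\ref{eqVrho}) and using $\,Q^{1+\rho}(x)P(y\,|\,x) = e^{-{E\mathstrut}_{0}(\rho,Q)}\,{T\!\mathstrut}_{\rho}(y)\,{V\!\!\mathstrut}_{\rho}^{1+\rho}(x\,|\,y)$. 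Since $1+\rho>0$, both divergences on the RHS are nonnegative and vanish simultaneously only at $T\circ V = {T\!\mathstrut}_{\rho}\circ {V\!\!\mathstrut}_{\rho}$, yielding uniqueness. The support constraint is non-binding because ${V\!\!\mathstrut}_{\rho}(x\,|\,y)\propto Q(x)P^{1/(1+\rho)}(y\,|\,x)$ is automatically zero outside $\text{supp}(Q)$.

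For $\rho = -1$, the identity collapses, so I would argue directly. Using finiteness, the objective simplifies to
\begin{equation*}
D(T\circ V \,\|\, Q\circ P) - D(T\circ V \,\|\, T\times Q) \;=\; \sum_{x,y} T(y)V(x\,|\,y)\,\log\frac{T(y)}{P(y\,|\,x)}.
\end{equation*}
Because $V(x\,|\,y) > 0$ forces $x\in\text{supp}(Q)$ and $P(y\,|\,x)>0$, I can pointwise bound $P(y\,|\,x) \leq M(y) := \max_{a:\,Q(a)>0} P(y\,|\,a)$ to get a first lower bound $\sum_y T(y)\log\frac{T(y)}{M(y)}$; adding and subtracting $\log\|M\|_1$ then rewrites this as $D(T\,\|\,M/\|M\|_1) - \log\|M\|_1 \geq {E\mathstrut}_{0}(-1,Q)$ by Gibbs. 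Equality in the first bound forces $V(x\,|\,y)$ to be supported on $\arg\max_{a:\,Q(a)>0} P(y\,|\,a)$ (giving (\ref{eqVminus1})), and equality in Gibbs forces $T\propto M$ (giving (\ref{eqTminus1})); any such $T\circ V$ achieves the minimum, explaining the non-uniqueness.

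The only real obstacle is the $\rho=-1$ case, where the standard Pythagorean identity used for $\rho>-1$ is unavailable and where uniqueness genuinely fails. Handling it requires the direct pointwise bound above together with a careful treatment of the maximizer set; as a sanity check, one may alternatively verify $\,{E\mathstrut}_{0}(-1,Q) = \lim_{\rho\searrow -1}{E\mathstrut}_{0}(\rho,Q)\,$ via a Laplace-type asymptotic on $[\sum_x Q(x)P^{1/(1+\rho)}(y\,|\,x)]^{1+\rho}$, which reproduces (\ref{eqE0Minus1}) and reconciles the two regimes.
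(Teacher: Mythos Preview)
Your proposal is correct and follows essentially the same route as the paper. For $\rho>-1$ you invoke exactly the identity the paper uses in Lemma~\ref{lemExplicit}; for $\rho=-1$ your two-step bound (first $P(y\,|\,x)\le M(y)$ to eliminate $V$, then Gibbs on $T$ against $M/\|M\|_1$) is precisely the paper's decomposition $\min_T\{D(T\,\|\,T_{-1})\}+E_0(-1,Q)$, only with the pointwise maximization over $V$ spelled out rather than absorbed into the one-line rewrite.
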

\begin{proof}
The case $\rho > -1$ follows by Lemma~\ref{lemExplicit}.
For $\rho = -1\,$:
\begin{align}
& \;\;
\min_{\substack{\\T(y), \, V(x\,|\,y):\\ \\\text{supp}(V)\;\subseteq\;\text{supp}(Q)}}
\Big\{
D(T\circ V \, \| \, Q \circ P)
\,
- \,
D(T \circ V \, \| \, T \times Q)
\Big\}
\nonumber \\
= &
\;\;
\min_{\substack{\\T(y), \, V(x\,|\,y):\\ \\\text{supp}(V)\;\subseteq\;\text{supp}(Q)}}
\bigg\{
\sum_{x,\,y}T(y)V(x\, | \,y)\log\frac{T(y)}{P(y\, | \, x)} \bigg\}
\, + \, R
\nonumber \\
= &
\;\;\;\;\;\;\;\;\,\,
\min_{\substack{\\T(y)}}
\;\;\;\;\;\;\;\,
\Big\{
\underbrace{D(T \, \| \, {T\!\mathstrut}_{-1})}_{\geq\,0} \Big\}
\; + \; {E\mathstrut}_{0}(-1, Q) \, + \, R
\nonumber \\
= & \;\;\;\;
{E\mathstrut}_{0}(-1, Q) \; + \; R,
\nonumber
\end{align}
where the minimum is achieved with any $V(x\, | \,y)$ satisfying (\ref{eqVminus1}) and the unique ${T\!\mathstrut}_{-1}(y)$
given by (\ref{eqTminus1}).
\end{proof}

\bigskip

Since the minimizing distribution ${V\!\mathstrut}_{-1}(x \,|\,y)$ of (\ref{eqExplicitSolCD}) with $\rho = -1$
is allowed to be arbitrary inside its support which is restricted according to (\ref{eqVminus1}),
for convenience let us define
\begin{align}
{R\mathstrut}_{\,-1}^{\,-}(Q) \;\; & \triangleq \;\; \min_{\substack{\\{\;\;\;V\!\mathstrut}_{-1}(x\,|\,y)}}\, D\big({T\!\mathstrut}_{-1} \circ {V\!\mathstrut}_{-1} \, \| \, {T\!\mathstrut}_{-1} \times Q\big),
\label{eqRMinus} \\
{R\mathstrut}_{\,-1}^{\,+}(Q) \;\; & \triangleq \;\; \max_{\substack{\\{\;\;\;V\!\mathstrut}_{-1}(x\,|\,y)}}\, D\big({T\!\mathstrut}_{-1} \circ {V\!\mathstrut}_{-1} \, \| \, {T\!\mathstrut}_{-1} \times Q\big),
\label{eqRPlus}
\end{align}
where the $\min$ and $\max$ are over ${V\!\mathstrut}_{-1}(x \,|\,y)$ as in (\ref{eqVminus1}).

Lemmas~\ref{lemGeqCD}, ~\ref{lemCombinedCD}, and~\ref{lemExplicitSolutionCD} result in the explicit expression for (\ref{eqExpNoTB}) and (\ref{eqDefImplicitCor}):

\bigskip

\begin{thm}[Explicit formula]\label{thm3}
\begin{align}
{E\mathstrut}_{\!c}(R, Q)
\;\; & = \;\;
\max_{\substack{\\\!\!\!-1 \, \leq \, \rho \, \leq \, 0}}
\;\;
\big\{
{E\mathstrut}_{0}(\rho, Q) - \rho R
\big\},
\;\;\;\; \forall \; R \; \leq \; {R\mathstrut}_{\,-1}^{\,+}(Q),
\label{eqExplicitSCD} \\
{E\mathstrut}_{\!c}^{ML}(R, Q)
\;\; & \equiv \;\;
\max_{\substack{\\\!\!\!-1 \, \leq \, \rho \, \leq \, 0}}
\;\;
\big\{
{E\mathstrut}_{0}(\rho, Q) - \rho R
\big\},
\label{eqExplicitCD}
\end{align}
{\em where ${E\mathstrut}_{0}(\rho, Q)$ is defined by (\ref{eqE0}) and (\ref{eqE0Minus1}),
${R\mathstrut}_{\,-1}^{\,+}(Q)$
defined in (\ref{eqRPlus}),
and ${T\!\mathstrut}_{-1}\circ {V\!\mathstrut}_{-1}$ is a family of distributions defined by (\ref{eqTminus1})-(\ref{eqVminus1}).\newline
If the RHS of (\ref{eqExplicitCD}) is maximized by $\rho = - 1$,
then all minimizing solutions of the LHS 
are given by ${T\!\mathstrut}_{-1}\circ {V\!\mathstrut}_{-1}$ as in (\ref{eqTminus1})-(\ref{eqVminus1}) such that $D\big({T\!\mathstrut}_{-1} \circ {V\!\mathstrut}_{-1} \, \| \, {T\!\mathstrut}_{-1} \times Q\big)\leq R$.\newline
If the RHS of (\ref{eqExplicitSCD}) is maximized by $\rho = - 1$,
then all minimizing solutions of the LHS 
are given by ${T\!\mathstrut}_{-1}\circ {V\!\mathstrut}_{-1}$ as in (\ref{eqTminus1})-(\ref{eqVminus1}) such that $D\big({T\!\mathstrut}_{-1} \circ {V\!\mathstrut}_{-1} \, \| \, {T\!\mathstrut}_{-1} \times Q\big)= R$.\newline
If the RHS of (\ref{eqExplicitCD}) is maximized by $\rho \in (-1, 0]$,
then the unique minimizing solution of the LHS 
is given by (\ref{eqTrho})-(\ref{eqVrho}).
Same for (\ref{eqExplicitSCD}).}   
\end{thm}

\bigskip

\begin{proof}
For $\rho = -1$ and $R \,\geq\, {R\mathstrut}_{\,-1}^{\,-}(Q)\,$
Lemma~\ref{lemCombinedCD} gives equality in (\ref{eqCombinedCD}) with all possible solutions of the minimum on the LHS 
of (\ref{eqCombinedCD}) as given by Lemma~\ref{lemExplicitSolutionCD}
in (\ref{eqTminus1})-(\ref{eqVminus1}) and such that $D\big({T\!\mathstrut}_{-1} \circ {V\!\mathstrut}_{-1} \, \| \, {T\!\mathstrut}_{-1} \times Q\big)\leq R$.
Likewise, Lemma~\ref{lemGeqCD} gives equality in (\ref{eqPartGeqCD}) with all possible solutions of the minimum on the LHS 
of (\ref{eqPartGeqCD}) as given by Lemma~\ref{lemExplicitSolutionCD}
in (\ref{eqTminus1})-(\ref{eqVminus1}) and such that $D\big({T\!\mathstrut}_{-1} \circ {V\!\mathstrut}_{-1} \, \| \, {T\!\mathstrut}_{-1} \times Q\big)= R$.
The minimum $\,{R\mathstrut}_{\,-1}^{\,-}(Q)\,$
is achieved by
\begin{align}
{V\!\mathstrut}_{-1}(x \,|\,y)  \;\; & \propto \;\;
\left\{
\begin{array}{l l}
Q(x), &  \;\;\;\;\;\; x \, \in \, {\displaystyle \arg\max_{\substack{\\a: \; Q(a)\,>\, 0}}P(y\,|\,a)}, \\
0, & \;\;\;\;\;\; \text{else}.
\end{array}
\right.
\nonumber
\end{align}
This allows for the continuity $\,\lim_{\,\rho \, \searrow \, -1}
D\big({T\!\mathstrut}_{\rho} \circ {V\!\!\mathstrut}_{\rho} \, \| \, {T\!\mathstrut}_{\rho} \times Q\big) \,=\, {R\mathstrut}_{\,-1}^{\,-}(Q)$.
The rest of the proof for $R \,\leq\, {R\mathstrut}_{\,-1}^{\,-}(Q)\,$
also follows by Lemmas~\ref{lemGeqCD}, ~\ref{lemCombinedCD}, and~\ref{lemExplicitSolutionCD} and
is similar to the proof of Theorem~\ref{thm2}.
\end{proof}

\bigskip

It follows from Theorem~\ref{thm3}
that the minima (\ref{eqDefImplicitCor}) and (\ref{eqExpNoTB}) (the latter appears also on the LHS of (\ref{eqPartGeqCD}) in Lemma~\ref{lemGeqCD})
coincide for
$
R \, \leq \, {R\mathstrut}_{\,-1}^{\,+}(Q).
$
For greater $R$ the ML exponent (\ref{eqDefImplicitCor}) continues to grow with the increase of $R$ with constant slope $1$, according to (\ref{eqExplicitCD}).
For such $R$ the achieving types ${V\!\mathstrut}_{-1}$ with high probability are found in the codebook.
Both (\ref{eqExpNoTB}) and (\ref{eqDefImplicitCor}) have a supporting line $\,E(R) = {E\mathstrut}_{0}(-1, Q) \, + \, R\,$ of slope $1$.
This supporting line
is invariant in a sense that it depends only on the support of the distribution $Q(x)$ according to the expression for ${E\mathstrut}_{0}(-1, Q)$ in (\ref{eqE0Minus1}).
Since the rate $R = D\big({T\!\mathstrut}_{-1} \circ {V\!\mathstrut}_{-1} \, \| \, {T\!\mathstrut}_{-1} \times Q\big)$ where both exponents meet this supporting line can be made arbitrarily large by reducing $Q(x)$ on some letter $x$,
the minimum of (\ref{eqExpNoTB}) can always achieve the minimum of (\ref{eqDefImplicitCor}):
\begin{equation} \label{eqOptimum}
\min_{\substack{\\Q}}\, {E\mathstrut}_{\!c}(R, Q) \;\; = \;\;
\min_{\substack{\\Q}}\, {E\mathstrut}_{\!c}^{ML}(R, Q), \;\;\;\;\;\; \forall \; R.
\end{equation}
Therefore both exponents (\ref{eqExpNoTB}) and (\ref{eqDefImplicitCor}) achieve the optimum.
For comparison, this is not possible with (\ref{eqCCCounterpart}) where the mutual information is bounded by the entropy in the support of $Q$.

\bigskip

\section{Iterative minimization of the correct-decoding exponent}\label{CordecExpComp}

\bigskip

We propose a procedure of iterative minimization with (\ref{eqDefImplicitCor}) or (\ref{eqExpNoTB}) at fixed rate $R$ which leads to (\ref{eqOptimum}).
This can be termed also as a fixed-rate computation of the correct-decoding exponent and is different than the
algorithm of Arimoto \cite{Arimoto76} for computation of the exponent function $\min_{\,Q}{E\mathstrut}_{0}(\rho, Q)$.
The difference is both in the fact that the Arimoto algorithm is a fixed-slope computation but also the computation itself is different.
The advantage of the fixed-rate computation over the fixed-slope is that we know how to translate it to a stochastic procedure.

The next lemma presents and characterizes the iterative minimization procedure for the ML exponent (\ref{eqDefImplicitCor}).

\bigskip

\begin{lemma} [Monotonicity for ML] \label{lemMonotML}\newline
{\em An iterative update of the parameter $Q$ in (\ref{eqDefImplicitCor}) by its minimizing solution $\breve{T\mathstrut}(y)\breve{V\mathstrut}(x\,|\,y)$:}
\begin{equation} \label{eqUpdate}
{Q\mathstrut}_{\ell\,+\,1}(x) \;\; \leftarrow \;\; \sum_{y}\breve{T\mathstrut}_{\ell}(y)\breve{V\mathstrut}_{\!\ell}(x\,|\,y)
\end{equation}
{\em results in a monotonically non-increasing sequence $\,{\left\{{E\mathstrut}_{\!c}^{ML}(R, {Q\mathstrut}_{\ell})\right\}\mathstrut}_{\ell \, = \, 0}^{+\infty}\,$ of (\ref{eqDefImplicitCor}).\newline
At each step, the sequence decreases at least by an amount  $\,(1 \, + \, {\hat{\rho}\mathstrut}_{\ell\,+\,1})D({Q\mathstrut}_{\ell\,+\,1} \, \| \, {Q\mathstrut}_{\ell})$, where $\,{\hat{\rho}\mathstrut}_{\ell\,+\,1} \in [-1, 0]\,$ is a parameter of some supporting line (\ref{eqCombinedCD}) touching
the graph of
$\,{E\mathstrut}_{\!c}^{ML}(R, {Q\mathstrut}_{\ell\,+\,1})\,$ at $R$.}
\end{lemma}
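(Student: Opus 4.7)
The plan rests on two ingredients: a Pythagorean-style identity produced by the update rule, and the supporting-line characterization of Lemma~\ref{lemCombinedCD} used \emph{asymmetrically} at $Q_\ell$ and $Q_{\ell+1}$. As a first step I would observe that, because $Q_{\ell+1}$ is defined as the $x$-marginal of $\breve{T}_\ell \circ \breve{V}_\ell$, a one-line log-ratio computation yields
\begin{equation*}
D(\breve{T}_\ell \circ \breve{V}_\ell \,\|\, Q_\ell \circ P) \,-\, D(\breve{T}_\ell \circ \breve{V}_\ell \,\|\, Q_{\ell+1} \circ P) \;=\; D(Q_{\ell+1} \,\|\, Q_\ell),
\end{equation*}
and the same identity holds with $\breve{T}_\ell \times Q_\ell$ and $\breve{T}_\ell \times Q_{\ell+1}$ replacing $Q_\ell \circ P$ and $Q_{\ell+1} \circ P$. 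The support inclusion $\mathrm{supp}(\breve{V}_\ell) \subseteq \mathrm{supp}(Q_{\ell+1})$ required by the minimum in Lemma~\ref{lemCombinedCD} is automatic for the same marginalization reason.

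Next, I would invoke Theorem~\ref{thm3} to select a slope $\hat{\rho}_{\ell+1} \in [-1, 0]$ for which the supporting line of Lemma~\ref{lemCombinedCD} touches $E_c^{ML}(R, Q_{\ell+1})$ at $R$, and use this single slope on both sides. On the $Q_{\ell+1}$ side, substituting $(\breve{T}_\ell, \breve{V}_\ell)$ as a feasible trial argument inside the minimum on the RHS of~(\ref{eqCombinedCD}) at $Q = Q_{\ell+1}$ produces an upper bound on $E_c^{ML}(R, Q_{\ell+1})$. On the $Q_\ell$ side, instead of invoking Lemma~\ref{lemCombinedCD} (which would require its own slope), I would use only the pointwise identity $|a|^+ = \max_{\rho \in [-1,0]}(-\rho a)$ at $a = R - D(\breve{T}_\ell \circ \breve{V}_\ell \,\|\, \breve{T}_\ell \times Q_\ell)$, producing a lower bound on $E_c^{ML}(R, Q_\ell)$ with the same $\hat{\rho}_{\ell+1}$. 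Crucially, no compatibility between the two problems' minimizers is invoked.

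Subtracting the two bounds and applying the identities from the first step yields a closed-form cancellation: the ``$D(T\circ V \,\|\, Q\circ P)$'' terms contribute $+\,D(Q_{\ell+1} \,\|\, Q_\ell)$, and the ``$-\hat{\rho}_{\ell+1}[R - D(T\circ V \,\|\, T \times Q)]$'' terms contribute $+\,\hat{\rho}_{\ell+1} D(Q_{\ell+1} \,\|\, Q_\ell)$. Summing,
\begin{equation*}
E_c^{ML}(R, Q_\ell) - E_c^{ML}(R, Q_{\ell+1}) \;\ge\; (1+\hat{\rho}_{\ell+1})\, D(Q_{\ell+1} \,\|\, Q_\ell) \;\ge\; 0,
\end{equation*}
which delivers both monotonicity and the quantitative gap simultaneously.

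I expect the main obstacle to be conceptual rather than computational: locating the right \emph{asymmetric} use of Lemma~\ref{lemCombinedCD}, tight at $Q_{\ell+1}$ but only through the $|\cdot|^+$ inequality at $Q_\ell$. That asymmetry is precisely what forces the coefficient to be $1+\hat{\rho}_{\ell+1}$ rather than something symmetric, and it is also what allows the argument to succeed without any assumption that the optimal slopes at $Q_\ell$ and $Q_{\ell+1}$ coincide.
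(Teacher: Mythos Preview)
Your proof is correct and follows essentially the same approach as the paper's. Both arguments rest on the Pythagorean identity (your first step is the paper's display~(\ref{eqSqueeze})), the representation $|a|^{+}=\max_{\rho\in[-1,0]}(-\rho a)$ (your ``$Q_\ell$ side'' is the paper's step~($a$)), and the touching of the supporting line at $Q_{\ell+1}$ (your ``$Q_{\ell+1}$ side'' is the paper's step~($c$), via (\ref{eqExplicitCD}) and (\ref{eqExplicitSolCD})). The paper presents the chain as $\max_\rho$ carried through and only at the end isolates $\hat{\rho}_{\ell+1}$ to extract the quantitative gap; you fix $\hat{\rho}_{\ell+1}$ from the outset, which is a mild organizational streamlining but not a different argument.
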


\bigskip

\begin{proof}
By Theorem~\ref{thm3}~/~Lemma~\ref{lemExplicitSolutionCD} the graph of ${E\mathstrut}_{\!c}^{ML}(R, Q)$ touches at $R$ some supporting line of the form (\ref{eqCombinedCD})
with some slope parameter $\hat{\rho} \in [-1, 0]$.
The solution $\breve{T\mathstrut}\circ\breve{V\mathstrut}$ of (\ref{eqDefImplicitCor})
according to Lemma~\ref{lemCombinedCD} is also a solution of (\ref{eqCombinedCD}) with $\hat{\rho}$ and we can write:
\begin{align}
{E\mathstrut}_{\!c}^{ML}(R, Q)
\;\;\;\; & \!\!\overset{\text{touch}}{=} \;\;\;\;
\;\;\;\;\;\;\;\;\;\;\;\;\;\;\;\;\;\;\;\;\;\;\;\;\;\;\;\;\;
\,
\;\;\;\;\;\;
D(\breve{T\mathstrut}\circ \breve{V\mathstrut} \, \| \, Q \circ P)
\,
- \,
\hat{\rho} \big[R \, - \, D(\breve{T\mathstrut} \circ \breve{V\mathstrut} \, \| \, \breve{T\mathstrut} \times Q)\big]
\nonumber \\
& \overset{(a)}{=} \;\;\;\;
\max_{\substack{\\\!\!\!-1 \, \leq \, \rho \, \leq \, 0}}
\;\;\;\;\;\;\;\;\;\;\;\;\;\;\;\;\;\;\;\;\;\;\,\,
\Big\{
D(\breve{T\mathstrut}\circ \breve{V\mathstrut} \, \| \, Q \circ P)
\,
- \,
\rho \big[R \, - \, D(\breve{T\mathstrut} \circ \breve{V\mathstrut} \, \| \, \breve{T\mathstrut} \times Q)\big]
\Big\}
\nonumber \\
& \overset{(b)}{\geq} \;\;\;\,\,
\max_{\substack{\\\!\!\!-1 \, \leq \, \rho \, \leq \, 0}}
\;\;\;\;\;\;\;\;\;\;\;\;\;\;\;\;\;\;\;\;\;\;\,\,
\Big\{
D(\breve{T\mathstrut}\circ \breve{V\mathstrut} \, \| \, \breve{Q\mathstrut} \circ P)
\,
- \,
\rho \big[R \, - \, D(\breve{T\mathstrut} \circ \breve{V\mathstrut} \, \| \, \breve{T\mathstrut} \times \breve{Q\mathstrut})\big]
\Big\}
\nonumber \\
& \geq \;\;\;\;\,
\max_{\substack{\\\!\!\!-1 \, \leq \, \rho \, \leq \, 0}}
\;\;
\min_{\substack{\\T(y), \, V(x\,|\,y):\\ \\\text{supp}(V)\;\subseteq\;\text{supp}(\breve{Q\mathstrut})}}
\Big\{
D(T\circ V \, \| \, \breve{Q\mathstrut} \circ P)
\,
- \,
\rho \big[R \, - \, D(T \circ V \, \| \, T \times \breve{Q\mathstrut})\big]
\Big\}
\nonumber \\
& \overset{(c)}{=} \;\;\;\;
{E\mathstrut}_{\!c}^{ML}(R, \breve{Q\mathstrut}),
\nonumber
\end{align}
where

($a$) holds because for $\hat{\rho} \in (-1, 0)$ Lemma~\ref{lemCombinedCD}
gives $R = D(\breve{T\mathstrut} \circ \breve{V\mathstrut} \, \| \, \breve{T\mathstrut} \times Q)$
and the brackets are zero.
For $\hat{\rho} = 0$ Lemma~\ref{lemCombinedCD} gives $R \leq D(\breve{T\mathstrut} \circ \breve{V\mathstrut} \, \| \, \breve{T\mathstrut} \times Q)$,
so that the brackets are non-positive and the maximum is at $\rho = 0$.
In the case $\hat{\rho} = -1$ Lemma~\ref{lemCombinedCD} gives $R \geq D(\breve{T\mathstrut} \circ \breve{V\mathstrut} \, \| \, \breve{T\mathstrut} \times Q)$,
so that the brackets are non-negative and the maximum is at $\rho = -1$.

($b$) holds because by replacing $Q(x)$ with $\breve{Q\mathstrut}(x) = \sum_{y}\breve{T\mathstrut}(y) \breve{V\mathstrut}(x \, | \, y)$
we obtain in the expression
\begin{align}
& D(\breve{T\mathstrut}\circ \breve{V\mathstrut} \, \| \, Q \circ P)
\, + \,
\rho D(\breve{T\mathstrut} \circ \breve{V\mathstrut} \, \| \, \breve{T\mathstrut} \times Q)
\nonumber \\
= \;\;\;
& \sum_{x, \, y}
\breve{T\mathstrut}(y) \breve{V\mathstrut}(x \, | \, y)\log\frac{\breve{T\mathstrut}(y)}{P(y\,|\,x)}
\, + \,
(1+\rho) \sum_{x, \, y}\breve{T\mathstrut}(y) \breve{V\mathstrut}(x \, | \, y)\log\frac{\breve{V\mathstrut}(x \, | \, y)}{Q(x)}
\nonumber \\
= \;\;\;
& \sum_{x, \, y}
\breve{T\mathstrut}(y) \breve{V\mathstrut}(x \, | \, y)\log\frac{\breve{T\mathstrut}(y)}{P(y\,|\,x)}
\, + \,
(1+\rho) \sum_{x, \, y}\breve{T\mathstrut}(y) \breve{V\mathstrut}(x \, | \, y)\log\frac{\breve{V\mathstrut}(x \, | \, y)}{\breve{Q\mathstrut}(x)}
\, + \, \underbrace{(1 + \rho)D(\breve{Q\mathstrut} \, \| \, Q)}_{\geq \; 0}
\label{eqSqueeze} \\
\geq \;\;\;
& \sum_{x, \, y}
\breve{T\mathstrut}(y) \breve{V\mathstrut}(x \, | \, y)\log\frac{\breve{T\mathstrut}(y)}{P(y\,|\,x)}
\, + \,
(1+\rho) \sum_{x, \, y}\breve{T\mathstrut}(y) \breve{V\mathstrut}(x \, | \, y)\log\frac{\breve{V\mathstrut}(x \, | \, y)}{\breve{Q\mathstrut}(x)}
\nonumber \\
= \;\;\;
& D(\breve{T\mathstrut}\circ \breve{V\mathstrut} \, \| \, \breve{Q\mathstrut} \circ P)
\, + \,
\rho D(\breve{T\mathstrut} \circ \breve{V\mathstrut} \, \| \, \breve{T\mathstrut} \times \breve{Q\mathstrut}).
\nonumber
\end{align}

($c$) holds by (\ref{eqExplicitCD}) and (\ref{eqExplicitSolCD}).\newline
The bound on the amount of decrease follows from (\ref{eqSqueeze}).
\end{proof}

\bigskip

Note that whenever $\,{\hat{\rho}\mathstrut}_{\ell} \in (-1, 0]\,$
the computation in (\ref{eqUpdate}) goes along (\ref{eqTrho})-(\ref{eqVrho}) with ${\hat{\rho}\mathstrut}_{\ell}\,$,
which gives the ratio ${Q\mathstrut}_{\ell\,+\,1}(x)/{Q\mathstrut}_{\ell}(x)$ different than in the Arimoto computation \cite[eq.~24-25]{Arimoto76} of ${Q\mathstrut}_{\ell\,+\,1}$
from ${Q\mathstrut}_{\ell}$ for the same ${\hat{\rho}\mathstrut}_{\ell}$.
Besides, the slope parameter ${\hat{\rho}\mathstrut}_{\ell}$ itself is changing here in each iteration.
The following theorem tries to characterize the convergence of the above minimization procedure.

\bigskip

\begin{thm}[Convergence of iterations for ML]\label{thmConvergence}\newline
{\em Let $\,{\left\{\big(\breve{T\mathstrut}_{\ell}, \breve{V\mathstrut}_{\!\ell}\big)\right\}\mathstrut}_{\ell\,=\,0}^{+\infty}$
be a sequence of iterative solutions of (\ref{eqDefImplicitCor}) with $Q = {Q\mathstrut}_{\ell}$ obtained by (\ref{eqUpdate}).
Then}
\begin{equation} \label{eqZMin}
{E\mathstrut}_{\!c}^{ML}(R, {Q\mathstrut}_{\ell}) \;\;\;\; \overset{\ell \, \rightarrow\,\infty}{\searrow} \;\;\;\; \min_{\substack{\\Q:\;\;\text{supp}(Q)\,\subseteq \,{\cal Z}}}\, {E\mathstrut}_{\!c}^{ML}(R, Q),
\end{equation}
{\em for some} ${\cal Z}\subseteq \text{supp}({Q\mathstrut}_{0})$.
\end{thm}

\bigskip

\begin{proof}
By Theorem~\ref{thm3}~/~Lemma~\ref{lemExplicitSolutionCD} the graph of ${E\mathstrut}_{\!c}^{ML}(\,\,\widetilde{\!\!R}, {Q\mathstrut}_{\ell})$ touches at $\,\,\widetilde{\!\!R} = R$ some supporting line of the form (\ref{eqCombinedCD}),
not necessarily unique.
Let's choose a slope parameter of one such line ${\hat{\rho}\mathstrut}_{\ell} \in [-1, 0]$ for each index $\,\ell$.
Then we have a sequence of pairs $\,{\Big\{\big({Q\mathstrut}_{\ell}, {\hat{\rho}\mathstrut}_{\ell}\big)\Big\}\mathstrut}_{\ell\,=\,0}^{+\infty}$.
By Theorem~\ref{thm3} the distribution ${Q\mathstrut}_{\ell}$ is updated for the next time $\ell + 1$ according to
either
(\ref{eqTrho})-(\ref{eqVrho}) with ${\hat{\rho}\mathstrut}_{\ell} \in (-1, 0]$ or (\ref{eqTminus1})-(\ref{eqVminus1}) if ${\hat{\rho}\mathstrut}_{\ell} = -1$.
In both cases the support of the distribution ${Q\mathstrut}_{\ell}$ cannot increase. It can decrease by (\ref{eqVminus1}),
or the distribution ${Q\mathstrut}_{\ell}$ can approach arbitrarily close to zero on some letters of the channel input alphabet where the initial value of ${Q\mathstrut}_{0}$
is positive.
Consider a convergent subsequence of adjacent pairs
$\,{\left\{\big({Q\mathstrut}_{{\ell\mathstrut}_{i}}, \, {\hat{\rho}\mathstrut}_{{\ell\mathstrut}_{i}}, \,
{Q\mathstrut}_{{\ell\mathstrut}_{i}\,+\,1}, \,
{\hat{\rho}\mathstrut}_{{\ell\mathstrut}_{i}\,+\,1}\big)\right\}\mathstrut}_{i\,=\,1}^{+\infty}\,$:
\begin{align}
{Q\mathstrut}_{{\ell\mathstrut}_{i}} \;\;\;\; & \underset{i\,\rightarrow\,\infty}{\longrightarrow} \;\;\;\; \,{\overline{\!Q\mathstrut}}_{1},
\;\;\;\;\;\;\;\;\;\;\;\;
\;\;\;\;
{\hat{\rho}\mathstrut}_{{\ell\mathstrut}_{i}} \;\;\;\; \underset{i\,\rightarrow\,\infty}{\longrightarrow} \;\;\;\; {\bar{\rho}\mathstrut}_{1} \, \in \, [-1, 0],
\nonumber \\
{Q\mathstrut}_{{\ell\mathstrut}_{i}\,+\,1} \;\;\;\; & \underset{i\,\rightarrow\,\infty}{\longrightarrow} \;\;\;\; \,{\overline{\!Q\mathstrut}}_{2},
\;\;\;\;\;\;\;\;\;\;\;\;
{\hat{\rho}\mathstrut}_{{\ell\mathstrut}_{i}\,+\,1} \;\;\;\; \underset{i\,\rightarrow\,\infty}{\longrightarrow} \;\;\;\; {\bar{\rho}\mathstrut}_{2} \, \in \, [-1, 0].
\nonumber
\end{align}
We have $\,\text{supp}(\,{\overline{\!Q\mathstrut}}_{j})\subseteq \text{supp}({Q\mathstrut}_{0})$, $j = 1, 2$.




Let us first examine the limit of the graph of $\,{E\mathstrut}_{\!c}^{ML}\big(\,\,\widetilde{\!\!R}, \,{Q\mathstrut}_{{\ell\mathstrut}_{i}}\big)\,$
as a function of $\,\,\widetilde{\!\!R}$.
For any $\beta \in (0, 1)$ arbitrarily close to $1$ and large enough index $i$ we can write according to (\ref{eqExplicitCD}) and (\ref{eqE0Minus1}):
\begin{align}
\sup_{\substack{\\\!\!\!-1 \, < \, \rho \, \leq \, 0}}
\Bigg\{
-\log \sum_{y}\bigg[\sum_{x}\underbrace{\beta\,\overline{\!Q\mathstrut}_{1}(x)}_{\leq \; {Q\mathstrut}_{{\ell\mathstrut}_{i}}(x)}P^{\frac{1}{1\,+\,\rho}}(y\,|\,x)\bigg]^{1\,+\,\rho}
\, - \, \rho \,\,\widetilde{\!\!R}\,\Bigg\}
\;\;\; & \geq \;\;\;
{E\mathstrut}_{\!c}^{ML}\big(\,\,\widetilde{\!\!R}, \,{Q\mathstrut}_{{\ell\mathstrut}_{i}}\big)
\nonumber \\
&
\geq \;\;\;
\max_{\substack{\\\!\!\!-\beta \, \leq \, \rho \, \leq \, 0}}
\;\;
\Big\{
{E\mathstrut}_{0}\big(\rho, \,{Q\mathstrut}_{{\ell\mathstrut}_{i}}\big) - \rho \,\,\widetilde{\!\!R}\,
\Big\}.
\nonumber
\end{align}
Now it is convenient to take $i$ to $+\infty$ in the lower bound.
From which we obtain for any $\beta \in (0, 1)$:
\begin{align}
\max_{\substack{\\\!\!\!-1 \, \leq \, \rho \, \leq \, 0}}
\;\;
\Big\{
{E\mathstrut}_{0}(\rho, \,\overline{\!Q\mathstrut}_{1}) \, - \, (1+\rho) \log \beta \, - \, \rho \,\,\widetilde{\!\!R}\,
\Big\}
\;\;\; & \geq \;\;\;
\limsup_{i \, \rightarrow \,\infty}
\;
{E\mathstrut}_{\!c}^{ML}\big(\,\,\widetilde{\!\!R}, \,{Q\mathstrut}_{{\ell\mathstrut}_{i}}\big),
\nonumber \\
\max_{\substack{\\\!\!\!-\beta \, \leq \, \rho \, \leq \, 0}}
\;\;
\Big\{
{E\mathstrut}_{0}(\rho, \,\overline{\!Q\mathstrut}_{1}) \, - \, \rho \,\,\widetilde{\!\!R}\,
\Big\}
\;\;\;
& \leq \;\;\;
\liminf_{i \, \rightarrow \,\infty}
\;\;
{E\mathstrut}_{\!c}^{ML}\big(\,\,\widetilde{\!\!R}, \,{Q\mathstrut}_{{\ell\mathstrut}_{i}}\big).
\nonumber
\end{align}
Then by continuity of ${E\mathstrut}_{0}(\rho, \,\overline{\!Q\mathstrut}_{1})$ as a function of $\rho$ (\ref{eqE0Minus1}) and (\ref{eqExplicitCD}) we obtain
\begin{displaymath}
\lim_{i \, \rightarrow \,\infty}
{E\mathstrut}_{\!c}^{ML}\big(\,\,\widetilde{\!\!R}, \,{Q\mathstrut}_{{\ell\mathstrut}_{i}}\big)
\;\; = \;\;
{E\mathstrut}_{\!c}^{ML}(\,\,\widetilde{\!\!R}, \,\,\overline{\!Q\mathstrut}_{1}), \;\;\;\;\;\; \forall \; \,\,\widetilde{\!\!R}.
\end{displaymath}
In particular, supporting lines $\,{E\mathstrut}_{0}\big({\hat{\rho}\mathstrut}_{{\ell\mathstrut}_{i}}, \,{Q\mathstrut}_{{\ell\mathstrut}_{i}}\big) - {\hat{\rho}\mathstrut}_{{\ell\mathstrut}_{i}} \,\,\widetilde{\!\!R}\,$ of ${E\mathstrut}_{\!c}^{ML}\big(\,\,\widetilde{\!\!R}, \,{Q\mathstrut}_{{\ell\mathstrut}_{i}}\big)$ converge to the supporting line of ${E\mathstrut}_{\!c}^{ML}(\,\,\widetilde{\!\!R}, \,\,\overline{\!Q\mathstrut}_{1})$ 
with slope parameter ${\bar{\rho}\mathstrut}_{1}$,
which is given by $\,{E\mathstrut}_{0}({\bar{\rho}\mathstrut}_{1}, \,\overline{\!Q\mathstrut}_{1}) \, - \, {\bar{\rho}\mathstrut}_{1}  \,\,\widetilde{\!\!R}$.
At $\,\,\widetilde{\!\!R} = R$ this gives
$\,{E\mathstrut}_{\!c}^{ML}\big(R, \,{Q\mathstrut}_{{\ell\mathstrut}_{i}}\big) \rightarrow {E\mathstrut}_{0}({\bar{\rho}\mathstrut}_{1}, \,\overline{\!Q\mathstrut}_{1}) \, - \, {\bar{\rho}\mathstrut}_{1}  R$.
Similarly we obtain $\,{E\mathstrut}_{\!c}^{ML}\big(R, \,{Q\mathstrut}_{{\ell\mathstrut}_{i}\,+\,1}\big) \,\rightarrow\, {E\mathstrut}_{0}({\bar{\rho}\mathstrut}_{2}, \,\overline{\!Q\mathstrut}_{2}) \, - \, {\bar{\rho}\mathstrut}_{2}  R$.

If ${\bar{\rho}\mathstrut}_{1} = 0$ or ${\bar{\rho}\mathstrut}_{2} = 0$, then 
$\,{E\mathstrut}_{\!c}^{ML}\big(R, \,{Q\mathstrut}_{{\ell\mathstrut}_{i}}\big) \searrow  0\,$ by the above result
and monotonicity of Lemma~\ref{lemMonotML}.

If ${\bar{\rho}\mathstrut}_{1} = -1$ then
$\,{E\mathstrut}_{\!c}^{ML}\big(R, \,{Q\mathstrut}_{{\ell\mathstrut}_{i}}\big) \,\searrow\, {E\mathstrut}_{0}(-1, \,\overline{\!Q\mathstrut}_{1}) + R$.
By (\ref{eqExplicitCD}) and (\ref{eqE0Minus1})
we conclude that this is the minimum of ${E\mathstrut}_{\!c}^{ML}(R, Q)$ over all $Q$ with $\,\text{supp}(Q) \subseteq \text{supp}(\,\overline{\!Q\mathstrut}_{1})$.
Similarly if ${\bar{\rho}\mathstrut}_{2} = -1$ then
$\,{E\mathstrut}_{\!c}^{ML}\big(R, \,{Q\mathstrut}_{{\ell\mathstrut}_{i}}\big) \,\searrow\, {E\mathstrut}_{0}(-1, \,\overline{\!Q\mathstrut}_{2}) + R$,
which is the minimum of ${E\mathstrut}_{\!c}^{ML}(R, Q)$ over all $Q$ with $\,\text{supp}(Q) \subseteq \text{supp}(\,\overline{\!Q\mathstrut}_{2})$.

Suppose now that ${\bar{\rho}\mathstrut}_{1}, {\bar{\rho}\mathstrut}_{2} \in (-1, 0)$.
Since ${\bar{\rho}\mathstrut}_{1} \in (-1, 0)$, also ${\hat{\rho}\mathstrut}_{{\ell\mathstrut}_{i}} \in (-1, 0)$ for large enough index $i$ and
the distribution ${Q\mathstrut}_{{\ell\mathstrut}_{i}}$ is updated for the next time ${\ell\mathstrut}_{i}+1$ according to (\ref{eqTrho})-(\ref{eqVrho}) as:
\begin{equation} \label{eqNexQ}
{Q\mathstrut}_{{\ell\mathstrut}_{i}\,+\,1}(x)
\;\; = \;\;
{Q\mathstrut}_{{\ell\mathstrut}_{i}}(x)\cdot
\frac{1}{{K\mathstrut}_{i}}\,
\sum_{\substack{\\y:\; P(y \, | \, x) \, > \, 0}}P^{\,{\gamma\mathstrut}_{i}}(y\,|\,x)
\bigg[\sum_{a}{Q\mathstrut}_{{\ell\mathstrut}_{i}}(a)P^{\,{\gamma\mathstrut}_{i}}(y\,|\,a)\bigg]^{{\hat{\rho}\mathstrut}_{{\ell\mathstrut}_{i}}},
\;\;\;\;\;\; {\gamma\mathstrut}_{i} \, \triangleq \, \frac{1}{1\,+\,{\hat{\rho}\mathstrut}_{{\ell\mathstrut}_{i}}}.
\end{equation}
In the limit where $\,\overline{\!Q\mathstrut}_{1}(x)$ is positive (\ref{eqNexQ}) becomes
\begin{equation} \label{eqNextQLim}
\,\overline{\!Q\mathstrut}_{2}(x)
\;\; = \;\;
\,\overline{\!Q\mathstrut}_{1}(x)\cdot
\frac{1}{K}
\sum_{\substack{\\y: \; P(y \, | \, x) \, > \, 0}}P^{\,\gamma}(y\,|\,x)
\bigg[\sum_{a}\,\overline{\!Q\mathstrut}_{1}(a)P^{\,\gamma}(y\,|\,a)\bigg]^{{\bar{\rho}\mathstrut}_{1}},
\;\;\;\;\;\; \gamma \, \triangleq \, \frac{1}{1\,+\,{\bar{\rho}\mathstrut}_{1}}.
\end{equation}
Since also ${\bar{\rho}\mathstrut}_{2} \in (-1, 0)$ then $1 + {\hat{\rho}\mathstrut}_{{\ell\mathstrut}_{i}\,+\,1}$ converges to a positive number and
by Lemma~\ref{lemMonotML} necessarily $D\big({Q\mathstrut}_{{\ell\mathstrut}_{i}\,+\,1} \, \| \, {Q\mathstrut}_{{\ell\mathstrut}_{i}}\big) \, \rightarrow \, 0$.
In this case also $\,{Q\mathstrut}_{{\ell\mathstrut}_{i}\,+\,1} \rightarrow \,\overline{\!Q\mathstrut}_{1}$,
i.e. necessarily $\,\overline{\!Q\mathstrut}_{1} = \,\overline{\!Q\mathstrut}_{2}$.
Dividing both sides of (\ref{eqNextQLim}) by $\,\overline{\!Q\mathstrut}_{1}(x)$ where it is positive,
for all such $x$
we obtain:
\begin{equation} \label{eqNextQLimQPos}
\sum_{\substack{\\y:\; P(y \, | \, x) \, > \, 0}}P^{\,\gamma}(y\,|\,x)
\bigg[\sum_{a}\,\overline{\!Q\mathstrut}_{1}(a)P^{\,\gamma}(y\,|\,a)\bigg]^{{\bar{\rho}\mathstrut}_{1}} \;\; = \;\; K,
\;\;\;\;\;\; \,\overline{\!Q\mathstrut}_{1}(x) > 0.
\end{equation}
This can be recognized as a sufficient condition for $\,\overline{\!Q\mathstrut}_{1}$ to minimize ${E\mathstrut}_{0}({\bar{\rho}\mathstrut}_{1}, Q)$ over all $Q$ with $\,\text{supp}(Q) \subseteq \text{supp}(\,\overline{\!Q\mathstrut}_{1})$, the same as \cite[eq.~22]{Arimoto76}.
By (\ref{eqExplicitCD}) we conclude that the limit of $\,{E\mathstrut}_{\!c}^{ML}\big(R, \,{Q\mathstrut}_{{\ell\mathstrut}_{i}}\big)$ which is given by $\,{E\mathstrut}_{0}({\bar{\rho}\mathstrut}_{1}, \,\overline{\!Q\mathstrut}_{1}) \, - \, {\bar{\rho}\mathstrut}_{1}  R\,$
is the minimum of $\,{E\mathstrut}_{\!c}^{ML}(R, \,Q)\,$ over such $Q$.
\end{proof}

\bigskip

Let $C({\cal Z})$ denote the capacity of the channel with an input alphabet ${\cal Z} \subseteq {\cal X}$.
Observe that 
for any $R > 0$ holds
\begin{displaymath}
\min_{\substack{\\{\cal Z}:\;\;C\left({\cal Z}\right) \, < \, R}}\;\;\;
\min_{\substack{\\Q:\;\;\text{supp}(Q)\,\subseteq \,{\cal Z}}}\, {E\mathstrut}_{\!c}^{ML}(R, Q)
\;\;\; = \;\;\;
\min_{\substack{\\Q:\;\;C\left(\text{supp}(Q)\right) \; < \; R}}\, {E\mathstrut}_{\!c}^{ML}(R, Q) \;\;\; > \;\;\; 0.
\end{displaymath}
This observation conveniently allows us to grasp and
write
one sufficient condition for the convergence
of the iterative minimization using (\ref{eqDefImplicitCor}) described by Lemma~\ref{lemMonotML} and also of the analogous procedure for (\ref{eqExpNoTB})
all the way to the minimum over $Q$ (\ref{eqOptimum}),
when this minimum is zero.

\bigskip

\begin{lemma} [Convergence to zero for ML] \label{lemConvergenceReg}\newline
{\em Let $\,{\left\{\big(\breve{T\mathstrut}_{\ell}, \breve{V\mathstrut}_{\!\ell}\big)\right\}\mathstrut}_{\ell\,=\,0}^{+\infty}$
be a sequence of iterative solutions of (\ref{eqDefImplicitCor}) with $Q = {Q\mathstrut}_{\ell}$ obtained by (\ref{eqUpdate}).\newline
If 
the initial distribution ${Q\mathstrut}_{0}$ satisfies the strict inequality:}
\begin{equation} \label{eqLowerThan}
{E\mathstrut}_{\!c}^{ML}(R, {Q\mathstrut}_{0}) \;\;\;\; < \;\;\;\; \min_{\substack{\\Q:\;\;C\left(\text{supp}(Q)\right) \; < \; R}}\, {E\mathstrut}_{\!c}^{ML}(R, Q),
\end{equation}
{\em then}
\begin{displaymath}
{E\mathstrut}_{\!c}^{ML}(R, {Q\mathstrut}_{\ell})
\;\;\;\; \overset{\ell \, \rightarrow\,\infty}{\searrow} \;\;\;\; 0.
\end{displaymath}
\end{lemma}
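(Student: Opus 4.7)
The plan is to combine Theorem~\ref{thmConvergence} with a careful analysis of the limiting support, using the strict inequality (\ref{eqLowerThan}) to rule out the bad case.

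First I would invoke Theorem~\ref{thmConvergence} to conclude that ${E\mathstrut}_{\!c}^{ML}(R, {Q\mathstrut}_{\ell})$ decreases monotonically to $\min_{Q:\, \text{supp}(Q)\subseteq \mathcal{Z}}\, {E\mathstrut}_{\!c}^{ML}(R, Q)$ for some $\mathcal{Z}\subseteq \text{supp}({Q\mathstrut}_{0})$. By monotonicity of the sequence, this limit is at most ${E\mathstrut}_{\!c}^{ML}(R, {Q\mathstrut}_{0})$, and by hypothesis (\ref{eqLowerThan}) this is strictly less than $\min_{Q:\, C(\text{supp}(Q)) < R}\, {E\mathstrut}_{\!c}^{ML}(R, Q)$.

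Next, I would argue that $C(\mathcal{Z})\geq R$. Suppose for contradiction that $C(\mathcal{Z}) < R$. Then any $Q$ with $\text{supp}(Q)\subseteq \mathcal{Z}$ satisfies $C(\text{supp}(Q))\leq C(\mathcal{Z}) < R$ by monotonicity of channel capacity in the input alphabet. Therefore the set of distributions supported on $\mathcal{Z}$ is contained in the set $\{Q :\, C(\text{supp}(Q)) < R\}$, so
\begin{displaymath}
\min_{Q:\, \text{supp}(Q)\subseteq \mathcal{Z}}\, {E\mathstrut}_{\!c}^{ML}(R, Q)
\;\; \geq \;\;
\min_{Q:\, C(\text{supp}(Q)) < R}\, {E\mathstrut}_{\!c}^{ML}(R, Q),
\end{displaymath}
contradicting the strict inequality established above.

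Finally, given $C(\mathcal{Z})\geq R$, I would exhibit a distribution $\,{Q\mathstrut}^{*}$ supported on $\mathcal{Z}$ for which ${E\mathstrut}_{\!c}^{ML}(R, {Q\mathstrut}^{*}) = 0$, which forces the limit to be zero. The natural choice is a capacity-achieving input for the restricted alphabet $\mathcal{Z}$, which satisfies $I({Q\mathstrut}^{*}\circ P) = C(\mathcal{Z}) \geq R$. Plugging $T\circ V = {Q\mathstrut}^{*}\circ P$ into the implicit expression (\ref{eqDefImplicitCor}) gives $D(T\circ V \,\|\, {Q\mathstrut}^{*}\circ P) = 0$ and $D(T\circ V \,\|\, T\times {Q\mathstrut}^{*}) = I({Q\mathstrut}^{*}\circ P)\geq R$, so the objective in (\ref{eqDefImplicitCor}) vanishes. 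Combined with non-negativity of the exponent, this yields ${E\mathstrut}_{\!c}^{ML}(R, {Q\mathstrut}^{*}) = 0$ and hence ${E\mathstrut}_{\!c}^{ML}(R, {Q\mathstrut}_{\ell})\searrow 0$.

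The main obstacle I anticipate is the handling of step two: making the monotonicity-of-capacity argument airtight when the minimization in (\ref{eqLowerThan}) is over an open set of distributions (those with $C(\text{supp}(Q)) < R$, strict). I would need to verify that this minimum is attained or, at least, that the strict inequality propagates cleanly through the set inclusion; a standard compactness/continuity argument on the simplex combined with the structure of ${E\mathstrut}_{\!c}^{ML}$ in Theorem~\ref{thm3} should suffice.
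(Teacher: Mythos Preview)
Your proposal is correct and follows essentially the same three-step argument as the paper: invoke Theorem~\ref{thmConvergence}, rule out $C(\mathcal{Z})<R$ by contradiction with (\ref{eqLowerThan}), and then exhibit a $Q^{*}$ supported in $\mathcal{Z}$ with $I(Q^{*}\circ P)\geq R$ to force the limit to zero. Your anticipated obstacle about the minimum in (\ref{eqLowerThan}) being over an ``open'' set is a non-issue: the paper resolves it in the displayed observation just before the lemma by rewriting that minimum as $\min_{\mathcal{Z}:\,C(\mathcal{Z})<R}\min_{Q:\,\text{supp}(Q)\subseteq\mathcal{Z}}E_{c}^{ML}(R,Q)$, a finite minimum of minima over closed simplices, hence attained.
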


\bigskip

\begin{proof}
By Theorem~\ref{thmConvergence} the resultant sequence of ${E\mathstrut}_{\!c}^{ML}(R, {Q\mathstrut}_{\ell})$
must monotonically converge to a minimum of ${E\mathstrut}_{\!c}^{ML}(R, Q)$ over $Q$ with $\,\text{supp}(Q) \subseteq {\cal Z}\,$
for some subset of the channel input alphabet $\,{\cal Z} \subseteq {\cal X}$.
Suppose that $\,C({\cal Z}) < R$.
Then also for every subset $\,\text{supp}(Q) \subseteq {\cal Z}\,$ we have $C\big(\text{supp}(Q)\big) < R$.
Then the limit of the sequence of ${E\mathstrut}_{\!c}^{ML}(R, {Q\mathstrut}_{\ell})$ must be lower-bounded by the minimum on the RHS of (\ref{eqLowerThan}).
This is a contradiction, since the monotonically non-increasing sequence must be upper-bounded by its first element
${E\mathstrut}_{\!c}^{ML}(R, {Q\mathstrut}_{0})$ on the LHS of the strict inequality (\ref{eqLowerThan}).
We conclude that necessarily $\,C({\cal Z}) \geq R$.
In particular, there exists some $Q$ with $\,\text{supp}(Q) \subseteq {\cal Z}\,$
such that $I(Q\circ P) \geq R$. This gives ${E\mathstrut}_{\!c}^{ML}(R, Q) = 0$ by (\ref{eqDefImplicitCor}) for this $Q$.
Consequently the minimum in (\ref{eqZMin}) is zero.
\end{proof}

\bigskip

Note that for each $\,0 < R \leq C({\cal X})\,$ there exist such initial input distributions ${Q\mathstrut}_{0}$ that satisfy the condition
(\ref{eqLowerThan}) of Lemma~\ref{lemConvergenceReg}.
Therefore (\ref{eqLowerThan}) guarantees a region of convergence of (\ref{eqDefImplicitCor}) to (\ref{eqOptimum}) as a result of the iterative procedure (\ref{eqUpdate}) for $0 < R \leq C({\cal X})$.
Next, we extend the above result from (\ref{eqDefImplicitCor}) to (\ref{eqExpNoTB}).

\bigskip

\begin{lemma} \label{lemFullCap}
$\,{E\mathstrut}_{0}(-1, Q) + C\big(\text{supp}(Q)\big) \, \leq \, 0$.
\end{lemma}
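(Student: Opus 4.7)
The plan is to show the equivalent inequality
\[
C\big(\text{supp}(Q)\big) \;\leq\; \log \sum_{y} \max_{x \,\in\, \text{supp}(Q)} P(y\,|\,x),
\]
which is precisely $-{E\mathstrut}_{0}(-1,Q)$ by the definition (\ref{eqE0Minus1}). This is the classical ``max-log-sum'' upper bound on capacity, and once it is established, the lemma follows by a trivial rearrangement.

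To prove it, I would start from the variational characterization of mutual information: for any input distribution $U$ with $\text{supp}(U) \subseteq \text{supp}(Q)$ and any output distribution $T^{*}(y)$,
\[
I(U\circ P) \;=\; \min_{T^{*}} D(U\circ P \,\|\, U\times T^{*}) \;\leq\; D(U\circ P \,\|\, U\times T^{*}).
\]
Now I would plug in the specific auxiliary output distribution
\[
T^{*}(y) \;\triangleq\; \frac{\max_{a\,\in\,\text{supp}(Q)} P(y\,|\,a)}{K}, \qquad K \;=\; \sum_{y'} \max_{a\,\in\,\text{supp}(Q)} P(y'\,|\,a).
\]
Substituting yields
\[
I(U\circ P) \;\leq\; \log K \,+\, \sum_{x,\,y} U(x) P(y\,|\,x)\, \log \frac{P(y\,|\,x)}{\max_{a\,\in\,\text{supp}(Q)} P(y\,|\,a)}.
\]

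The second term is non-positive, since for every $x\in\text{supp}(U)\subseteq \text{supp}(Q)$ the ratio inside the logarithm is at most $1$. Hence $I(U\circ P) \leq \log K$ for every admissible $U$, and taking the supremum over $U$ gives $C(\text{supp}(Q)) \leq \log K = -{E\mathstrut}_{0}(-1,Q)$, which rearranges to the claimed inequality. I do not anticipate any real obstacle here: the only mild subtlety is the restriction to input letters in $\text{supp}(Q)$, which is handled uniformly by insisting that both the auxiliary $T^{*}$ and the capacity-achieving $U$ be supported on $\text{supp}(Q)$, so that the pointwise bound $P(y\,|\,x) \leq \max_{a\,\in\,\text{supp}(Q)} P(y\,|\,a)$ applies.
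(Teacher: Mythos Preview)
Your argument is correct: choosing the auxiliary output law $T^{*}(y)\propto\max_{a\in\text{supp}(Q)}P(y\,|\,a)$ in the identity $I(U\circ P)=\min_{T^{*}}D(U\circ P\,\|\,U\times T^{*})$ gives $I(U\circ P)\leq\log K$ for every $U$ supported in $\text{supp}(Q)$, and hence the claimed bound after maximizing over $U$. The only edge case is when $\max_{a}P(y\,|\,a)=0$ for some $y$, but then $P(y\,|\,x)=0$ for every $x$ in the support as well, so those terms vanish under the usual $0\log\tfrac{0}{0}=0$ convention.

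This is a genuinely different route from the paper. The paper argues by contradiction through the correct-decoding exponent: if ${E\mathstrut}_{0}(-1,Q)+C(\text{supp}(Q))>0$, then by (\ref{eqExplicitCD}) and the monotonicity of ${E\mathstrut}_{0}(-1,\cdot)$ in the support (\ref{eqE0Minus1}), one would have $\min_{\widetilde{Q}}{E\mathstrut}_{\!c}^{ML}\big(C(\text{supp}(Q)),\widetilde{Q}\big)>0$ over the restricted alphabet, contradicting the fact that the correct-decoding exponent vanishes at capacity. Your proof is more elementary and self-contained---it uses nothing beyond the standard variational form of mutual information and a pointwise bound---whereas the paper's proof stays within the exponent framework already developed and exploits the operational meaning of ${E\mathstrut}_{\!c}^{ML}$. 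The paper's version is perhaps more thematically integrated, but yours is shorter and makes the inequality transparent as the classical ``max-log-sum'' capacity bound.
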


\bigskip

\begin{proof}
Suppose on the contrary that $\,{E\mathstrut}_{0}(-1, Q) + C\big(\text{supp}(Q)\big) \, > \, 0$.
Then
\begin{align}
\min_{\substack{\\\,\widetilde{\!Q}:\;\;\text{supp}(\,\widetilde{\!Q})\,\subseteq \,\text{supp}(Q)}}\, {E\mathstrut}_{\!c}^{ML}\big(C\big(\text{supp}(Q)\big), \, \,\widetilde{\!Q}\big)
\;\; & \overset{(\ref{eqExplicitCD})}{\geq} \;\;
\min_{\substack{\\\,\widetilde{\!Q}:\;\;\text{supp}(\,\widetilde{\!Q})\,\subseteq \,\text{supp}(Q)}}
\Big\{{E\mathstrut}_{0}(-1, \,\widetilde{\!Q}) + C\big(\text{supp}(Q)\big)\Big\}
\nonumber \\
& \overset{(\ref{eqE0Minus1})}{\geq} \;\;
\min_{\substack{\\\,\widetilde{\!Q}:\;\;\text{supp}(\,\widetilde{\!Q})\,\subseteq \,\text{supp}(Q)}}
\Big\{{E\mathstrut}_{0}(-1, Q) + C\big(\text{supp}(Q)\big)\Big\}
\;\; > \;\; 0,
\nonumber
\end{align}
i.e. the minimal correct-decoding exponent $\,\min_{\,\,\widetilde{\!Q}}{E\mathstrut}_{\!c}^{ML}\big(R, \, \,\widetilde{\!Q}\big)\,$ of the channel with the input alphabet $\,\text{supp}(Q)\,$ appears to be positive at $R = C\big(\text{supp}(Q)\big)$, which is a contradiction.
\end{proof}

\bigskip

\begin{lemma}[One iteration]\label{lemOneStep}\newline
{\em If} $\,C\big(\text{supp}(Q)\big) \geq R\,$
{\em then the following holds:}\newline
{\em (1)} $\, C\big(\text{supp}(Q)\big) \, \leq \, {R\mathstrut}_{\,-1}^{\,-}(Q)${\em, as defined in (\ref{eqRMinus}),} \newline
{\em (2)} {\em $\,{E\mathstrut}_{\!c}(R, Q) = {E\mathstrut}_{\!c}^{ML}(R, Q)$,
as defined in (\ref{eqExpNoTB}) and (\ref{eqDefImplicitCor}),
sharing the same solution
$(\breve{T\mathstrut}, \breve{V\mathstrut})$,}\newline
{\em (3)} $\,\breve{Q\mathstrut}(x) = \sum_{y}\breve{T\mathstrut}(y) \breve{V\mathstrut}(x \, | \, y)$
{\em satisfies} $\,\text{supp}(\breve{Q\mathstrut}) = \text{supp}(Q)${\em .}
\end{lemma}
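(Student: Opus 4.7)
The plan is to tackle the three parts in order, combining Theorem~\ref{thm3} with Lemma~\ref{lemFullCap}.

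For (1), I would first apply Lemma~\ref{lemFullCap} to obtain $-{E\mathstrut}_{0}(-1, Q) \geq C(\text{supp}(Q))$, and then bridge to ${R\mathstrut}_{\,-1}^{\,-}(Q)$ via the non-negativity of ${E\mathstrut}_{\!c}^{ML}$. Specifically, I would evaluate the explicit formula (\ref{eqExplicitCD}) at $R = {R\mathstrut}_{\,-1}^{\,-}(Q)$: since ${R\mathstrut}_{\,-1}^{\,-}(Q) = \lim_{\rho \, \searrow \, -1}\,\partial{E\mathstrut}_{0}(\rho, Q)/\partial\rho$, the maximum over $\rho \in [-1,0]$ is attained at $\rho = -1$ and equals ${E\mathstrut}_{0}(-1, Q) + {R\mathstrut}_{\,-1}^{\,-}(Q)$. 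Since ${E\mathstrut}_{\!c}^{ML}(R, Q) \geq 0$ always (being the minimum of a non-negative objective in (\ref{eqDefImplicitCor})), this gives ${R\mathstrut}_{\,-1}^{\,-}(Q) \geq -{E\mathstrut}_{0}(-1, Q) \geq C(\text{supp}(Q))$, proving (1).

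For (2), chaining (1) with the hypothesis gives $R \leq C(\text{supp}(Q)) \leq {R\mathstrut}_{\,-1}^{\,-}(Q) \leq {R\mathstrut}_{\,-1}^{\,+}(Q)$, which places $R$ in the regime where Theorem~\ref{thm3} guarantees ${E\mathstrut}_{\!c}(R, Q) = {E\mathstrut}_{\!c}^{ML}(R, Q) = \max_{\rho \in [-1, 0]}\{{E\mathstrut}_{0}(\rho, Q) - \rho R\}$, with a common minimizer $(\breve{T}, \breve{V}) = ({T\!\mathstrut}_{\rho}, {V\!\!\mathstrut}_{\rho})$ of the form (\ref{eqTrho})--(\ref{eqVrho}) for the maximizing $\rho \in (-1, 0]$.

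For (3), I would substitute this explicit form into $\breve{Q}(x) = \sum_y \breve{T}(y) \breve{V}(x\,|\,y)$. From $\breve{V}(x\,|\,y) \propto Q(x) P^{1/(1+\rho)}(y\,|\,x)$, we immediately get $\breve{Q}(x) = 0$ whenever $Q(x) = 0$. Conversely, for $Q(x) > 0$, the row-sum $\sum_y P(y\,|\,x) = 1$ ensures some ${y\mathstrut}_{0}$ with $P({y\mathstrut}_{0}\,|\,x) > 0$; for this ${y\mathstrut}_{0}$, $\breve{T}({y\mathstrut}_{0}) \propto [\sum_a Q(a) P^{1/(1+\rho)}({y\mathstrut}_{0}\,|\,a)]^{1+\rho} > 0$ (the bracketed sum contains the positive term $Q(x) P^{1/(1+\rho)}({y\mathstrut}_{0}\,|\,x)$) and $\breve{V}(x\,|\,{y\mathstrut}_{0}) > 0$, so $\breve{Q}(x) \geq \breve{T}({y\mathstrut}_{0})\breve{V}(x\,|\,{y\mathstrut}_{0}) > 0$, establishing $\text{supp}(\breve{Q}) = \text{supp}(Q)$.

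The hard part will be ruling out the $\rho = -1$ corner in (2), since the boundary solution $({T\!\mathstrut}_{-1}, {V\!\mathstrut}_{-1})$ from (\ref{eqTminus1})--(\ref{eqVminus1}) could collapse the support via its $\arg\max$ constraint and thereby invalidate (3). The inequality chain from (1), $R \leq -{E\mathstrut}_{0}(-1, Q) \leq {R\mathstrut}_{\,-1}^{\,-}(Q)$, places $R$ at or below ${R\mathstrut}_{\,-1}^{\,-}(Q)$ with typically strict inequality; the degenerate equality case $R = C(\text{supp}(Q)) = -{E\mathstrut}_{0}(-1, Q) = {R\mathstrut}_{\,-1}^{\,-}(Q)$ forces ${E\mathstrut}_{\!c}^{ML}(R, Q) = 0$, and then $\rho = 0$ is also an optimal choice, yielding the trivial solution $(\breve{T}, \breve{V}) = ({T\!\mathstrut}_{0}, {V\!\!\mathstrut}_{0})$ with $\breve{Q} = Q$, which preserves the support as required.
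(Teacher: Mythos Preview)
Your proposal is correct and follows essentially the same route as the paper. Both proofs hinge on Lemma~\ref{lemFullCap} to obtain ${E\mathstrut}_{0}(-1,Q)+C(\text{supp}(Q))\le 0$, then use the supporting-line/explicit formula of Theorem~\ref{thm3} together with the non-negativity of ${E\mathstrut}_{\!c}^{ML}$ to deduce $R\le C(\text{supp}(Q))\le -{E\mathstrut}_{0}(-1,Q)\le {R\mathstrut}_{-1}^{-}(Q)$, after which Theorem~\ref{thm3} gives both the equality ${E\mathstrut}_{\!c}(R,Q)={E\mathstrut}_{\!c}^{ML}(R,Q)$ and a maximizing $\rho\in(-1,0]$; the support preservation then follows from the explicit form (\ref{eqTrho})--(\ref{eqVrho}). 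Your handling of the boundary case $R={R\mathstrut}_{-1}^{-}(Q)$ (forcing the whole chain to collapse so that ${E\mathstrut}_{\!c}^{ML}(R,Q)=0$ and $\rho=0$ is admissible) makes explicit what the paper's proof leaves implicit, and your verification of $\text{supp}(\breve{Q})=\text{supp}(Q)$ spells out what the paper asserts without detail.
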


\bigskip

\begin{proof}
By Lemma~\ref{lemFullCap}
\begin{equation} \label{eqNonPos}
{E\mathstrut}_{0}(-1, Q) \, + \, C\big(\text{supp}(Q)\big) \; \leq \; 0.
\end{equation}
This means that by Lemmas~\ref{lemCombinedCD},~\ref{lemExplicitSolutionCD} and definition (\ref{eqRMinus}) the supporting line
$\,E(\,\,\widetilde{\!\!R}) = {E\mathstrut}_{0}(-1, Q) + \,\,\widetilde{\!\!R}\,$ of slope $1$
touches the graph of $\,{E\mathstrut}_{\!c}^{ML}(\,\,\widetilde{\!\!R}, \,Q)\,$
at
\begin{displaymath} 
\,\,\widetilde{\!\!R} \;\; = \;\; {R\mathstrut}_{\,-1}^{\,-}(Q) \;\; \geq \;\; C\big(\text{supp}(Q)\big),
\end{displaymath}
i.e. to the right of $\, \,\,\widetilde{\!\!R} \, = \, C\big(\text{supp}(Q)\big)$.
Then by Theorem~\ref{thm3} we conclude that two things hold ((a) and (b)):\newline
a) 
the graphs of
$\,{E\mathstrut}_{\!c}^{ML}(\,\,\widetilde{\!\!R}, \,Q)\,$
and $\,{E\mathstrut}_{\!c}(\,\,\widetilde{\!\!R}, \,Q)\,$
coincide for all $\, \,\,\widetilde{\!\!R}\,\leq\, C\big(\text{supp}(Q)\big)\,$ and\newline
b) the corresponding minima (\ref{eqDefImplicitCor}) and (\ref{eqExpNoTB})
share the same minimizing solutions there.\newline
In particular this holds at $\, \,\,\widetilde{\!\!R} \,=\, R \,\leq\, C\big(\text{supp}(Q)\big)$.
Then
$\,{E\mathstrut}_{\!c}^{ML}(R, Q) = {E\mathstrut}_{\!c}(R, Q)$, sharing the same solutions
$(\breve{T\mathstrut}, \breve{V\mathstrut})$.\newline
Since $\,C\big(\text{supp}(Q)\big) \geq R\,$ the inequality (\ref{eqNonPos}) with Theorem~\ref{thm3} imply also that the graph of $\,{E\mathstrut}_{\!c}^{ML}(\,\,\widetilde{\!\!R}, \,Q)\,$
touches at $\, \,\,\widetilde{\!\!R} \, = \, R$
some supporting line with slope parameter $\rho \in (-1, 0]$.
Then by Theorem~\ref{thm3} the unique solution $(\breve{T\mathstrut}, \breve{V\mathstrut})$ is determined according to (\ref{eqTrho})-(\ref{eqVrho}).
Then $\,\text{supp}(\breve{Q\mathstrut}) = \text{supp}(Q)$.
\end{proof}

\bigskip

\begin{lemma} [Convergence of iterations] \label{lemMonot}\newline
{\em Let $\,{\left\{\big(\breve{T\mathstrut}_{\ell}, \breve{V\mathstrut}_{\!\ell}\big)\right\}\mathstrut}_{\ell\,=\,0}^{+\infty}$
be a sequence of iterative solutions of (\ref{eqExpNoTB}) with $Q = {Q\mathstrut}_{\ell}$
at each iteration obtained from the previous solution as in (\ref{eqUpdate}).
If the initial distribution satisfies} $\,C\big(\text{supp}({Q\mathstrut}_{0})\big) \geq R\,$
{\em then
for each $\ell$ holds 
$\,{E\mathstrut}_{\!c}(R, {Q\mathstrut}_{\ell}) = {E\mathstrut}_{\!c}^{ML}(R, {Q\mathstrut}_{\ell})$, (\ref{eqDefImplicitCor}),
sharing the same solution $\big(\breve{T\mathstrut}_{\ell}, \breve{V\mathstrut}_{\!\ell}\big)$.}
\end{lemma}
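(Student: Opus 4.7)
My plan is to prove this by a straightforward induction on $\ell$, with Lemma~\ref{lemOneStep} doing essentially all of the work at each stage. That lemma packages exactly the three facts I need under the assumption $C\bigl(\text{supp}(Q)\bigr) \geq R$: (i) the minima $E_c(R,Q)$ and $E_c^{ML}(R,Q)$ coincide, (ii) they share the same minimizing joint distribution $(\breve{T}, \breve{V})$, and (iii) the marginal $\breve{Q}(x) = \sum_y \breve{T}(y)\breve{V}(x\,|\,y)$ has the same support as $Q$.

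For the base case $\ell = 0$, the hypothesis $C\bigl(\text{supp}(Q_0)\bigr) \geq R$ is given, so Lemma~\ref{lemOneStep}(2) directly yields $E_c(R, Q_0) = E_c^{ML}(R, Q_0)$ with shared minimizing solution $(\breve{T}_0, \breve{V}_0)$. For the inductive step, I would assume $C\bigl(\text{supp}(Q_\ell)\bigr) \geq R$. By part (3) of Lemma~\ref{lemOneStep}, the update (\ref{eqUpdate}) produces $\text{supp}(Q_{\ell+1}) = \text{supp}(\breve{Q}_\ell) = \text{supp}(Q_\ell)$, so $C\bigl(\text{supp}(Q_{\ell+1})\bigr) = C\bigl(\text{supp}(Q_\ell)\bigr) \geq R$. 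Applying Lemma~\ref{lemOneStep}(2) at $Q_{\ell+1}$ then gives $E_c(R, Q_{\ell+1}) = E_c^{ML}(R, Q_{\ell+1})$ with shared solution $(\breve{T}_{\ell+1}, \breve{V}_{\ell+1})$, closing the induction.

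The only subtlety worth flagging is that the update (\ref{eqUpdate}) must be well defined as applied to solutions of (\ref{eqExpNoTB}); but under the support assumption, the condition $C\bigl(\text{supp}(Q_\ell)\bigr) \geq R$ together with Lemma~\ref{lemFullCap} forces the touching supporting line at $R$ to have slope parameter $\rho \in (-1, 0]$, so by Theorem~\ref{thm3} the minimizer is unique and given by the Gallager-type distribution (\ref{eqTrho})--(\ref{eqVrho}). This is exactly what was used inside the proof of Lemma~\ref{lemOneStep} to conclude $\text{supp}(\breve{Q}) = \text{supp}(Q)$, and it is what makes the induction self-consistent.

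There is no serious obstacle here: the nontrivial content has already been absorbed into Lemma~\ref{lemOneStep}, whose key observation is that $C\bigl(\text{supp}(Q)\bigr) \geq R$ places $R$ to the left of the ``rightmost'' touching point ${R\mathstrut}_{-1}^{\,-}(Q)$ of the invariant supporting line of slope $1$, so the graphs of $E_c(\cdot, Q)$ and $E_c^{ML}(\cdot, Q)$ coincide at $R$ and share minimizers. Once that is in hand, the lemma being proved reduces to the support-preservation bookkeeping that the induction makes explicit.
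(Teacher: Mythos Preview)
Your proposal is correct and matches the paper's approach exactly: the paper's proof is the single line ``Follows from Lemma~\ref{lemOneStep} by induction,'' and you have simply spelled out that induction. Your additional remark about uniqueness of the minimizer via Theorem~\ref{thm3} is already contained in the proof of Lemma~\ref{lemOneStep} itself, so it is redundant but not wrong.
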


\bigskip

\begin{proof}
Follows from Lemma~\ref{lemOneStep} by induction.
\end{proof}

\bigskip

\begin{thm}[Convergence to zero]\label{thmConvAdapt}\newline
{\em Let $\,{\left\{\big(\breve{T\mathstrut}_{\ell}, \breve{V\mathstrut}_{\!\ell}\big)\right\}\mathstrut}_{\ell\,=\,0}^{+\infty}$
be a sequence of iterative solutions of (\ref{eqExpNoTB}) with $Q = {Q\mathstrut}_{\ell}$
at each iteration obtained from the previous solution as in (\ref{eqUpdate}).
If 
the initial distribution ${Q\mathstrut}_{0}$ satisfies the strict inequality (\ref{eqLowerThan}) for (\ref{eqDefImplicitCor})
then}
\begin{displaymath}
{E\mathstrut}_{\!c}(R, {Q\mathstrut}_{\ell})
\;\; = \;\;
{E\mathstrut}_{\!c}^{ML}(R, {Q\mathstrut}_{\ell})
\;\;\;\; \overset{\ell \, \rightarrow\,\infty}{\searrow} \;\;\;\; 0.
\end{displaymath}
\end{thm}

\bigskip

\begin{proof}
From (\ref{eqLowerThan}) follows $\,C\big(\text{supp}({Q\mathstrut}_{0})\big) \geq R$.
Then Lemma~\ref{lemMonot} applies and the claim follows by Lemma~\ref{lemConvergenceReg}.
\end{proof}

\bigskip

\section{Comparison to the Blahut algorithm}\label{ComptoBlahut}

\bigskip

The fixed-rate iterative computation of the correct-decoding exponent according to (\ref{eqExpNoTB}) and (\ref{eqUpdate}) can be compared to the fixed-distortion version of the Blahut algorithm \cite{Blahut72}, \cite{CsiszarTusnady84}
for the rate-distortion function computation.
As we have seen from (\ref{eqExpNoTB})-(\ref{eqDefImplicitCor}) and (\ref{eqOptimum}), the correct-decoding exponent can be written as a double minimum,
quite similarly to the rate-distortion function:

\begin{displaymath}
\begin{array}{c | c}
\text{double minimization} & \text{iterative computation} \\
\hline
\displaystyle{ \;\;\;\;\;\;\, 0 \;\;\; = \;\;\; \min_{\substack{\\Q(x)}}
\;\;
\min_{\substack{\\T(y), \, V(x \,|\, y): \\ \\\sum_{x, \, y} T(y)V(x\,|\,y)\log\frac{V(x\,|\,y)}{Q(x)} \; \geq \; R}}
D(T \circ V \, \| \, Q \circ P)} \; &
\;
{\displaystyle
{Q\mathstrut}_{\ell\,+\,1}(x) \; = \; \sum_{y}\breve{T\mathstrut}_{\ell}(y)\breve{V\mathstrut}_{\!\ell}(x\, | \, y)}
\\
\hline
\displaystyle{ R(D) \;\;\; = \;\;\; \min_{\substack{\\Q(x)}} \;\,\,
\;\;
\min_{\substack{\\V(x\, | \, y): \\ \\\sum_{x, \, y} T(y)V(x \, | \,y)\,d(y, \,x) \; \leq \; D}}
\;\,\, D(T\circ V\,\|\, Q \times T)} \; &
\;
{\displaystyle {Q\mathstrut}_{\ell\,+\,1}(x) \; = \; \sum_{y}\,T(y)\,\breve{V\mathstrut}_{\!\ell}(x\,|\,y)}
\end{array}
\end{displaymath}

\bigskip

The rate-distortion function $R(D)$ has also a meaning of an optimal probability exponent \cite{ZamirRose01}.
Here in the second row the discrete memoryless source is denoted as $T(y)$. The i.i.d. source reproduction distribution is denoted as $Q(x)$. The distortion measure is $d(y, x)$.

The algorithm for $R(D)$ is an alternating minimization procedure \cite{CsiszarTusnady84}, i.e.
$\breve{V\mathstrut}_{\!\ell}(x\,|\,y)$ solves the inner minimum of $D(T\circ V\,\|\, {Q\mathstrut}_{\ell} \times T)$
and then ${Q\mathstrut}_{\ell\,+\,1}(x)$ in turn minimizes $D(T\circ \breve{V\mathstrut}_{\!\ell}\,\|\, Q \times T)$.
On the other hand, the proposed algorithm for the correct-decoding exponent is {\em not}
exactly an alternating minimization procedure. Specifically, observe that ${Q\mathstrut}_{\ell\,+\,1}(x)$
minimizes simultaneously both $D\big(\breve{T\mathstrut}_{\ell} \circ \breve{V\mathstrut}_{\!\ell} \, \| \, Q \circ P\big)$ and
$D\big(\breve{T\mathstrut}_{\ell} \circ \breve{V\mathstrut}_{\!\ell} \, \| \, \breve{T\mathstrut}_{\ell} \times Q\big)$ thus violating the condition under the inner minimum with the same $\breve{T\mathstrut}_{\ell} \circ \breve{V\mathstrut}_{\!\ell}$.
Nonetheless, this results in a monotonically non-increasing sequence of the inner minima over $T\circ V$
at least given the condition on ${Q\mathstrut}_{0}$ of Lemma~\ref{lemMonot}.
The sequence converges all the way down to zero at least under the initial condition (\ref{eqLowerThan}) according to Theorem~\ref{thmConvAdapt}.

Alternative metrics from the family (\ref{eqFamily}) can also be used for construction of similar algorithms \cite{TridenskiZamir18}.
Remarkably, the last metric in (\ref{eqFamily}) (with $P(y \, | \, x)$)
results in a very similar algorithm which is an alternating minimization procedure of \cite{CsiszarTusnady84}.

\bigskip

\section{Computation of the exponent function}\label{FixedSlope}

\bigskip

In the iterative minimization procedure at fixed rate $R$ described in the preceding sections
the minimization itself is implicit and the slope parameter $\rho$ is different in each iteration.
Here, the slope parameter is fixed and the computations acquire an explicit form.
The fixed-slope version of the iterative procedure (\ref{eqUpdate}) is similar to the fixed-slope version of the Blahut algorithm \cite{Blahut72},
\cite{Csiszar74}
for $R(D)$ computation
and presents an alternative for the Arimoto algorithm \cite{Arimoto76} for computation of $\,\min_{\,Q}{E\mathstrut}_{0}(\rho, Q)$, $\,\rho \in (-1, 0)$.
Similarly as the Blahut and the Arimoto algorithms, it does not require any special conditions for convergence.
We do not use this version in the rest of the paper and only desribe it in this section.

On the basis of (\ref{eqExplicitSolution}) let us define for $\rho \in (-1, 0)$
\begin{displaymath}
{F\mathstrut}_{\!\rho}(T\circ V, \,Q) \;\; \triangleq \;\;
D(T\circ V \, \| \, Q \circ P)
\,
+ \,
\rho D(T \circ V \, \| \, T \times Q).
\end{displaymath}

\bigskip

\begin{lemma} [Fixed slope] \label{lemFixSlope}\newline
{\em An iterative update of the parameters $\,({T\mathstrut}_{\ell}, {V\mathstrut}_{\!\ell})\,$ and $\,{Q\mathstrut}_{\ell}\,$ for $\ell = 0$, $1$, $2$, ..., starting from ${Q\mathstrut}_{0}\,$:}
\begin{align}
{T\mathstrut}_{\ell}(y) \;\; & = \;\;
\frac{1}{{K\mathstrut}_{\ell}}\,
\bigg[\sum_{a}{Q\mathstrut}_{\ell}(a)P^{\frac{1}{1\,+\,\rho}}(y\,|\,a)\bigg]^{1\,+\,\rho},
\nonumber \\
{V\mathstrut}_{\!\ell}(x \,|\,y) \;\; & = \;\;
\;\;\;
\frac{1}{{K\mathstrut}_{\ell}(y)}\,\,
{Q\mathstrut}_{\ell}(x)P^{\frac{1}{1\,+\,\rho}}(y\,|\,x),
\;\;\;\;\;\; {T\mathstrut}_{\ell}(y) > 0,
\nonumber \\
{Q\mathstrut}_{\ell\,+\,1}(x) \;\; & = \;\;
\sum_{y}{T\mathstrut}_{\ell}(y){V\mathstrut}_{\!\ell}(x\,|\,y),
\nonumber
\end{align}
{\em results in a monotonically non-increasing sequence}
\begin{displaymath}
{F\mathstrut}_{\!\rho}({T\mathstrut}_{0}\circ {V\mathstrut}_{\!0}, \,{Q\mathstrut}_{0})
\; \geq \;
{F\mathstrut}_{\!\rho}({T\mathstrut}_{0}\circ {V\mathstrut}_{\!0}, \,{Q\mathstrut}_{1})
\; \geq \;
{F\mathstrut}_{\!\rho}({T\mathstrut}_{1}\circ {V\mathstrut}_{\!1}, \,{Q\mathstrut}_{1})
\; \geq \;
\cdots.
\end{displaymath}
\end{lemma}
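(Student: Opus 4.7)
The plan is to interleave two monotonicity steps corresponding to the two kinds of updates that appear consecutively in the sequence: first the transition from $F_\rho(T_\ell \circ V_\ell, Q_\ell)$ to $F_\rho(T_\ell \circ V_\ell, Q_{\ell+1})$ (the $Q$--update), and then the transition from $F_\rho(T_\ell \circ V_\ell, Q_{\ell+1})$ to $F_\rho(T_{\ell+1} \circ V_{\ell+1}, Q_{\ell+1})$ (the $(T, V)$--update).

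For the second of these two transitions, I would observe that, for fixed $Q_{\ell+1}$, the pair $(T_{\ell+1}, V_{\ell+1})$ defined in the lemma is precisely the unique minimizer of $F_\rho(T\circ V, \,Q_{\ell+1})$ supplied by Lemma~\ref{lemExplicit}: its formulas reproduce (\ref{eqTrho})--(\ref{eqVrho}) with $Q$ replaced by $Q_{\ell+1}$, and the normalizations $K_\ell$ and $K_\ell(y)$ realize the hidden proportionality constants. Consequently $F_\rho(T_{\ell+1} \circ V_{\ell+1}, \,Q_{\ell+1}) = {E\mathstrut}_{0}(\rho, Q_{\ell+1}) \;\le\; F_\rho(T_\ell \circ V_\ell, \,Q_{\ell+1})$.

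For the first transition, I would reuse the algebraic identity already exploited in display~(\ref{eqSqueeze}) within the proof of Lemma~\ref{lemMonotML}. A direct expansion gives
\[
F_\rho(T\circ V, \,Q) \;=\; \sum_{x, \, y} T(y) V(x \,|\, y) \log \frac{T(y)}{P(y \,|\, x)} \;+\; (1+\rho) \sum_{x, \, y} T(y) V(x \,|\, y) \log \frac{V(x \,|\, y)}{Q(x)},
\]
so if one sets $\tilde{Q}(x) \triangleq \sum_y T(y) V(x \,|\, y)$, then
\[
F_\rho(T\circ V, \,Q) \;-\; F_\rho(T\circ V, \,\tilde{Q}) \;=\; (1+\rho)\, D(\tilde{Q} \,\|\, Q) \;\ge\; 0,
\]
where the nonnegativity uses the assumption $\rho > -1$. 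Applying this with $(T, V, Q) = (T_\ell, V_\ell, Q_\ell)$ and $\tilde{Q} = Q_{\ell+1}$ yields the missing inequality $F_\rho(T_\ell \circ V_\ell, \,Q_\ell) \ge F_\rho(T_\ell \circ V_\ell, \,Q_{\ell+1})$.

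Chaining the two inequalities for each $\ell$ produces the claimed monotone non-increasing sequence. The argument is essentially a rediscovery, in the fixed-slope regime, of the same two ingredients used in Lemma~\ref{lemMonotML}; I do not expect a real obstacle. The only points that require attentive verification are that the explicit update formulas genuinely coincide with the Lemma~\ref{lemExplicit} minimizer (including the $K_\ell$, $K_\ell(y)$ normalizations), and that $\rho > -1$ is used strictly so the residual KL term $(1+\rho) D(\tilde{Q}\,\|\,Q)$ contributes with the correct sign.
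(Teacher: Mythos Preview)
Your proposal is correct and matches the paper's own proof essentially line for line: the paper also dispatches the $(T,V)$-update by invoking Lemma~\ref{lemExplicit} and the $Q$-update by the identity~(\ref{eqSqueeze}), exactly as you do. Your additional remarks about the normalizing constants and the role of $\rho>-1$ are accurate and make the argument slightly more self-contained than the paper's two-line version.
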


\bigskip

\begin{proof}
The inequality $\,{F\mathstrut}_{\!\rho}({T\mathstrut}_{\ell\,-\,1}\circ {V\mathstrut}_{\!\ell\,-\,1}, \,{Q\mathstrut}_{\ell})
\geq
{F\mathstrut}_{\!\rho}({T\mathstrut}_{\ell}\circ {V\mathstrut}_{\!\ell}, \,{Q\mathstrut}_{\ell})\,$
follows by Lemma~\ref{lemExplicit}.
The inequality $\,{F\mathstrut}_{\!\rho}({T\mathstrut}_{\ell}\circ {V\mathstrut}_{\!\ell}, \,{Q\mathstrut}_{\ell})
\geq
{F\mathstrut}_{\!\rho}({T\mathstrut}_{\ell}\circ {V\mathstrut}_{\!\ell}, \,{Q\mathstrut}_{\ell\,+\,1})\,$
follows by (\ref{eqSqueeze}).
\end{proof}

\bigskip

\begin{thm}[Fixed slope convergence]\label{thmFixSlope}
\begin{equation} \label{eqE0min}
{F\mathstrut}_{\!\rho}({T\mathstrut}_{\ell}\circ {V\mathstrut}_{\!\ell}, \,{Q\mathstrut}_{\ell})
\;\;\;\; \overset{\ell \, \rightarrow\,\infty}{\searrow} \;\;\;\;
\min_{\substack{\\Q:\;\;\text{supp}(Q)\,\subseteq \,\text{supp}({Q\mathstrut}_{0})}}\,
{E\mathstrut}_{0}(\rho, Q).
\end{equation}
\end{thm}

\bigskip

\begin{proof}
Due to similarity to the fixed-slope rate-distortion function computation,
the proof of Csisz\'ar \cite{Csiszar74} is used here as a blueprint.
The iterations start from ${Q\mathstrut}_{0}$.
Consider some other distribution ${Q\mathstrut}^{*}$.
Suppose a joint distribution $\breve{T\mathstrut}\circ\breve{V\mathstrut}$ is obtained from ${Q\mathstrut}^{*}$
according to (\ref{eqTrho})-(\ref{eqVrho}).
From $\breve{T\mathstrut}\circ\breve{V\mathstrut}$
the marginal distribution is obtained as $\breve{Q\mathstrut}(x) = \sum_{y}\breve{T\mathstrut}(y)\breve{V\mathstrut}(x \, | \, y)$.
Suppose $\text{supp}({Q\mathstrut}^{*}) \subseteq \text{supp}({Q\mathstrut}_{0})$.
Then also $\text{supp}(\breve{Q\mathstrut}) \subseteq \text{supp}({Q\mathstrut}_{0})$.
With the help of the identity
\begin{displaymath}
{T\mathstrut}_{\ell}^{1\,+\,\rho}(y){V\mathstrut}_{\!\ell}^{1\,+\,\rho}(x \, | \, y) \; = \;
{Q\mathstrut}_{\ell}^{1\,+\,\rho}(x)P(y\,|\,x){T\mathstrut}_{\ell}^{\rho}(y)\exp \big\{{E\mathstrut}_{0}(\rho, {Q\mathstrut}_{\ell})\big\},
\end{displaymath}
the validity of the following equality can be verified:
\begin{align}
& \underbrace{{F\mathstrut}_{\!\rho}({T\mathstrut}_{\ell}\circ {V\mathstrut}_{\!\ell}, \,{Q\mathstrut}_{\ell})}_{=\; {E\mathstrut}_{0}(\rho, \, {Q\mathstrut}_{\ell})} \; - \;
{F\mathstrut}_{\!\rho}(\breve{T\mathstrut}\circ\breve{V\mathstrut}, \,\breve{Q\mathstrut})
\; + \;
(1 + \rho)\underbrace{\sum_{x, \, y}\breve{T\mathstrut}(y)\breve{V\mathstrut}(x \, | \, y)\log\frac{\breve{T\mathstrut}(y)\breve{V\mathstrut}(x \, | \, y){Q\mathstrut}_{\ell\,+\,1}(x)}{\breve{Q\mathstrut}(x){T\mathstrut}_{\ell}(y){V\mathstrut}_{\!\ell}(x\,|\,y)}}_{\geq \; 0}
\; - \;
\rho D(\breve{T\mathstrut} \, \| \, {T\mathstrut}_{\ell})
\nonumber \\
& = \;
(1 + \rho) \sum_{x}\breve{Q\mathstrut}(x)\log\frac{{Q\mathstrut}_{\ell\,+\,1}(x)}{{Q\mathstrut}_{\ell}(x)}.
\label{eqEight}
\end{align}
This plays the role analogous to \cite[eq.~8]{Csiszar74}. From (\ref{eqEight}) we have the upper bound:
\begin{equation}\label{eqBothSides}
{F\mathstrut}_{\!\rho}({T\mathstrut}_{\ell}\circ {V\mathstrut}_{\!\ell}, \,{Q\mathstrut}_{\ell}) \; - \;
{F\mathstrut}_{\!\rho}(\breve{T\mathstrut}\circ\breve{V\mathstrut}, \,\breve{Q\mathstrut})
\; \leq \;
(1 + \rho) \sum_{x}\breve{Q\mathstrut}(x)\log\frac{{Q\mathstrut}_{\ell\,+\,1}(x)}{{Q\mathstrut}_{\ell}(x)}.
\end{equation}
Similarly as in \cite{Csiszar74}, summing both sides of (\ref{eqBothSides}) over $0 \leq \ell \leq N$
we obtain a telescoping sum on the upper side of the inequality:
\begin{align}
\sum_{\ell \, = \, 0}^{N}\Big[\underbrace{{F\mathstrut}_{\!\rho}({T\mathstrut}_{\ell}\circ {V\mathstrut}_{\!\ell}, \,{Q\mathstrut}_{\ell})}_{=\; {E\mathstrut}_{0}(\rho, \, {Q\mathstrut}_{\ell})} \; - \;
{F\mathstrut}_{\!\rho}(\breve{T\mathstrut}\circ\breve{V\mathstrut}, \,\breve{Q\mathstrut})\Big]
\; & \leq \;
(1 + \rho) \sum_{x}\breve{Q\mathstrut}(x)\log\frac{{Q\mathstrut}_{N\,+\,1}(x)}{{Q\mathstrut}_{0}(x)}
\nonumber \\
& \leq \;
(1 + \rho) \sum_{x}\breve{Q\mathstrut}(x)\log\frac{\breve{Q\mathstrut}(x)}{{Q\mathstrut}_{0}(x)}.
\label{eqTelescoping}
\end{align}
Suppose now that ${Q\mathstrut}^{*}$ is some distribution achieving the minimum in (\ref{eqE0min}).
Then ${E\mathstrut}_{0}(\rho, {Q\mathstrut}_{\ell}) \geq {E\mathstrut}_{0}(\rho, {Q\mathstrut}^{*}) = {F\mathstrut}_{\!\rho}(\breve{T\mathstrut}\circ\breve{V\mathstrut}, \,\breve{Q\mathstrut}) = {E\mathstrut}_{0}(\rho, \breve{Q\mathstrut})\,$
and all the differences on the LHS of (\ref{eqTelescoping}) are non-negative.
On the other hand, since $\text{supp}(\breve{Q\mathstrut}) \subseteq \text{supp}({Q\mathstrut}_{0})$,
the upper bound on the RHS of (\ref{eqTelescoping}) is bounded. Since $N$ on the left can be taken to $+\infty$ the claim follows.
\end{proof}

Other similar fixed-slope algorithms, different from \cite{Arimoto76}, can be developed using the second and the third metrics in the family (\ref{eqFamily}) (the ones with $\Phi(x \, | \, y)$).

\bigskip

\section{Channel input adaptation}\label{ChInAd}

\bigskip

As we have seen, the correct-decoding exponent for channels exhibits properties reminiscent of the rate-distortion function for sources.
In \cite{ZamirRose01} the phenomenon of natural type selection in lossy source-{\em encoding} was found to be a stochastic counterpart
of the Blahut algorithm.
In this section we describe an analogous phenomenon in noisy-channel {\em decoding}
as a stochastic counterpart of the fixed-rate iterative minimization of the correct-decoding exponent
presented in Section~\ref{CordecExpComp}.

\bigskip

\subsection{Adaptation scheme}\label{Adscheme}

\bigskip

The decoder looks for the message $\widehat{m}$ such that
\begin{equation} \label{eqFirstDec}
D(T\circ {V\!\mathstrut}_{\widehat{m}} \,\|\, T \times Q) \; > \; D(T\circ {V\!\mathstrut}_{m} \,\|\, T \times Q), \;\;\;\;\;\; \forall \;\; m \, \neq \, \widehat{m},
\end{equation}
where $T(y)$ is the type of the received block ${\bf y}$ and ${V\!\mathstrut}_{m}(x \, | \, y)$ represents the conditional type of each codeword ${\bf x\mathstrut}_{m}$ given the received block.
This is equivalent to the decoding rule (\ref{eqDecRule}) with
$A(T\circ V) \equiv B(T\circ V) \equiv D(T\circ V \,\|\, T \times Q)$,
where $D(T\circ V \,\|\, T \times Q)$ is the metric average (\ref{eqMMIlikeDec}).
The error exponent given for this case by Corollary~\ref{cor1} is ${E\mathstrut}_{\!e}(R, Q)$ (\ref{eqDefImplicit}).
This is the same as the error exponent of the ML decoder according to Theorem~\ref{thm2}.
We assume that the communication rate $R$ is lower than the mutual information $I(Q\circ P)$
so that
\begin{displaymath}
{E\mathstrut}_{\!e}(R, Q) \; > \; 0,
\end{displaymath}
and the communication is reliable.

In addition to the comparison (\ref{eqFirstDec}), the decoder keeps track of the distance of the highest metric average $D(T\circ {V\!\mathstrut}_{m} \,\|\, T \times Q)$
to the second highest one and compares this distance to some constant parameter $\Delta > 0$.
The decoder then sends reliably a bit of feedback, $F = 0$ or $1$,
to the transmitter
according to the following rule:
\begin{equation} \label{eqSecondDec}
D(T\circ {V\!\mathstrut}_{\widehat{m}} \,\|\, T \times Q) \; > \; D(T\circ {V\!\mathstrut}_{m} \,\|\, T \times Q) \, + \, \Delta, \;\;\;\;\;\; \forall \;\; m \, \neq \, \widehat{m},
\;\;\;\;\;\;\;\;\;
\Longleftrightarrow
\;\;\;\;\;\;\;\;\;
F \, = \, 1.
\end{equation}
In case $\widehat{m}$ does not satisfy (\ref{eqSecondDec}) or there does not even exist a unique $\widehat{m}$, i.e.
an $\widehat{m}$ strictly satisfying (\ref{eqFirstDec}), the decoder sends $\,F = 0\,$ (Fig.~\ref{fig1}).
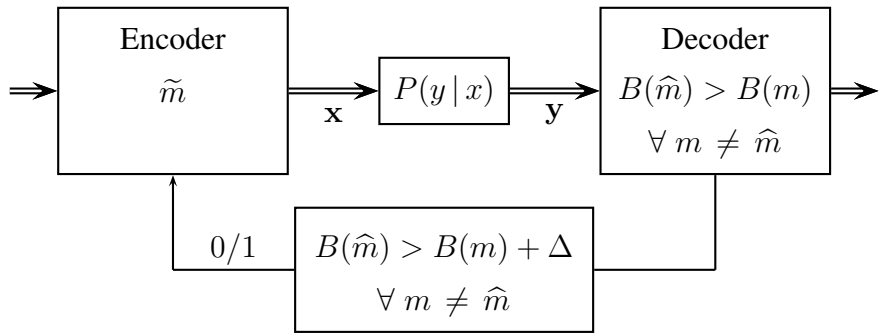
\begin{figure}[!t]
\psset{unit=.7mm}
\begin{center}
\begin{pspicture}(-20, 23)(145, 85)

\psframe(-11, 53)(33, 85)
\rput(11, 79){Encoder}
\rput(11, 69){$\widetilde{m}$}
\psline[doubleline = true]{->}(-20, 69)(-11, 69)
\psline[doubleline = true]{->}(33, 69)(50, 69)

\rput(41.5, 64.5){${\bf x}$}

\psframe(50, 62.5)(75, 75.5) \rput(62.5, 69){$P(y\,|\,x)$}

\rput(83.5, 64.5){${\bf y}$}

\psline[doubleline = true]{->}(136, 69)(145, 69)
\psline[doubleline = true]{->}(75, 69)(92, 69)

\psframe(92, 53)(136, 85)
\rput(114, 79){Decoder}
\rput(114, 69){$B(\widehat{m}) > B(m)$}
\rput(114, 59){$\forall \; m \, \neq \, \widehat{m}$}

\psframe(34, 23)(91, 47)

\rput(62.5, 39){$B(\widehat{m}) > B(m) + \Delta$}
\rput(62.5, 29){$\forall \; m \, \neq \, \widehat{m}$}

\psline{-}(114, 53)(114, 35)
\psline{-}(91, 35)(114, 35)

\psline{-}(11, 35)(34, 35)
\psline{<-}(11, 53)(11, 35)

\rput(22.5, 39){$0 / 1$}

\end{pspicture}
\end{center}
\caption{Channel with a $1$-bit feedback per block. $B(m) \triangleq D(T\circ {V}_{m} \,\|\, T \times Q)$,
where $T$ is the type of the received block, ${V}_{m}$ is the conditional type of the codeword for the message $m$, and $Q$ is the i.i.d. codebook generating distribution.}
\label{fig1}
\end{figure}
The procedure (\ref{eqSecondDec})
is equivalent to the decoding rule (\ref{eqDecRule}) with
$B(T\circ V) \equiv D(T\circ V \,\|\, T \times Q)$ and $A(T\circ V) \equiv D(T\circ V \,\|\, T \times Q) - \Delta$.
Let ${\cal E\mathstrut}^{c}$ denote the correct-decoding event in the random code ensemble~/~the channel in this scenario,
i.e. when $\widehat{m}$ is indeed the correct message and it does satisfy (\ref{eqSecondDec}).

\bigskip

\begin{lemma} [Natural selection exponent] \label{lemNaturalSelectExp}\newline
{\em If $\,R + \Delta \, < \, {R\mathstrut}_{\,-1}^{\,-}(Q)$, as defined in (\ref{eqRMinus}), then}
\begin{equation} \label{eqNatSelectExp}
\lim_{n\,\rightarrow \,\infty}\,\frac{\Pr\big\{{\cal E\mathstrut}^{c}\big\}}{-n}
\;\; = \;\;
{E\mathstrut}_{\!c}(R + \Delta,\, Q).
\end{equation}
\end{lemma}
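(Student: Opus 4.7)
The plan is to recognize the event $\mathcal{E}^c$ as the correct-decoding event for the general decoder (\ref{eqDecRule}) with a suitable choice of the metric-splitting functions, then apply Theorem~\ref{thmCorr} directly, and finally verify that the hypothesis $R + \Delta < R^{-}_{-1}(Q)$ yields the continuity needed to invoke that theorem.

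First, observe that the feedback bit $F = 1$ is sent, and $\widehat{m}$ coincides with the true message, precisely when the true message $m$ satisfies the winning condition (\ref{eqDecRule}) under the choice
\begin{equation}
B(T \circ V) \;\equiv\; D(T \circ V \,\|\, T \times Q), \qquad A(T \circ V) \;\equiv\; D(T \circ V \,\|\, T \times Q) - \Delta.
\end{equation}
Since $\Delta > 0$, the required dominance $A(T \circ V) \leq B(T \circ V)$ holds everywhere, and both $A$ and $B$ are continuous functions of $T \circ V$ on the support of $Q \circ P$. Hence $\Pr\{\mathcal{E}^c\} = P_c$ in the notation of Theorem~\ref{thmCorr}.

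Second, I would invoke Theorem~\ref{thmCorr} with this choice of $A, B$. The conclusion reads
\begin{equation}
\lim_{n \to \infty} \frac{\log \Pr\{\mathcal{E}^c\}}{-n} \;=\;
\min_{\substack{T(y),\, V(x|y):\\ D(T \circ V \,\|\, T \times Q) \,-\, \Delta \;\geq\; R}}
\Big\{ D(T \circ V \,\|\, Q \circ P) \Big\}
\;=\; E_c(R + \Delta,\, Q),
\end{equation}
where the last equality is a direct substitution using the definition (\ref{eqExpNoTB}). This is the claimed expression; the only thing left to justify is the continuity hypothesis in Theorem~\ref{thmCorr}, namely continuity of the function $\,\rho \mapsto E_c(\rho + \Delta,\, Q)\,$ at the point $\rho = R$, equivalently continuity of $E_c(\cdot,\, Q)$ at $R + \Delta$.

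Third, to establish this continuity I would appeal to the explicit formula (\ref{eqExplicitSCD}) of Theorem~\ref{thm3}, which asserts that
\begin{equation}
E_c(R',\, Q) \;=\; \max_{-1 \,\leq\, \rho \,\leq\, 0}\, \big\{ E_0(\rho, Q) - \rho R'\big\}, \qquad \forall\; R' \;\leq\; R^{+}_{-1}(Q).
\end{equation}
By the definitions (\ref{eqRMinus})--(\ref{eqRPlus}), $R^{-}_{-1}(Q) \leq R^{+}_{-1}(Q)$, so the hypothesis $R + \Delta < R^{-}_{-1}(Q)$ places $R + \Delta$ strictly in the interior of the interval on which the explicit formula applies. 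On that interval $E_c(\cdot, Q)$ is a finite pointwise supremum of the affine functions $R' \mapsto E_0(\rho, Q) - \rho R'$ indexed by $\rho$ in the compact set $[-1, 0]$, with $E_0(\rho, Q)$ finite and continuous in $\rho$ (cf.\ (\ref{eqE0}) and (\ref{eqE0Minus1})); hence $E_c(\cdot, Q)$ is convex and continuous in $R'$ there, and in particular continuous at $R + \Delta$. This closes the hypothesis of Theorem~\ref{thmCorr} and completes the argument.

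I expect the only substantive step to be this continuity verification; the rest is a direct substitution into the general framework of Theorem~\ref{thmCorr}. It is worth noting that the strict inequality $R + \Delta < R^{-}_{-1}(Q)$ is used merely to keep $R + \Delta$ away from the regime where $E_c(R', Q)$ potentially jumps to $+\infty$, where continuity could fail and the exponent (\ref{eqNatSelectExp}) would have to be interpreted via the limit from below.
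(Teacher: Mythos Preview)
Your proposal is correct and follows essentially the same approach as the paper: identify $\mathcal{E}^c$ with the correct-decoding event of the split decoder so that $E_c^A(R,Q)=E_c(R+\Delta,Q)$, then use the explicit formula of Theorem~\ref{thm3} to deduce convexity and hence continuity of $E_c(\cdot,Q)$ at $R+\Delta$, which licenses the application of Theorem~\ref{thmCorr}. Your write-up is simply a more detailed version of the paper's three-line argument.
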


\bigskip

\begin{proof}
Comparing the definition in (\ref{eqCorrAB}) with (\ref{eqExpNoTB}) we see that $\,{E\mathstrut}_{\!c}^{A}(R, Q) \equiv {E\mathstrut}_{\!c}(R + \Delta,\, Q)$.
Given that $\,R + \Delta \, < \, {R\mathstrut}_{\,-1}^{\,-}(Q)\,$ by the explicit expression for ${E\mathstrut}_{\!c}(R, Q)$ (\ref{eqExplicitSCD}) of Theorem~\ref{thm3}
we conclude that ${E\mathstrut}_{\!c}(R + \Delta,\, Q)$ is continuous at $R$ as a convex ($\cup$) function.
Provided this continuity, Theorem~\ref{thmCorr} then asserts that the exponent in the probability of ${\cal E\mathstrut}^{c}$ is given by ${E\mathstrut}_{\!c}(R + \Delta,\, Q)$.
\end{proof}

\bigskip

Note that the events $\{F = 1\}$ and ${\cal E\mathstrut}^{c}$ are not the same.
It can happen that an incorrect message $\widehat{m}$ satisfies (\ref{eqSecondDec}).
Then $\{F = 1\}$ will hold but ${\cal E\mathstrut}^{c}$ will not.
In order to ensure that the two events are the same with high probability,
we further assume that $\Delta$ is small enough so that
\begin{equation} \label{eqErrorgrtCorrect}
{E\mathstrut}_{\!e}(R, Q) \; > \; {E\mathstrut}_{\!c}(R + \Delta,\, Q).
\end{equation}
Under this condition\footnote{A better bound than ${E\mathstrut}_{\!e}(R, Q)$ would be the exponent of undetected error for the decoder (\ref{eqDecRule}) with $B(T\circ V) \equiv D(T\circ V \,\|\, T \times Q)$ and $A(T\circ V) \equiv D(T\circ V \,\|\, T \times Q) - \Delta$.}
and the condition of Lemma~\ref{lemNaturalSelectExp}, given $\{F = 1\}$ with high probability holds also ${\cal E\mathstrut}^{c}$.
Given the condition of Lemma~\ref{lemNaturalSelectExp}, ${E\mathstrut}_{\!c}(R + \Delta,\, Q)$ is the same as the correct-decoding exponent of the ML decoder ${E\mathstrut}_{\!c}^{ML}(R + \Delta,\, Q)$ according to Theorem~\ref{thm3}.
This situation is depicted in Fig.~\ref{graph12}. There on the left graph
$\,{E\mathstrut}_{\!e}(R, Q) > {E\mathstrut}_{\!c}(R + \Delta,\, Q) = 0\,$
while on the right graph $\,{E\mathstrut}_{\!e}(R, Q) > {E\mathstrut}_{\!c}(R + \Delta,\, Q) > 0$.

On the other hand, the probability exponent of the event $\{F = 0\}$ is given by ${E\mathstrut}_{\!e}(R + \Delta, \,Q)$ according to Corollary~\ref{cor1}.
Indeed, since ${E\mathstrut}_{\!e}(R, Q) > 0$ we have also $\,{E\mathstrut}_{\!e}(R, Q) > {E\mathstrut}_{\!e}(R + \Delta, \,Q)$,
then the exponent in the probability of undetected error in (\ref{eqSecondDec}) is also higher than ${E\mathstrut}_{\!e}(R + \Delta, \,Q)$.

In case $F = 1$, which is a rare event when ${E\mathstrut}_{\!c}(R + \Delta,\, Q) > 0$,
the system parameter $Q$ is updated.
A new codebook is adopted by both the encoder and the decoder according
to the type of the transmitted codeword ${\bf x\mathstrut}_{\widetilde{m}}\,$:
\begin{displaymath}
Q'(x) \; = \; {T\!\mathstrut}_{\widetilde{m}}(x) \; = \; {T\!\mathstrut}_{\widehat{m}}(x),
\end{displaymath}
where ${T\!\mathstrut}_{m}(x) = \sum_{y}T(y){V\!\mathstrut}_{m}(x \, | \, y)$.
Under the condition (\ref{eqErrorgrtCorrect}) and the condition of Lemma~\ref{lemNaturalSelectExp}, the type of the transmitted codeword is
known at the decoder with high probability also given the event $\{F = 1\}$.
In case the feedback $F = 0$, the codebook distribution $Q$ remains unchanged.
To summarize:
\begin{displaymath}
\begin{array}{c|c|c}
\text{Feedback} & \text{Encoder} & \text{Decoder} \\
& & \\[-1em]
\hline
& & \\[-1em]
F\mathstrut\,=\,1 & Q(x)\,\leftarrow \, {T\!\mathstrut}_{\widetilde{m}}(x) & Q(x)\,\leftarrow \, {T\!\mathstrut}_{\widehat{m}}(x) \\
& & \\[-1em]
\hline
& & \\[-1em]
F\,=\,0 & - & -
\end{array}
\end{displaymath}

\bigskip

\subsection{Natural type selection}\label{NatTS}

\bigskip

The joint type $T(y){V\!\mathstrut}_{\widetilde{m}}(x \, | \, y)$ of the transmitted and the received blocks
given the event $\{F = 1\}$ or ${\cal E\mathstrut}^{c}$
is related to the probability exponent of this event $\,{E\mathstrut}_{\!c}(R + \Delta,\, Q)$.

\bigskip

\begin{thm}[Natural type selection]\label{thmConvType}\newline
{\em If $\,R + \Delta \, < \, {R\mathstrut}_{\,-1}^{\,-}(Q)$, as defined in (\ref{eqRMinus}), then given the event ${\cal E\mathstrut}^{c}$ the joint type of the transmitted and the received words $({\bf X}, {\bf Y})$ converges in probability to the minimizing distribution of $\,{E\mathstrut}_{\!c}(R + \Delta,\, Q)$ (\ref{eqExpNoTB}).}
\end{thm}

\bigskip

\begin{proof}
By the preceding Lemma~\ref{lemNaturalSelectExp}
the exponent of ${\cal E\mathstrut}^{c}$ is given by ${E\mathstrut}_{\!c}(R + \Delta,\, Q)$.
Therefore by (\ref{eqExplicitSCD}) it is finite.
By the last assertion of Lemma~\ref{lemEps} for any $\epsilon > 0$ given the event
\begin{displaymath}
{\cal G} \;\;\; \triangleq \;\;\; \Big\{
({\bf X}, {\bf Y})
\;\;\; \text{of any type} \;\;\; T\circ V
\;\;\; \text{s.t.} \;\;\;
 D(T \circ V \, \| \, T \times Q) \, - \, \Delta
\; \leq \;  R \, - \, \epsilon\Big\}
\end{displaymath}
the exponent of ${\cal E\mathstrut}^{c}$ is $+\infty$.
Therefore, given ${\cal E\mathstrut}^{c}$ with high probability holds also ${\cal E\mathstrut}^{c} \cap\, {\cal G\mathstrut}^{c}$.

On the other hand, the exponent in the probability of the event
\begin{displaymath}
{\cal H} \;\;\; \triangleq \;\;\; \Big\{
({\bf X}, {\bf Y})
\;\;\; \text{of any type} \;\;\; T\circ V
\;\;\; \text{s.t.} \;\;\;
 D(T \circ V \, \| \, Q \circ P)
\; > \;  {E\mathstrut}_{\!c}(R + \Delta,\, Q) \, + \, \epsilon\Big\},
\end{displaymath}
is obviously lower-bounded by $\,{E\mathstrut}_{\!c}(R + \Delta,\, Q) + \epsilon$.
Then, given ${\cal E\mathstrut}^{c}$ with high probability holds also ${\cal E\mathstrut}^{c} \cap\, {\cal G\mathstrut}^{c} \,\cap\, {\cal H\mathstrut}^{c}$.

Now consider the joint type $T\circ V$ of $({\bf X}, {\bf Y})$ given $\,{\cal E\mathstrut}^{c} \cap\, {\cal G\mathstrut}^{c} \,\cap\, {\cal H\mathstrut}^{c}$.
Since $\,R + \Delta \, < \, {R\mathstrut}_{\,-1}^{\,-}(Q)\,$ by Theorem~\ref{thm3} there exists some $\beta \in (-1, 0]$ such that
\begin{displaymath}
{E\mathstrut}_{\!c}(R + \Delta,\, Q) \; = \; {E\mathstrut}_{0}(\beta, Q) - \beta (R + \Delta).
\end{displaymath}
We can use this $\beta$ to write
\begin{align}
{E\mathstrut}_{\!c}(R + \Delta,\, Q) \, + \, \epsilon
\;\; & \overset{{\cal H\mathstrut}^{c}}{\geq} \;\;
D(T \circ V \, \| \, Q \circ P)
\nonumber \\
& \overset{{\cal G\mathstrut}^{c}}{\geq} \;\;
D(T \circ V \, \| \, Q \circ P) \, - \, \beta
\underbrace{\big[
R + \Delta - \epsilon - D(T \circ V \, \| \, T \times Q)
\big]}_{\leq\,0}
\label{eqAddedEps} \\
& = \;\;
D(T \, \| \, {T\!\mathstrut}_{\beta})
\; + \;
(1 + \beta)D(T \circ V \, \| \, T \circ {V\!\!\mathstrut}_{\beta})
\; + \; \beta \epsilon
\; + \;
\underbrace{{E\mathstrut}_{0}(\beta, Q) \; - \; \beta (R + \Delta)}_{{E\mathstrut}_{\!c}(R \,+ \,\Delta,\; Q)}
\nonumber \\
& = \;\;
D(T \, \| \, {T\!\mathstrut}_{\beta})
\; + \;
(1 + \beta)D(T \circ V \, \| \, T \circ {V\!\!\mathstrut}_{\beta})
\; + \; \beta \epsilon
\; + \; {E\mathstrut}_{\!c}(R + \Delta,\, Q)
\nonumber \\
(1-\beta)\,\epsilon
\;\; & \geq \;\;
D(T \, \| \, {T\!\mathstrut}_{\beta})
\; + \;
(1 + \beta)D(T \circ V \, \| \, T \circ {V\!\!\mathstrut}_{\beta}),
\label{eqConverge}
\end{align}
where ${T\!\mathstrut}_{\beta}\circ{V\!\!\mathstrut}_{\beta}$
is the minimizing distribution of $\,{E\mathstrut}_{\!c}(R + \Delta,\, Q)$
determined according to Theorem~\ref{thm3} by (\ref{eqTrho})-(\ref{eqVrho}) with $\beta$.
The inequality (\ref{eqConverge}) implies that the type $T\circ V$ and the solution ${T\!\mathstrut}_{\beta}\circ{V\!\!\mathstrut}_{\beta}$
are close in ${\cal L}_{1}$ norm. And all this given $\,{\cal E\mathstrut}^{c} \cap\, {\cal G\mathstrut}^{c} \,\cap\, {\cal H\mathstrut}^{c}$, i.e. with high probability.
\end{proof}

\bigskip

In the subsequent analysis we assume that the blocklength $n$ is large and neglect the difference
between the random joint type of the transmitted and the received blocks $\,T(y){V\!\mathstrut}_{\widetilde{m}}(x \, | \, y)\,$
given $\{F = 1\}$
and the respective solution $\,\breve{T\mathstrut}(y) \breve{V\mathstrut}(x \, | \, y)\,$ to the minimization problem $\,{E\mathstrut}_{\!c}(R + \Delta,\, Q)\,$ (\ref{eqExpNoTB}) or $\,{E\mathstrut}_{\!c}^{ML}(R + \Delta,\, Q)\,$ (\ref{eqDefImplicitCor}).
We also assume that the inequality (\ref{eqErrorgrtCorrect}) between the error exponent and the correct-decoding exponent is never violated,
so that $\{F = 1\}$ is always exponentially equivalent in probability to ${\cal E\mathstrut}^{c}$.

Let ${Q\mathstrut}_{0}$ be the initial codebook distribution
and consider the consecutive events $\big\{{\cal E\mathstrut}^{c}_{\ell}\big\}_{\ell \, = \, 0}^{+\infty}$.
They result in the sequence of codebook distributions $\big\{{Q\mathstrut}_{\ell}\big\}_{\ell \, = \, 1}^{+\infty}$.
Suppose
\begin{displaymath}
C\big(\text{supp}({Q\mathstrut}_{0})\big) \; > \; R + \Delta.
\end{displaymath}
Then by Lemma~\ref{lemOneStep} holds also
$\,R + \Delta < {R\mathstrut}_{\,-1}^{\,-}({Q\mathstrut}_{0})$,
which is the condition of both Lemma~\ref{lemNaturalSelectExp}
and Theorem~\ref{thmConvType}.
As a result, given (\ref{eqErrorgrtCorrect}) for ${Q\mathstrut}_{0}$ the events $\{F = 1\}$ and ${\cal E\mathstrut}^{c}$
are equivalent and given these events the joint type of the transmitted and the received blocks
(approximately, with high probability) achieves the minimum $\,{E\mathstrut}_{\!c}(R + \Delta,\, {Q\mathstrut}_{0}) = {E\mathstrut}_{\!c}^{ML}(R + \Delta,\, {Q\mathstrut}_{0})$.
Therefore the next distribution ${Q\mathstrut}_{1}$ is obtained according to (\ref{eqUpdate}).
Finally the same Lemma~\ref{lemOneStep} gives $\,\text{supp}({Q\mathstrut}_{1}) = \text{supp}({Q\mathstrut}_{0})$.
Then, provided that (\ref{eqErrorgrtCorrect}) continues to hold for each ${Q\mathstrut}_{\ell}$, by induction we obtain that at each iteration $\ell$ the codebook distribution ${Q\mathstrut}_{\ell\,+\,1}$
evolves according to (\ref{eqUpdate}).
This results in convergence of $\,{E\mathstrut}_{\!c}^{ML}(R + \Delta,\, {Q\mathstrut}_{\ell})$.
Suppose the initial distribution ${Q\mathstrut}_{0}$ satisfies further the strict inequality (\ref{eqLowerThan})
with $R + \Delta$:
\begin{displaymath}
{E\mathstrut}_{\!c}^{ML}(R + \Delta, \,{Q\mathstrut}_{0}) \;\;\;\; < \;\;\;\; \min_{\substack{\\Q:\;\;C\left(\text{supp}(Q)\right) \; < \; R\,+\,\Delta}}\, {E\mathstrut}_{\!c}^{ML}(R + \Delta, \,Q).
\end{displaymath}
This is always possible given $\,C\big(\text{supp}({Q\mathstrut}_{0})\big) > R + \Delta$.
Then the sequence of $\,{E\mathstrut}_{\!c}^{ML}(R + \Delta,\, {Q\mathstrut}_{\ell})\,$ converges to zero by Lemma~\ref{lemConvergenceReg},
achieving our goal. In the limit of convergence of the codebook distribution for a given channel -- reliable communication is guaranteed for all rates below $R + \Delta$.

An example is shown in Fig.~\ref{graph12},~\ref{graph3}.
\begin{figure}
\centering
\begin{minipage}{.5\textwidth}
  \centering
  \includegraphics[width=1\textwidth]{errcorr1.eps}
\end{minipage}%
\begin{minipage}{.5\textwidth}
  \centering
  \includegraphics[width=1\textwidth]{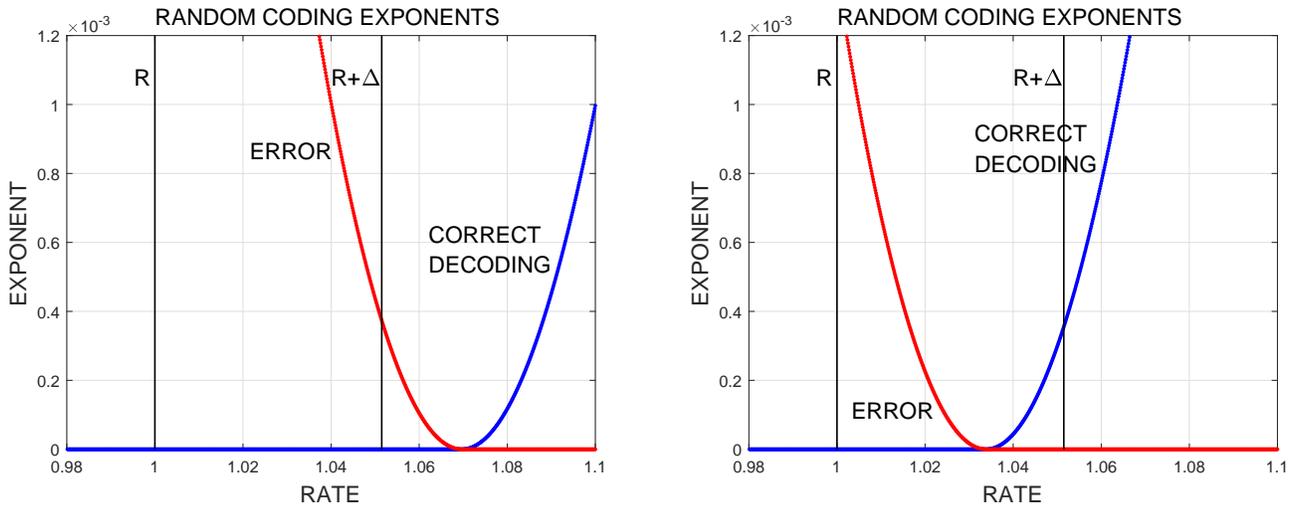}
\end{minipage}
\caption{The decreasing curve is the error exponent ${E}_{e}(\,\,\widetilde{\!\!R}, Q)$. The increasing curve is the correct-decoding exponent
${E}_{c}(\,\,\widetilde{\!\!R}, Q)$. Both graphs are for the same $Q(x)$. The channel $P(y\,|\,x)$ is different between the left graph and the right graph. In both cases ${E}_{e}(R, Q) > {E}_{c}(R + \Delta, \,Q)$.}
\label{graph12}
\end{figure}
In Fig.~\ref{graph12} on the left graph the correct-decoding exponent is zero at $R+\Delta$.
The rate of communication is lower -- at $R$.
Then the channel $P(y \, | \, x)$ changes abruptly and both the error exponent curve and the correct-decoding exponent curve for the same $Q(x)$ move to the left, as shown on the right graph of Fig.~\ref{graph12}.
Now the correct-decoding exponent becomes positive at $R + \Delta$,
but is still lower than the error exponent at $R$, so that the strict inequality (\ref{eqErrorgrtCorrect})
still holds.
The reliable communication continues at $R$.
The new channel $P(y \, | \, x)$ is assumed to remain the same during the ensuing iterations,
shown in Fig.~\ref{graph3}.
\begin{figure}[t]
\centering
\includegraphics[width=.50\textwidth]{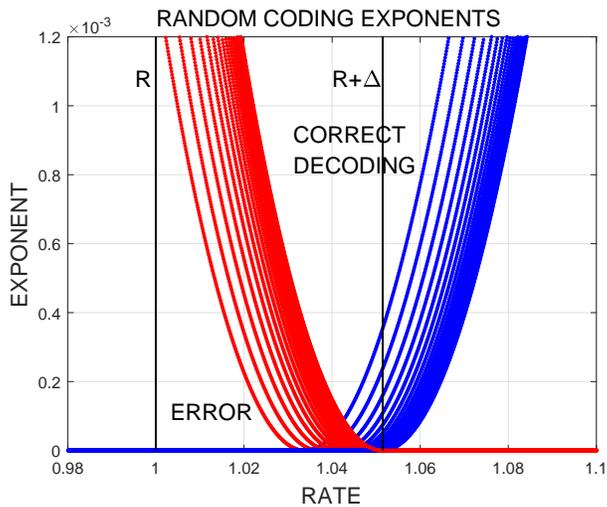}  
\caption{The decreasing curves are the error exponents ${E}_{e}(\,\,\widetilde{\!\!R}, {Q}_{\ell})$.
The increasing curves are the correct-decoding exponents ${E}_{c}(\,\,\widetilde{\!\!R}, {Q}_{\ell})$.
All the curves are for the same channel $P(y \, | \, x)$.
For each $\ell = 0$, $1$, $2$, ... , the respective pair of curves meets zero at the same point $\,\,\widetilde{\!\!R} = I({Q}_{\ell}\circ P)$.
For each $\ell$ holds $\,{E}_{e}(R, {Q}_{\ell}) > {E}_{c}(R + \Delta, \,{Q}_{\ell})\,$ (\ref{eqErrorgrtCorrect}).
The correct-decoding exponent ${E}_{c}(R + \Delta, \,{Q}_{\ell})$ converges to zero as $\ell$ grows.
At the same time the zero point of the error exponent at $\,\,\widetilde{\!\!R} = I({Q}_{\ell}\circ P)$ moves to the right towards $R + \Delta$.}
\label{graph3}
\end{figure}
During the iterations the codebook distribution adapts to the new channel.
In the limit of the iterations the correct-decoding exponent returns to zero at $R+\Delta$
with respect to the new channel.
In this way the adaptation scheme will safeguard the reliable communication mode
at $R$ for as long as the DMC capacity of the block doesn't deteriorate below $R+\Delta$.

In the presented example the change in the channel is abrupt relatively to the number of iterations required to adapt to the change.
In practice the change in the channel should be slow and the correct-decoding exponent at $R+\Delta$ should be near zero,
in order for the scheme to be able to follow the changes in the channel successfully.

The presented adaptation scheme tries to parallel the natural type selection scheme in lossy source-encoding \cite{ZamirRose01}
in that it is embedded in the structure of a specific channel-decoding procedure itself through the confidence parameter $\Delta$.
Maintaining or restoring this confidence means adaptation.
A closely related alternative scheme is presented next.

\bigskip

\subsection{Alternative scheme}\label{AlternativeScheme}

\bigskip

Here we assume any sequence (in the blocklength $n$) of reliable decoders
with asymptotic error exponent higher than ${E\mathstrut}_{\!c}(R + \Delta,\, Q)$
as in (\ref{eqErrorgrtCorrect}).
The decoder determines its estimate $\widehat{m}$ of the transmitted message
and compares the metric average (\ref{eqMMIlikeDec}) of the corresponding codeword
${\bf x\mathstrut}_{\widehat{m}}$ to $R + \Delta$:
\begin{equation} \label{eqUpdateRule}
D(T\circ {V\!\mathstrut}_{\widehat{m}} \,\|\, T \times Q) \; > \; R \, + \, \Delta,
\;\;\;\;\;\;\;\;\;
\Longleftrightarrow
\;\;\;\;\;\;\;\;\;
F \, = \, 1.
\end{equation}
This single comparison is performed at the receiver side after the decoding is over (Fig.~\ref{fig2}) and it replaces the condition (\ref{eqSecondDec}).
\begin{figure}[!t]
\psset{unit=.7mm}
\begin{center}
\begin{pspicture}(-20, 23)(145, 85)

\psframe(-11, 53)(33, 85)
\rput(11, 79){Encoder}
\rput(11, 69){$\widetilde{m}$}
\psline[doubleline = true]{->}(-20, 69)(-11, 69)
\psline[doubleline = true]{->}(33, 69)(50, 69)

\rput(41.5, 64.5){${\bf x}$}

\psframe(50, 62.5)(75, 75.5) \rput(62.5, 69){$P(y\,|\,x)$}

\rput(83.5, 64.5){${\bf y}$}

\psline[doubleline = true]{->}(136, 69)(145, 69)
\psline[doubleline = true]{->}(75, 69)(92, 69)

\psframe(92, 53)(136, 85)
\rput(114, 79){Decoder}
\rput(114, 69){$\widehat{m}$}

\psframe(34, 23)(91, 47)

\rput(62.5, 35){$B(\widehat{m}) 
>
R + \Delta$}

\psline{-}(114, 53)(114, 35)
\psline{-}(91, 35)(114, 35)

\psline{-}(11, 35)(34, 35)
\psline{<-}(11, 53)(11, 35)

\rput(22.5, 39){$0 / 1$}

\end{pspicture}
\end{center}
\caption{An alternative scheme. The decoder providing $\widehat{m}$ is not specified. $B(\widehat{m}) \triangleq D(T\circ {V}_{\widehat{m}} \,\|\, T \times Q)$,
where $T$ is the type of the received block, ${V}_{\widehat{m}}$ is the conditional type of the codeword for the estimated message $\widehat{m}$, and $Q$ is the i.i.d. codebook generating distribution.}
\label{fig2}
\end{figure}
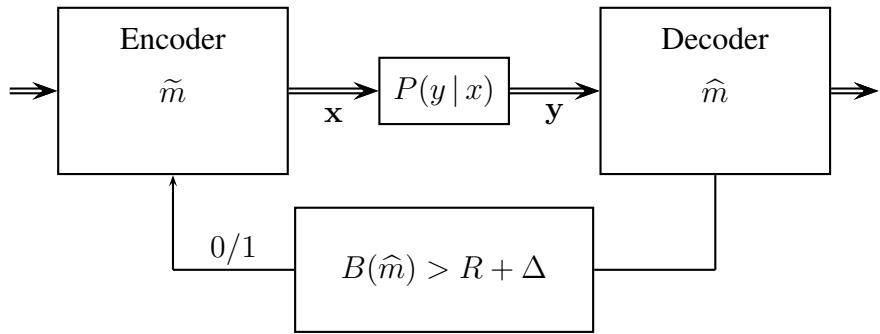

In this case also, the exponent of the event $\{F = 1\}$ is given by $\,{E\mathstrut}_{\!c}(R + \Delta,\, Q)$.
Let $({\bf X}, {\bf Y})$ 
denote
the pair of the transmitted and the received blocks.
Define an event:
\begin{equation} \label{eqSelEvent}
{\cal S} \;\; \triangleq \;\; \Big\{({\bf X}, {\bf Y})
\;\;\; \text{of any type} \;\;\; T\circ V
\;\;\; \text{s.t.} \;\;\;
D(T \circ V \, \| \, T \times Q)
\; > \; R + \Delta \Big\}.
\end{equation}
This is similar to ${\cal E\mathstrut}^{c}$ in the previous setting.

\bigskip

\begin{lemma} [Selection exponent] \label{lemSelectExp}\newline
{\em If $\,R + \Delta \, < \, {R\mathstrut}_{\,-1}^{\,-}(Q)$, as defined in (\ref{eqRMinus}), then}
\begin{equation} \label{eqSelectExp}
\lim_{n\,\rightarrow \,\infty}\,\frac{\Pr\,\{{\cal S}\}}{-n}
\;\; = \;\;
{E\mathstrut}_{\!c}(R + \Delta,\, Q).
\end{equation}
\end{lemma}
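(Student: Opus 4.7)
The plan is to prove Lemma~\ref{lemSelectExp} by a direct large-deviations (Sanov-type) argument on the joint type of the transmitted and received blocks, since the event $\mathcal{S}$ in (\ref{eqSelEvent}) depends only on this joint type and not on the decoder at all. In this sense the assumed reliable decoder plays no role in the exponent of $\Pr\{\mathcal{S}\}$; only the ensemble distribution of the true codeword matters.

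First I would observe that under the i.i.d.\ random coding ensemble the transmitted codeword ${\bf X}$ is drawn i.i.d.\ from $Q$ and its channel output ${\bf Y}$ is i.i.d.\ from $P(y\,|\,x)$, so $({\bf X}, {\bf Y})$ is i.i.d.\ from $Q\circ P$. By the standard method of types,
\begin{displaymath}
\Pr\{\mathcal{S}\} \;=\!\!\!
\sum_{\substack{\text{joint types}\;\; T\circ V\;\text{at blocklength}\; n: \\ D(T\circ V \,\|\, T\times Q) \,>\, R+\Delta}}
\!\!\!
\Pr\bigl\{T\circ V \;\text{is the joint type of}\; ({\bf X}, {\bf Y})\bigr\},
\end{displaymath}
with each summand sandwiched between $(n+1)^{-|{\cal X}||{\cal Y}|}e^{-nD(T\circ V \,\|\, Q\circ P)}$ and $e^{-nD(T\circ V \,\|\, Q\circ P)}$ by standard type-counting bounds.

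Next I would derive the two matching bounds on the exponent. A union bound over the polynomial count of joint types gives
\begin{displaymath}
\Pr\{\mathcal{S}\} \;\leq\; (n+1)^{|{\cal X}||{\cal Y}|}\exp\bigl(-n\min D(T\circ V \,\|\, Q\circ P)\bigr),
\end{displaymath}
where the minimum is over $n$-types satisfying $D(T\circ V \,\|\, T\times Q) > R+\Delta$; as $n\to\infty$ this minimum converges to $\inf\{D(T\circ V \,\|\, Q\circ P) : D(T\circ V \,\|\, T\times Q) > R+\Delta\}$. For the reverse direction I would pick a minimizer $T^{\circ}\circ V^{\circ}$ of (\ref{eqExpNoTB}) at rate $R+\Delta+\epsilon$, quantize it to an $n$-type, and use continuity of the divergence to preserve the strict inequality $D > R+\Delta$ for large $n$; this yields $\Pr\{\mathcal{S}\} \geq (n+1)^{-|{\cal X}||{\cal Y}|}\exp\bigl(-n[E_c(R+\Delta+\epsilon, Q) + o(1)]\bigr)$.

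The main obstacle is the mismatch between the strict inequality defining $\mathcal{S}$ and the non-strict inequality in the definition (\ref{eqExpNoTB}) of $E_c$. The hypothesis $R+\Delta < R_{-1}^{-}(Q)$ is exactly what closes this gap: by Theorem~\ref{thm3} the explicit formula (\ref{eqExplicitSCD}) realizes $E_c(\cdot, Q)$ as a finite convex (hence continuous) function of the rate in a neighborhood of $R+\Delta$, so $\lim_{\epsilon\searrow 0}E_c(R+\Delta+\epsilon, Q) = E_c(R+\Delta, Q)$ and the infimum over the strict constraint agrees with the minimum over the non-strict one. Sending $\epsilon\to 0$ in the two bounds then squeezes $-\log\Pr\{\mathcal{S}\}/n$ to the claimed limit $E_c(R+\Delta, Q)$.
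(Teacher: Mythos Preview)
Your proposal is correct and follows essentially the same approach as the paper. The paper's one-line proof reads ``Similar to the proof of Theorem~\ref{thmCorr}, with the lower bound obtained directly without $\epsilon$, as in Sanov's theorem,'' and your argument is precisely a spelled-out Sanov/method-of-types computation: the lower bound on the exponent is immediate because every type in $\mathcal{S}$ already satisfies the non-strict constraint defining $E_c(R+\Delta,Q)$, while the upper bound uses the $\epsilon$-shift and quantization exactly as in the proof of Theorem~\ref{thmCorr}, with the hypothesis $R+\Delta<R_{-1}^{-}(Q)$ supplying continuity via (\ref{eqExplicitSCD}).
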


\begin{proof}
Similar to the proof of Theorem~\ref{thmCorr}, with the lower bound obtained directly without $\epsilon$, as in Sanov's theorem.
\end{proof}

\bigskip

If the error exponent of the sequence of employed decoders is higher than the exponent of the event ${\cal S}$
given by ${E\mathstrut}_{\!c}(R + \Delta,\, Q)$,
then given ${\cal S}$ with high probability holds also $\{F = 1\}$.
And vice versa, given $\{F = 1\}$ with high probability holds ${\cal S}$.

In case $F = 1$ the system parameter $Q$ is updated as before and
similarly evolves according to (\ref{eqUpdate})
for sufficiently large $n$ when the joint type has converged.

\bigskip

\begin{thm}[Convergence of a type]\label{thmConvType2}\newline
{\em If $\,R + \Delta \, < \, {R\mathstrut}_{\,-1}^{\,-}(Q)$, as defined in (\ref{eqRMinus}), then given the event ${\cal S}$ (\ref{eqSelEvent}) the joint type of the transmitted and the received words $({\bf X}, {\bf Y})$ converges in probability to the minimizing distribution of $\,{E\mathstrut}_{\!c}(R + \Delta,\, Q)$ (\ref{eqExpNoTB}).}
\end{thm}

\begin{proof}
Analogous to the proof of Theorem~\ref{thmConvType}, without $\epsilon$ in (\ref{eqAddedEps}).
\end{proof}

\bigskip

All the rest is the same. The advantage of the alternative scheme is that arbitrary sufficiently reliable decoder is allowed.

\bigskip

\subsection{Other metrics}\label{Other}

\bigskip

Other suboptimal variants of the adaptation scheme are possible with the metrics in (\ref{eqFamily})
used for the metric average $A$
in a pair with (\ref{eqMMIlikeDec}) which is $B$.
Then the feedback bit is determined using $A$ and $B$
in the scheme of Fig.~\ref{fig1}:
\begin{displaymath}
A(T\circ {V\!\mathstrut}_{\widehat{m}}) \; > \; B(T\circ {V\!\mathstrut}_{m}) \, + \, \Delta, \;\;\;\;\;\; \forall \;\; m \, \neq \, \widehat{m},
\;\;\;\;\;\;\;\;\;
\Longleftrightarrow
\;\;\;\;\;\;\;\;\;
F \, = \, 1,
\end{displaymath}
or using only $A$ in the alternative scheme of Fig.~\ref{fig2} with an unspecified decoder \cite{TridenskiZamir18}:
\begin{displaymath}
A(T\circ {V\!\mathstrut}_{\widehat{m}}) \; > \; R \, + \, \Delta,
\;\;\;\;\;\;\;\;\;
\Longleftrightarrow
\;\;\;\;\;\;\;\;\;
F \, = \, 1.
\end{displaymath}
In particular, the variant with the last metric in (\ref{eqFamily}), which uses the knowledge of the channel,
doesn't seem to require a condition for convergence like (\ref{eqLowerThan}) since it falls into the family of alternating minimization
procedures of Csisz\'ar and Tusn\'ady \cite{CsiszarTusnady84}.

\bigskip

\section{Conclusion}\label{Conc}

\bigskip

In this work we introduce two different expressions for the optimal correct-decoding exponent (\ref{eqOptimum}):
\begin{align}
&
\min_{\substack{\\Q(x)}}
\;\;\;\;\;\;
\min_{\substack{\\T(y), \, V(x\,|\,y)
}}
\;\;\;\;\,
\left\{
D(T\circ V \, \| \, Q \circ P) \, + \,
\big| R - D(T\circ V \,\|\, T \times Q)\big|^{+}
\right\}
\label{eqExpressionA} \\
\equiv \;\;
&
\min_{\substack{\\Q(x)}}
\;
\min_{\substack{\\T(y), \, V(x\,|\,y):\\
\\
D(T\,\circ\, V \,\|\, T\, \times\, Q)\;\geq\;
R
}}
\Big\{D(T\circ V \, \| \, Q \circ P)\Big\},
\label{eqExpressionB}
\end{align}
as alternatives to the Dueck-K{\"o}rner expression \cite{DueckKorner79}.
We show that the inner minimum in (\ref{eqExpressionA}) has a meaning of the 
correct-decoding exponent of the ML decoder for a given i.i.d. codebook distribution $Q$.
We propose a minimization procedure over $Q$ at constant $R$ which uses the inner minimum in (\ref{eqExpressionA}) iteratively.
It is shown that this procedure results in a sequence of distributions ${Q\mathstrut}_{\ell}$ with a monotonically non-increasing sequence of the corresponding inner minima in (\ref{eqExpressionA}).
This sequence of minima eventually converges to the double minimum (\ref{eqExpressionA}) over some subset of the channel input alphabet,
and more precisely -- over some subset of the support $\text{supp}({Q\mathstrut}_{0})$ of the initial distribution ${Q\mathstrut}_{0}$.
In general, it remains unclear whether the minimization procedure at constant $R$ does always achieve the global minimum (\ref{eqExpressionA}) over
the initial channel input support $\text{supp}({Q\mathstrut}_{0})$ itself.

From the practical standpoint, it is interesting when the correct-decoding exponent is zero. This is when the reliable communication begins.
For any rate $R$ below the capacity,
we provide a minimal and quite obvious sufficient condition (\ref{eqLowerThan}) on the initial distribution ${Q\mathstrut}_{0}$ which guarantees convergence of the minimization procedure to zero.
Still, this meager sufficient condition presents an inner bound on the region of convergence of the fixed-rate computation algorithm in terms of ${Q\mathstrut}_{0}$ for each rate below the capacity.
This ``computation of zero'' is interesting of course only because of the unknown set of the zero-exponent achieving distributions $Q$.

We show that the inner minimum in (\ref{eqExpressionB}) in turn has a meaning of the exponent in the strict correct decoding with
the channel-independent metric (\ref{eqNewMetric})
for a given i.i.d. codebook distribution $Q$.
The inner minima in (\ref{eqExpressionA}) and (\ref{eqExpressionB})
coincide as increasing functions of $R$ for slopes less than $1$.
This coincidence allows us to give a stochastic interpretation to the fixed-rate minimization procedure
and to propose a scheme for the channel input adaptation (Fig.~\ref{fig1}).
The scheme does not rely on the knowledge of the channel.
In this scheme the communication occurs at a rate $R$ and is assumed sufficiently reliable from the get-go.
Then, in the limit of large block length the adaptation falls exactly into the steps of the iterative minimization procedure.
As a result, under the initial condition (\ref{eqLowerThan}) for $R + \Delta$ the correct-decoding exponent of the decoder gradually descents to zero at $R + \Delta$,
thereby securing the reliable communication mode at $R$.

The adaptation scheme uses a single bit of feedback per transmitted block.
According to this bit the system decides whether to replace the codebook distribution $Q$ with the empirical distribution of the last sent codeword or not.
In practice, a less interesting case would be when the feedback bit has entropy zero,
i.e. when the feedback bit is $1$ or $0$ with high probability.
The first situation occurs when the ML correct-decoding exponent for a given $Q$ meets zero at a rate higher than $R+\Delta$ (Fig.~\ref{graph12}, left).
Then it follows from Corollary~\ref{cor1} that the feedback bit is $1$ with high probability.
In this case
there is no clear advantage of the selected empirical distribution over $Q$
and its constant replacement is not vital.
The second situation happens when the correct decoding exponent is substantially positive at $R+ \Delta$ (Fig.~\ref{graph12}, right).
In this case it naturally takes an exponentially large number of blocks to obtain a single adaptation step.

Therefore, the promising case seems to be in the transition zone, when the feedback bit has a non-zero entropy.
This is the situation when the correct-decoding exponent meets zero at $R+\Delta$
and fluctuates there, at a finite block length, heaving upwards following the changes in the channel and falling back to zero in the result of the adaptation process.
For such fluctuations the sufficient condition (\ref{eqLowerThan}) is adequate and enough,
because it stays satisfied.
The question however remains -- how slow and how large, respectively, the change in the channel
and the size of the block have to be in order for the adaptation mechanism to follow the channel from block to block.

We presented also a fixed-slope version of the algorithm (Section~\ref{FixedSlope}) which always converges in the support of ${Q\mathstrut}_{0}$
without any additional conditions,
just like the Arimoto algorithm \cite{Arimoto76}.
It remains a question for further research -- if this fixed-slope algorithm can also be translated into some adaptation scheme,
as we have done here for the fixed-rate version.

\bigskip

\section*{Appendix}\label{App}

\bigskip

{\em Proof of Lemma~\ref{lemTypes}:}\newline
Let $T(y)V(x\,|\,y)$ be the joint type of the received and the transmitted blocks.
The exponent in the probability (after $-n$) of this joint type is
\begin{equation} \label{eqJoint}
D(T\circ V \, \| \, Q \circ P) \, + \, o(1),
\end{equation}
where the diminishing term $o(1)$ is uniform with respect to $T\circ V$.

Suppose a message $m$ is sent and consider a different message $m'\neq m$ in the codebook.
Consider the event that the random codeword ${\bf X\mathstrut}_{m'}$, corresponding to the message $m'$, has a conditional type $\widehat{V}(x\,|\,y)$
given the received vector ${\bf y}$ of the type $T(y)$:
\begin{equation} \label{eqEventDef}
{\cal T\mathstrut}_{m'}\big(\widehat{V}\,|\, {\bf y}\big) \;\; \triangleq \;\; \left\{\;{\bf X\mathstrut}_{m'} \;\;\text{of type}\;\; \widehat{V} \;\; \text{w.r.t.} \;\; {\bf y} \;\right\}.
\end{equation}
The exponent in the probability of this event is given by
\begin{displaymath}
{D\mathstrut}_{n} \; = \; D(T\circ \widehat{V} \,\|\, T \times Q) \, + \, o(1).
\end{displaymath}
For convenience, we briefly denote this exponent as ${D\mathstrut}_{n}$ and the exponent in the codebook size (after $n$) as ${R\mathstrut}_{n} \triangleq \frac{1}{n}\log \Big\lceil {e\mathstrut}^{nR}\Big\rceil$.
The larger the blocklength $n$, the closer these quantities are to $D(T\circ \widehat{V} \,\|\, T \times Q)$ and $R$, respectively, uniformly with respect to the joint type $T\circ \widehat{V}$.
Consider the event when the conditional type $\widehat{V}(x\,|\,y)$ appears somewhere among the $\Big\lceil {e\mathstrut}^{nR}\Big\rceil - 1$
incorrect codewords:
\begin{displaymath}
\bigcup_{m'\,\neq \, m}{\cal T\mathstrut}_{m'}\big(\widehat{V}\,|\, {\bf y}\big) \;\; = \;\; \left\{\;\exists\, m'\neq m : \;\; {\bf X\mathstrut}_{m'} \;\;\text{of type}\;\; \widehat{V} \;\; \text{w.r.t.} \;\; {\bf y} \;\right\}.
\end{displaymath}
Using the union bound, we can upper-bound the probability of this event as
\begin{equation} \label{eqUB}
\Pr\left\{\,\bigcup_{m'\,\neq \, m}{\cal T\mathstrut}_{m'}\big(\widehat{V}\,|\, {\bf y}\big)\right\}
\;\leq\;
\min\Big\{1, \, \underbrace{{e\mathstrut}^{-n({D}_{n}\, - \, {R}_{n})}}_{\text{UB}}\Big\}
\; = \; {e\mathstrut}^{-n\big|{D}_{n}\, - \, {R}_{n}\big|^{+}}.
\end{equation}
For the lower bound, we prepare two alternative bounds:
\begin{align}
\Pr\left\{\,\bigcup_{m'\,\neq \, m}{\cal T\mathstrut}_{m'}\big(\widehat{V}\,|\, {\bf y}\big)\right\}
\;\; & \geq \;\;
\sum_{m'\,\neq \, m}\Pr\left\{{\cal T\mathstrut}_{m'}\big(\widehat{V}\,|\, {\bf y}\big)\right\}\,\cdot
\prod_{\substack{m''\,\neq\,m\\\,\,m''\,\neq\,m'}}\Pr\Big\{\underbrace{{\cal T\mathstrut}_{m''}^{\,c}\big(\widehat{V}\,|\, {\bf y}\big)}_{\substack{\text{complementary}\\\text{event}}}\Big\}
\nonumber \\
& = \;\;
\Big({e\mathstrut}^{n {R}_{n}}-1\Big){e\mathstrut}^{-n {D}_{n}}\Big(1 - {e\mathstrut}^{-n {D}_{n}}\Big)^{{e\mathstrut}^{n {R}_{n}} \, - \, 2}
\nonumber \\
& = \;\;
{e\mathstrut}^{-n ({D}_{n}\,-\,{R}_{n})} \;\;\Big(1 - {e\mathstrut}^{-n {R}_{n}}\Big)\Big(1 - {e\mathstrut}^{-n {D}_{n}}\Big)^{{e\mathstrut}^{n {R}_{n}} \, - \, 2}
\nonumber \\
& \geq \;\;
{e\mathstrut}^{-n \big|{D}_{n}\,-\,{R}_{n}\big|^{+}} \Big(1 - {e\mathstrut}^{-n {R}_{n}}\Big)\Big(1 - {e\mathstrut}^{-n {D}_{n}}\Big)^{{e\mathstrut}^{n {R}_{n}} \, - \, 2},
\label{eqAltOne} \\
\Pr\left\{\,\bigcup_{m'\,\neq \, m}{\cal T\mathstrut}_{m'}\big(\widehat{V}\,|\, {\bf y}\big)\right\}
\;\; & = \;\;
1 \, - \, \prod_{m'\,\neq\,m}\Pr\left\{{\cal T\mathstrut}_{m'}^{\,c}\big(\widehat{V}\,|\, {\bf y}\big)\right\}
\nonumber \\
\;\; & = \;\;
1 \, - \, \Big(1 - {e\mathstrut}^{-n {D}_{n}}\Big)^{{e\mathstrut}^{n {R}_{n}} \, - \, 1}
\nonumber \\
\;\; & \geq \;\;
\underbrace{{e\mathstrut}^{-n \big|{D}_{n}\,-\,{R}_{n}\big|^{+}}}_{\leq \; 1}
\left[1 \, - \, \Big(1 - {e\mathstrut}^{-n {D}_{n}}\Big)^{{e\mathstrut}^{n {R}_{n}}\, - \, 1}\right].
\label{eqAltTwo}
\end{align}
Combining (\ref{eqAltOne}) and (\ref{eqAltTwo}) together, we obtain an exponentially tight lower bound:
\begin{align}
\Pr\left\{\,\bigcup_{m'\,\neq \, m}{\cal T\mathstrut}_{m'}\big(\widehat{V}\,|\, {\bf y}\big)\right\}
\;\; & \geq \;\;
{e\mathstrut}^{-n \big|{D}_{n}\,-\,{R}_{n}\big|^{+}}\cdot
\left\{
\begin{array}{l l}
\Big(1 - {e\mathstrut}^{-n {R}_{n}}\Big)\Big(1 - {e\mathstrut}^{-n {D}_{n}}\Big)^{{e\mathstrut}^{n {R}_{n}} \, - \, 2}, & {D\mathstrut}_{n} \, \geq \, {R\mathstrut}_{n} \\
\;\;\;\;\;\;\;\;\;\;\;\;
1 \, - \, \Big(1 - {e\mathstrut}^{-n {D}_{n}}\Big)^{{e\mathstrut}^{n {R}_{n}}\, - \, 1}, & {D\mathstrut}_{n} \, < \, {R\mathstrut}_{n}
\end{array}
\right.
\nonumber \\
& \geq \;\;
{e\mathstrut}^{-n \big|{D}_{n}\,-\,{R}_{n}\big|^{+}}\cdot
\left\{
\begin{array}{l l}
\Big(1 - {e\mathstrut}^{-n {R}_{n}}\Big)\Big(1 - {e\mathstrut}^{-n {R}_{n}}\Big)^{{e\mathstrut}^{n {R}_{n}} \, - \, 2}, & {D\mathstrut}_{n} \, \geq \, {R\mathstrut}_{n} \\
\;\;\;\;\;\;\;\;\;\;\;\;
1 \, - \, \Big(1 - {e\mathstrut}^{-n {R}_{n}}\Big)^{{e\mathstrut}^{n {R}_{n}}\, - \, 1}, & {D\mathstrut}_{n} \, < \, {R\mathstrut}_{n}
\end{array}
\right.
\label{eqReplace} \\
& = \;\;
{e\mathstrut}^{-n \big|{D}_{n}\,-\,{R}_{n}\big|^{+}}\cdot
\left\{
\begin{array}{l l}
\;\;\;\;\;\;\;\;\;\;\;\;\;\;\;\;\;\;\;
\Big(1 - {e\mathstrut}^{-n {R}_{n}}\Big)^{{e\mathstrut}^{n {R}_{n}} \, - \, 1}, & {D\mathstrut}_{n} \, \geq \, {R\mathstrut}_{n} \\
\;\;\;\;\;\;\;\;\;\;\;\;
1 \, - \, \Big(1 - {e\mathstrut}^{-n {R}_{n}}\Big)^{{e\mathstrut}^{n {R}_{n}}\, - \, 1}, & {D\mathstrut}_{n} \, < \, {R\mathstrut}_{n}
\end{array}
\right.
\nonumber \\
& \geq \;\;
{e\mathstrut}^{-n \big|{D}_{n}\,-\,{R}_{n}\big|^{+}}\cdot
\,\min\bigg\{\underbrace{\Big(1 - {e\mathstrut}^{-n {R}_{n}}\Big)^{{e\mathstrut}^{n {R}_{n}} \, - \, 1}}_{\rightarrow \; 1/e}, \,
\underbrace{1 \, - \, \Big(1 - {e\mathstrut}^{-n {R}_{n}}\Big)^{{e\mathstrut}^{n {R}_{n}}\, - \, 1}}_{\rightarrow\; 1 \, - \, 1/e}
\bigg\},
\label{eqLB}
\end{align}
where ${D\mathstrut}_{n}$ is replaced with ${R\mathstrut}_{n}$ in (\ref{eqReplace}).
From all this, comparing the upper and lower bounds (\ref{eqUB}) and (\ref{eqLB}),
we only conclude that the exponent in the probability of the appearance of $\widehat{V}(x\,|\,y)$ among incorrect codewords
is given by
\begin{equation} \label{eqCombinedExp}
\big|D(T\circ \widehat{V} \,\|\, T \times Q) \, - \, R\,\big|^{+} \, + \, o(1),
\end{equation}
where the diminishing term $o(1)$ is uniform with respect to $T\circ \widehat{V}$.

Consider now the condition 
(\ref{eqDecRule}).
Given the joint type of the received and the transmitted blocks $\,T \circ V$,
if there exists an incorrect codeword of a conditional type $\widehat{V}$ such that
\begin{equation} \label{eqConstraint}
A(T\circ V) \; \leq \; B(T\circ \widehat{V}), 
\end{equation}
then the sent message $m$ does not satisfy (\ref{eqDecRule}).
Since there is only a polynomial number (in $n$) of different possible types,
adding together the exponents of $\,T \circ V$, (\ref{eqJoint}), and of $\widehat{V}$, (\ref{eqCombinedExp}), and minimizing
their sum subject to the constraint (\ref{eqConstraint}),
we obtain the exponent in the probability of this event
as given by (\ref{eqLemma}).
The additive diminishing term $o(1)$ can be put conveniently outside the minimum,
because of the uniformity of the corresponding term in (\ref{eqJoint}) and (\ref{eqCombinedExp}) with respect to various types $T$, $V$, and $\widehat{V}$. $\square$

\bigskip

{\em Proof of Lemma~\ref{lemEps}:}\newline
Suppose the message $m$ is sent and the pair of the transmitted and the received words $({\bf x}, {\bf y})$
has a joint type $T(y)V(x \, | \, y)$. Let ${\cal T\mathstrut}_{m'}\big(\widehat{V}\,|\, {\bf y}\big)$
denote the event (\ref{eqEventDef})
that the random codeword ${\bf X\mathstrut}_{m'}$, corresponding to the message $m'\neq m$, has a conditional type $\widehat{V}(x\,|\,y)$
given the received vector ${\bf y}$.
Define also ${R\mathstrut}_{n} \triangleq \frac{1}{n}\log \left(\Big\lceil {e\mathstrut}^{nR}\Big\rceil - 1\right)$ and
a quantity close to $A(T \circ V)$:
\begin{align}
{A\mathstrut}_{n} \;\; & \triangleq \;\; \min_{\substack{\\\text{types}\;\widehat{V}(x\,|\,y):\\ \\A(T \,\circ\, V)\; \leq \; D(T \,\circ \,\widehat{V} \, \| \, T \,\times\, Q)}}
D(T \circ \widehat{V} \, \| \, T \times Q) \; + \;
\frac{|{\cal X}||{\cal Y}|\log(n + 1)}{n}.
\nonumber 
\end{align}
Both quantities ${R\mathstrut}_{n}$ and ${A\mathstrut}_{n}$ converge respectively to $R$ and $A(T \circ V)$,
as $n$ grows, uniformly in $T\circ V$.
We can set an upper and a lower bounds on the conditional probability of correct decoding given $({\bf x}, {\bf y})$:
\begin{align}
\Pr \Big\{\,\text{correct decoding} \; \Big| \; ({\bf X\mathstrut}_{m}, {\bf Y}) = ({\bf x}, {\bf y})\,\Big\}
\;\; & \leq \;
\min_{\substack{\\\text{types}\;\widehat{V}(x\,|\,y):\\ \\A(T \,\circ\, V)\; \leq \; D(T \,\circ \,\widehat{V} \, \| \, T \,\times\, Q)}}
\prod_{m'\,\neq \, m}\left(1 - \Pr\Big\{{\cal T\mathstrut}_{m'}\big(\widehat{V}\,|\, {\bf y}\big)\Big\}\right)
\nonumber \\
& \leq \;\;
\left(1 - {e\mathstrut}^{-n {A}_{n}}\right)^{{e\mathstrut}^{n R_n}}
\nonumber \\
& = \;\;
\bigg[\underbrace{\left(1 - {e\mathstrut}^{-n {A}_{n}}\right)^{-{e\mathstrut}^{n {A}_{n}}}}_{>\; e}\bigg]^{-{e\mathstrut}^{-n {A}_{n}}\cdot\,{e\mathstrut}^{n R_n}}
\nonumber \\
& \overset{(*)}{<} \;\;
\exp\Big\{-{e\mathstrut}^{n (R_n \, - \, {A}_{n})}\Big\},
\label{eqUpperBCorDec}
\end{align}
where
$(*)$ holds because $\;(1 - x)^{-1/x} \, > \, e\;$ for $\;0\,<\,x\,<\,1$.
\begin{align}
& \Pr \Big\{\,\text{correct decoding} \; \Big| \; ({\bf X\mathstrut}_{m}, {\bf Y}) = ({\bf x}, {\bf y})\,\Big\}
\nonumber \\
& \geq \;\;
1 \; - \; \underbrace{\underbrace{(n + 1)^{|{\cal X}| \, |{\cal Y}|}}_{\geq \; \text{\# of types}}\,\cdot
\max_{\substack{\\\text{types}\;\widehat{V}(x\,|\,y):\\ \\A(T \,\circ\, V)\; \leq \; D(T \,\circ \,\widehat{V} \, \| \, T \,\times\, Q)}}
\sum_{m'\,\neq\,m}\Pr\Big\{{\cal T\mathstrut}_{m'}\big(\widehat{V}\,|\, {\bf y}\big)\Big\}}_{\text{UB on the probability that}\;
A(T \,\circ\, V)\; \leq \; D(T \,\circ \,\widehat{V}_{m'} \, \| \, T \,\times\, Q)\;\text{for some} \; m'\,\neq\,m}
\nonumber \\
& \geq \;\;1 \; - \; (n + 1)^{|{\cal X}| \, |{\cal Y}|}\cdot {e\mathstrut}^{-n (A(T\,\circ \,V) \, - \, R_n)}.
\label{eqLBCorDec}
\end{align}
From the upper bound (\ref{eqUpperBCorDec})
we see that if the joint type $T\circ V$ is such that $A(T\circ V) < R$,
then the conditional probability of correct decoding tends to zero super-exponentially as $n$ increases.
Consequently, those types drop out of the asymptotic exponent.
In particular, if $R > \max_{\,T\,\circ\, V}A(T\circ V)$,
then the exponent is $+\infty$
and for any $\epsilon > 0$
the last assertion of the lemma holds.
By the same token the lower bound (\ref{eqLboundEps}) follows.
On the other side, if $A(T\circ V) > R$, then
the lower bound (\ref{eqLBCorDec})
shows that the conditional probability of correct decoding tends to $1$.
This gives (\ref{eqUboundEps}). $\square$

\bigskip

{\em Proof of Theorem~\ref{thmCorrML}:}\newline
The idea is to compare the correct-decoding exponents of two different decoders.
One is an optimal decoder with a helper,
which must give the exponent at least as good (low) as the ML decoder,
and the second one is a suboptimal decoder.

Suppose a genie tells the decoder what is the joint type $T(y)V(x \, | \, y)$
of the received and the transmitted blocks.
Since the ML metric average (\ref{eqML}) depends only on the joint type of ${\bf y}$ and ${\bf x\mathstrut}_{m}$,
the best the decoder can do with this information is to choose at random one of the indices of the codewords
with the same conditional type $V$ with respect to the received word ${\bf y}$.
Without loss of generality we assume that the decoder chooses a codeword
according to the uniform distribution over all codewords of the same conditional type $V$ with respect to ${\bf y}$ in the codebook.
This will result in at least as good the correct-decoding exponent as the optimal (ML) decoder without a genie, or better.

For convenience, let us denote the exponent in the probability (after $-n$) of an independently generated codeword being of the conditional type $V$ with respect to ${\bf y}$ as
${D\mathstrut}_{n} = D(T\circ V \,\|\, T \times Q) + o(1)$,
and the exponent in the number of $\Big\lceil {e\mathstrut}^{nR}\Big\rceil - 1$ incorrect codewords (after $n$) as ${R\mathstrut}_{n} \triangleq \frac{1}{n}\log \left(\Big\lceil {e\mathstrut}^{nR}\Big\rceil - 1\right)$.
Let $N$ be the random number of incorrect codewords of the conditional type $V$ with respect to ${\bf y}$ in the codebook.
Then the conditional probability of correct decoding is given by
\begin{displaymath}
\mathbb{E} \left[\frac{1}{N + 1}\right].
\end{displaymath}
We use Chebyshev's inequality to upper-bound this probability:
\begin{align}
\mathbb{E} \left[\frac{1}{N + 1}\right] \; & \leq \;
\underbrace{\Pr\left\{N \geq \tfrac{1}{2}\mathbb{E}[N]\right\}}_{\leq \; 1}\,\cdot\, \frac{1}{\frac{1}{2}\mathbb{E}[N] + 1}
\, + \,
\Pr\left\{N < \tfrac{1}{2}\mathbb{E}[N]\right\}\cdot 1
\nonumber \\
& = \;
\;\;\;\;\;\;\;\;\;\;\;\;\;\;\;\;\;\;\;\;\;\;\;\;\;\;\;\;\;
\frac{1}{\frac{1}{2}\mathbb{E}[N] + 1}
\, + \,
\Pr\left\{N < \tfrac{1}{2}\mathbb{E}[N]\right\}
\nonumber \\
& \leq \;
\;\;\;\;\;\;\;\;\;\;\;\;\;\;\;\;\;\;\;\;\;\;\;\;\;\;\;\;\;
\frac{1}{\frac{1}{2}\mathbb{E}[N] + 1}
\, + \,
\underbrace{\frac{\mathbb{E}\big[(N - \mathbb{E}[N])^2\big]}{\frac{1}{4}\mathbb{E}^{2}[N]}}_{\text{Chebyshev}}
\nonumber \\
& = \;
\;\;\;\;\;\;\;\;\;\;\;\;\;\;
\frac{1}{\frac{1}{2}{e\mathstrut}^{n(R_n \, - \, D_n)} + 1}
\; + \;
\frac{{e\mathstrut}^{n(R_n \, - \, D_n)}\big(1 - {e\mathstrut}^{-n D_n}\big)}{\frac{1}{4}{e\mathstrut}^{2n(R_n \, - \, D_n)}}
\nonumber \\
& \leq \;
\;\;\;\;\;\;\;\;\;\;\;\;\;\;\;\;\;\;\;
2{e\mathstrut}^{-n(R_n \, - \, D_n)}
\; + \;
\;\;\;
4{e\mathstrut}^{-n(R_n \, - \, D_n)}
\nonumber \\
\mathbb{E} \left[\frac{1}{N + 1}\right] \; & \leq \;
\min \Big\{1, 6{e\mathstrut}^{-n(R_n \, - \, D_n)}\Big\}
\; \leq \;
6{e\mathstrut}^{-n\big|R_n \, - \, D_n\big|^{+}}.
\label{eqUBChebyshev}
\end{align}
We use Jensen's inequality for a lower bound:
\begin{align}
\mathbb{E} \left[\frac{1}{N + 1}\right] \; \overset{\text{Jensen}}{\geq} \; \frac{1}{\mathbb{E}[N] + 1}
\; = \; & \frac{1}{{e\mathstrut}^{n(R_n \, - \, D_n)} + 1}
\nonumber \\
\geq \; & \frac{1}{{e\mathstrut}^{n\big|R_n \, - \, D_n\big|^{+}} + 1}
\; = \;
\frac{{e\mathstrut}^{-n\big|R_n \, - \, D_n\big|^{+}}}{1 + {e\mathstrut}^{-n\big|R_n \, - \, D_n\big|^{+}}}
\; \geq \;
\frac{1}{2}\cdot{e\mathstrut}^{-n\big|R_n \, - \, D_n\big|^{+}}.
\label{eqLBJensen}
\end{align}

Comparing the upper and lower bounds (\ref{eqUBChebyshev}) and (\ref{eqLBJensen}),
we conclude that the exponent in the conditional probability of correct decoding with the knowledge of the joint type $T(y)V(x \, | \, y)$
of the received and the transmitted blocks at the receiver
is given by
\begin{equation} \label{eqIncorrectExp}
\big| R\, - \, D(T\circ V \,\|\, T \times Q)\big|^{+} \, + \, o(1),
\end{equation}
where $o(1)$ is uniform with respect to $T\circ V$.
The exponent in the overall probability of correct decoding of this decoder therefore is given by
\begin{equation} \label{eqObjective}
\min_{\substack{\\\text{types} \;\;  T(y), \, V(x\,|\,y)
}}
\left\{
D(T\circ V \, \| \, Q \circ P) \, + \,
\big|R \, - \, D(T\circ V \,\|\, T \times Q)\big|^{+}
\right\} \; + \; o(1).
\end{equation}
In the limit, as $n\,\rightarrow\,\infty$, the term $o(1)$ disappears and the minimization is performed over all rational distributions $\,T \circ V$.
Since the objective function in (\ref{eqObjective}) is a continuous function of $\,T \circ V$, the infimum over rational distributions equals the minimum over all distributions as intended in the definition (\ref{eqDefImplicitCor}) of the RHS of (\ref{eqCorrML}).

Consider now a suboptimal decoder. The decoder fixes some joint type $\widetilde{T}\circ \widetilde{V}$.
If the type of the received block ${\bf y}$ is not $\widetilde{T}$, the decoder declares an error.
Otherwise, in case the type of the received block is indeed $\widetilde{T}$, the decoder looks for the indices of the codewords with the conditional type $\widetilde{V}$
with respect to ${\bf y}$ and chooses one of them as its estimate $\widehat{m}$ of the transmitted message with uniform probability.
If there are no codewords of the conditional type $\widetilde{V}$ with respect to ${\bf y}$ in the codebook,
then the decoder declares an error.
It follows from the same analysis as before, that the exponent in the probability of correct decoding in this case is given by
\begin{displaymath}
D(\widetilde{T}\circ\widetilde{V} \, \| \, Q \circ P) \, + \,
\big|R \, - \, D(\widetilde{T}\circ \widetilde{V} \,\|\, \widetilde{T} \times Q)\big|^{+} \; + \; o(1).
\end{displaymath}
Consequently, the best possible choice of the fixed type $\widetilde{T}\circ \widetilde{V}$
will result in the exponent of correct decoding given by (\ref{eqObjective}).
We conclude that the best decoder from the above family of suboptimal decoders will produce the same correct-decoding exponent
as the optimal decoder with the genie.
Therefore both result in the correct-decoding exponent of the optimal decoder (ML with tie breaking). $\square$

\bibliographystyle{IEEEtran}

\end{document}